\newtheorem{theorem}{Theorem}[section]
\newtheorem{lemma}[theorem]{\textbf{Lemma}}
\newtheorem{prop}[theorem]{\textbf{Proposition}}
\theoremstyle{remark}
\newtheorem{remark}[theorem]{\textbf{Remark}}
\date{}
\author{Sirine Boucenna
  \and
  Vasilis Dakos
  \and
  Ga\"el Raoul
}
\newcommand{\Addresses}{{
  \bigskip
  \footnotesize

  Sirine Boucenna, \textsc{Institut des Sciences de l'Evolution de Montpellier (ISEM), Université de Montpellier, CNRS, IRD, EPHE, Montpellier, France.}\par\nopagebreak
  \textit{E-mail address}, S. Boucenna: \texttt{sirine.boucenna@umontpellier.fr}

  \medskip

  Vasilis Dakos, \textsc{Institut des Sciences de l'Evolution de Montpellier (ISEM), Université de Montpellier, CNRS, IRD, EPHE, Montpellier, France.}\par\nopagebreak
  \textit{E-mail address}, V. Dakos: \texttt{vasilis.dakos@umontpellier.fr}

  \medskip

  Ga\"el Raoul, \textsc{CMAP, CNRS, Institut polytechnique de Paris, Inria, route de Saclay, 91128 Palaiseau, France}\par\nopagebreak
  \textit{E-mail address}, G. Raoul: \texttt{gael.raoul@polytechnique.edu}

}}
\title{A model for a population of trees structured by phenological traits}
\begin{document}

\maketitle

\begin{abstract}
In the context of global warming, tree populations rely on two primary mechanisms of adaptation: phenotypic plasticity, which enables individuals to adjust their behavior in response to environmental stress, and genetic evolution, driven by natural selection and genetic diversity within the population. Understanding the interplay between these mechanisms is crucial for assessing the impacts of climate change on forest ecosystems and for informing sustainable management strategies. In this manuscript, we focus on a specific phenological adaptation: the ability of trees to enter summer dormancy once a critical temperature threshold is exceeded. Individuals are characterized by this threshold temperature and by their seed production capacity. We first establish a detailed mathematical model describing the population dynamics under these traits, and progressively reduce it to a system of two coupled ordinary differential equations. This simpler macroscopic model is then analyzed numerically, to investigate how the population reacts to a shift in its environment: an temperature increase, a drop in precipitation levels, or a combination of the two. Our results highlight contrasting effects of water stress and temperature stress on population dynamics, as well as the ambivalent effect of the plasticity.
\end{abstract}

\noindent\textbf{Keywords:} Structured population model, phenotypic trait, integro-differential model, macroscopic limit, phase plan.

\medskip

\noindent\textbf{MSC codes:} 35B40, 92D15, 92D40, 35R09.

\bigskip

\section{Introduction}

As global temperatures rise and precipitation patterns change, the phenological responses of plants can have significant ecological consequences. This is especially true for trees, where earlier leaf budburst or delayed dormancy can alter the competitive dynamics between species, affect the availability of resources, and disrupt the synchrony with pollinators and herbivores. This, in turn, impacts forest composition, carbon sequestration, and overall ecosystem stability (\cite{cleland2007shifting}, \cite{parmesan2003globally}). These phenological responses are not only influenced by inherited genetic factors, which can be represented by breeding values, but also exhibit significant phenotypic plasticity. Evolution of individuals breeding values as well as phenotypic plasticity can allow trees  to adjust their developmental processes (eg timing of their phenology) in response to varying environmental conditions. Thus, understanding the role of evolution and plasticity of phenological-related responses of trees under changing climatic conditions is essential to predict their adaptive capacities for informing conservation strategies to maintain biodiversity and ecosystem services (\cite{menzel2006european,matesanz2010global}). \\

In this manuscript, we are interested in phenotypic adaptation to summer stress related to warm temperatures and lack of water availability. Warm summer temperature associated with water stress may lead to summer dormancy as an adaptive strategy (\cite{gillespie2017winter}). Typically, the term dormancy refers to winter dormancy, which is the period of inactivity observed in plants and some animals during winter months due to colder temperatures and reduced daylight (\cite{maurya2017photoperiod}). More broadly, seasonal dormancy describes an adaptive strategy developed by certain species to endure unfavorable conditions regardless of the season (\cite{vegis1964dormancy}), and summer dormancy refers to a strategy developed by plants to cope with hot, dry summers, by slowing down their activity. This plastic adaptation can be triggered by heat but also drought. Mostly described on perennial plants so far, summer dormancy is associated with greater survival rates after severe and repeated summer droughts in many perennial grasses (\cite{volaire2006summer}). For instance, it was demonstrated that \emph{Poaceae} enter summer dormancy in the arid summer climate of southern California, even when they were supplied with water throughout the dry season (\cite{laude1953nature}). Plastic adaptations to heat and/or drought have also been described in tree populations (\cite{valladares2000plastic,ravn2022phenotypic,aspelmeier2006genotypic,mediavilla2019foliar}). Since extreme conditions like heat and drought are becoming more frequent with climate change, it is important to study plastic adaptation to such climatic conditions(\cite{matesanz2010global}). \\

The starting point of our work is the yearly dynamics of a population of trees structured by two phenotypic traits: seed production (number of seeds) and a threshold temperature inducing dormancy (summer dormancy). We assume a full inheritability of these traits, that are therefore breeding values. With environmental conditions as input, we model the dynamics of a tree population over a long period of time (multiannual). We assume that individuals reproduce sexually, which is represented by a non-linear integral operator. We couple this with an ecological dynamics, where water stress can lead to the death of adult trees. Tree death typically results in freeing up space that allows a rejuvenation process. We therefore consider that deaths are compensated by the maturation of young individuals born from seeds produced the year before. The individual phenotypic traits, as well as plastic effects will affect the life cycle of the trees. Our model should provide an interesting tool to study the resulting evolutionary dynamics on the population and its adaptive capacities. Our focus on relevant environmental input and on phenological aspects should be an asset for future interaction with other approaches existing in forest management and conservation ecology.  \\

An important component of our model is a term representing the effect of sexual reproductions. We use the Fisher’s Infinitesimal Model (\cite{fisher1919xv,barton2017infinitesimal}), which provides a framework for understanding the evolution of continuous quantitative traits in sexually reproducing populations. Structured populations for asexual populations have attracted the attention of the mathematical community (\cite{diekmann2005dynamics,champagnat2007invasion, desvillettes2008selection,lorz2011dirac}), and, more recently, mathematical models for sexual populations have been studied. Heuristic macroscopic limits of sexual structured population models have been proposed (\cite{mirrahimi2013dynamics,dekens2022evolutionary,dekens2022best}), and several rigorous framework for such limits have been constructed. In (\cite{calvez2019asymptotic, patout2023cauchy}), a regularity approach was proposed, based on the contraction properties of the infinitesimal operator on regularity functional spaces, while in (\cite{raoul2017macroscopic,calvez2024ergodicity,frouvelle2023fisher,calvez2024ergodicity}) the contraction was expressed in terms of Wasserstein distances. Finally, in (\cite{guerand2023moment}), an alternative approach based on and moments has been introduced.\\

Our manuscript is structured as follows. We first introduce a detailed annual structured model in Section~\ref{subsec:annualmodel}, where we describe precisely the life cycle of individual trees as a function of their phenotypic traits and the environmental conditions.  In Section~\ref{subsec:continuousmodel}, we take advantage of the low mortality rate of adult trees to derive a simplified continuous structured model. We build an existence and uniqueness setting for solutions of both models and describe the asymptotic limit between these models. The model is simplified further in Section~\ref{sec:macromain} by a second asymptotic limit, where we assume that the phenotypic variance of the tree population is small. This limit is related to the macroscopic limit mentioned in the last paragraph, and we do not provide a rigorous justification of this limit. This argument however allows us to obtain a simple model describing the dynamics of the mean phenotypic traits of the population through a system of two coupled ordinary differential equations. Finally, in Section~\ref{sec:numerics} we use numerical simulations to study the final ordinary differential equation model, and we use this model to investigate the effect of a shift in the environmental condition of the tree population : a temperature increase and/or a drop in precipitation levels. This manuscript is concluded by a discussion.

\section{Models, asymptotic limits and main results}\label{sec:main}
\subsection{The annual structured model}\label{subsec:annualmodel}

In this section we introduce a detailed model describing the dynamics of a population of trees. The population of trees is structured by two fully inheritable phenotypic traits: a seed production trait $x\in\mathbb R$ which quantifies the rate at which the tree produces seeds during the reproduction season, and a dormancy trait $y\in\mathbb R$, describing the temperature above which the individual tree becomes dormant. To model the effect of the two traits accurately, we consider a discrete year structure with $k\in\mathbb N\cup \{0\}$ numbering the years. We also consider a season time $s\in [0,1]$: $s=0$ corresponds to the beginning of the year $k$, and $s=1$ to the end of year $k$ and the transition to year $k+1$. We assume that the end of the year corresponds to the end of the summer season, and the end of the summer dormancy of trees, if a summer dormancy occurs.

The meteorological environment is described through  $T_k(s)$, the temperature in year $k$ and seasonal time $s$, and the yearly precipitation quantity $P_k> 0$. We assume moreover that \[T_k(s)=T^M_k+V_k T^V(s),\]
where $(T^M_k)_k$ is the  mean annual temperature and $(T^V_k)_k$ quantifies the variation of the temperature around its yearly mean, and $V(\cdot)$ is the amplitude  of yearly temperature variations (we  assume $ T^V(0)=T^V(1)=0$). We assume that $(P_k, T^M_k,V_k)_k$ is a bounded sequence. These environmental data will have an impact on the availability of water, denoted by $\phi^{water}_k(s,x)\in\{0,1\}$ (the water is lacking for an individual of seed production trait $x\in\mathbb R$ at seasonal time $s\in[0,1]$ and year $k\in\mathbb N\cup\{0\}$ if $\phi^{water}_k(s,x)=0$, it is sufficient if $\phi^{water}_k(s,x)=1$), and the dormancy of individuals, denoted by $\phi^{plast}_k(s,y)\in[0,1]$ (an individual of threshold temperature $y\in\mathbb R$ is active at seasonal time $s\in[0,1]$ and year $k\in\mathbb N\cup\{0\}$ if $\phi^{plast}_k(s,y)\sim 1$, and is dormant if $\phi^{plast}_k(s,y)\sim 0$). We will define these functions precisely later on.

\medskip

When water is available, we assume that mature individuals die at a constant rate $\varepsilon d>0$; with $d\geq 0$ a base death rate that is independent from environmental conditions; and that this death rate is increased when individuals encounter water stress. More precisely, an additional death rate proportional to the temperature $T_k(s)$ appears when the water is lacking and when the individuals are not in dormancy (that is when $\phi^{water}_k(s,x)=0$ and $\phi^{plast}_k(s,y)= 1$). The individuals are in dormancy if $\phi^{plast}_k(s,y)=1$, and this represents summer dormancy: we do not consider winter dormancy in this manuscript. During the year $k\in\mathbb N$, the mature population then satisfies, for $s\in[0,1]$
\begin{align}
    \partial_s m_k(s,x,y) =-\varepsilon d - \varepsilon \phi^{plast}_k(s,y)\gamma (T_k(s))^+\left(1-\phi^{water}_k(s,x)\right) m_k(s,x,y).\label{model:m00}
\end{align}

Note that the notation $x^+$, for $x\in\mathbb R$, designates the positive part of $x$, here and throughout this manuscript.   Experimental and theoretical work have shown that if the ground water is too limited (that is if $\phi^{water}_k(s,x)=0$ here), plants evapotranspiration diminishes linearly (\cite{sinclair2005daily}), and the plant water stress grows linearly (\cite{rizza2004use}). The effect of water stress on tree mortality is not fully described yet (see \cite{mcdowell2008mechanisms} for field data analysis) and we therefore decided to use a simple linear relation, leading to the model above. In equation \eqref{model:m00}, we have included $\varepsilon>0$, which is a constant parameter, that we will eventually assume to be small, while $\gamma>0$ quantifies the effect of temperature on mortality. $\varepsilon>0$ captures the fact that only a small fraction of mature individuals die during a given year, which is a reasonable assumption for tree species with a long life expectancy, provided the climatic conditions are not too extreme, and we will consider the asymptotics $\varepsilon\to 0$ in Section~\ref{subsec:continuousmodel}. This equation on mature individuals is completed by the following equation that provides the initial value of the first equation at the beginning of each year $k\in\mathbb N$:
\[m_k(0,x,y)=m_{k-1}(1,x,y)+\frac{1-\iint_{\mathbb R^2} m_{k-1}(\hat x,\hat y)\,d\hat x\,d\hat y}{\iint_{\mathbb R^2} s_{k-1}(\hat x,\hat y)\,d\hat x\,d\hat y} s_{k-1}(x,y).\]
The first term on the right hand side of this equation reflects the fact that the mature individuals alive at the end of the previous year are still present. The second term represents the maturation of the seeds $s_{k-1}(x,y)$ (seeds with traits $x$ and $y$) produced during the previous year: we do not consider an intermediate juvenile state between the seed and the mature state. We assume that any space liberated by the death of mature individuals is occupied by new mature individuals, leading to a fixed population size at the beginning of each year, $\iint_{\mathbb R^2} m_k(x,y)\,dx\,dy=1$; this explains the factor in front of the last term of the equation. Our modeling choice implies $\iint_{\mathbb R^2}m_k(0,x,y)\,dx\,dy=1$, so that the total size of the population remains constant; it is however possible to monitor the mortality rate across years as an output of the model.

\medskip

During the year $k\in\mathbb N\cup\{0\}$ and at season time $s\in[0,1]$, an adult with traits $(x,y)$ produces eggs (ie seeds that need fertilization by pollen) at a rate $x^+$ (the positive part of the seed production  trait $x\in\mathbb R$), provided water is not lacking (that is if $\phi^{water}_k(s,x)=1$), and provided the individual is not dormant (that is if $\phi^{plast}_k(s,y)\sim 1$). If water is lacking or the plant is dormant, no egg is produced. The number of eggs produced during year $k$ at time $s\in[0,1]$ is then given by the following distribution
\[(x_1,y_1)\mapsto x_1^+m_k(s,x_1, y_1) \phi^{plast}_k(s,y_1)\phi^{water}_k(s,x_1).\]
Pollen is produced by all mature individuals that do not lack water and are not dormant, so that the quantity of pollen present at season time $s$ is given by the following distribution 
\[(x_2,y_2)\mapsto m_k(s, x_2,y_2) \phi^{plast}_k(s,y_2)\phi^{water}_k(s,x_2).\]
We assume that the pollen fertilizing an egg is drawn uniformly among the pollen present at time $s$, and we introduce the parameter $\eta>0$ to model the fact that if the quantity of pollen production is very low (that is if $\iint m_k(s,\hat x,\hat y) \phi^{plast}_k(s,\hat y)\phi^{water}_k(s,\hat x) \,d\hat x\,d\hat y\ll \eta$), fertilization is unlikely to occur. More precisely, the probability that an egg is fertilized is 
\[(x_2,y_2)\mapsto \frac{\iint_{\mathbb R^2} m_k(s,\hat x,\hat y) \phi^{plast}_k(s,\hat y)\phi^{water}_k(s,\hat x) \,d\hat x\,d\hat y}{\eta+\iint_{\mathbb R^2} m_k(s,\hat x,\hat y) \phi^{plast}_k(s,\hat y)\phi^{water}_k(s,\hat x) \,d\hat x\,d\hat y},\]
and if the fertilisation occurs, the law of the traits of the fertilizing pollen is given by
\[(x_2,y_2)\mapsto \frac{m_k(s, x_2,y_2) \phi^{plast}_k(s,y_2)\phi^{water}_k(s,x_2)}{\iint_{\mathbb R^2} m_k(s,\hat x,\hat y) \phi^{plast}_k(s,\hat y)\phi^{water}_k(s,\hat x) \,d\hat x\,d\hat y}.\]
If we consider a seed produced by an egg of traits $(x_1,y_1)$ and a pollen of traits $(x_2,y_2)$, the traits of the seed follows a normal law of covariance 
\[\left(\begin{array}{cc}
    \sigma_x^2 &0  \\
     0&\sigma_y^2 
\end{array}\right),\]
centered on the average traits of the parents. We assume that $\sigma_x,\sigma_y>0$. This is the so-called infinitesimal model (\cite{fisher1919xv,barton2017infinitesimal}), which represents the effect of sexual reproduction on phenotypic traits. The law of the seed's traits is then
\begin{equation}\label{eq:infinitesimal}
    (x,y)\mapsto\Gamma_{\sigma_x^2}\left(\frac{x_{1}+x_{2}}{2}-x\right)\Gamma_{\sigma_y^2}\left(\frac{y_{1}+y_{2}}{2}-y\right).
\end{equation}
We also assume a constant immigration of seeds at rate $\nu> 0$, and the traits of these immigrant seeds are normally distributed around $(0,0)$, with the covariance matrix $\left(\begin{array}{cc}
    2\sigma_x^2 &0  \\
     0&2\sigma_y^2 
\end{array}\right)$. Bringing all these assumptions together, the seeds produced during the year $k$, structured by the traits $(x,y)$, are given by
\begin{align*}
    &s_k(x,y)= \nu \Gamma_{2\sigma_x^2}(x)\Gamma_{2\sigma_y^2}(y)\\
    &\quad +\int_0^1\iint_{\mathbb R^2}\iint_{\mathbb R^2}  \Gamma_{\sigma_x^2}\left(x-\frac{x_{1}+x_{2}}{2}\right)\Gamma_{\sigma_y^2}\left(y-\frac{y_{1}+y_{2}}{2}\right)  x_1^+m_k(s,x_1,y_1)\phi^{plast}_k(s,y_1)\phi^{water}_k(s,x_1) \\
    &\hspace{2.5cm} \frac{m_k(s, x_2,y_2) \phi^{plast}_k(s,y_2)\phi^{water}_k(s,x_2)}{\eta+\iint_{\mathbb R^2} m_k(s,\hat x,\hat y) \phi^{plast}_k(s,\hat y)\phi^{water}_k(s,\hat x) \,d\hat x\,d\hat y}  \,dx_1\,dx_2\,dy_1\,dy_2\,ds,
\end{align*}
and as mentioned earlier, a fraction of these seeds will turn into mature individuals at the beginning of year $k+1$.

\medskip

\textcolor{blue}{The functions $\phi_k^{water}$ and $\phi_k^{plast}$ constructed above define the effect of water stress and dormancy on individual trees. We have constructed these in an explicit manner, from simple biological processes. We have however used simplifying assumptions. We have assumed that trees consume a given \emph{water budget} $P_k$ from precipitations, but in practice,  soil moisture and plant water potential respond to multi-year storage dynamics. We have also assumed that trees enter dormancy as soon as the temperatures exceed the threshold temperature $y$, even though a progressive effect of high temperatures would be more realistic. Finally, multi-year plastic effect should also play an important rule, for instance the depth of tree roots can be impacted by environmental conditions and this multi-year plastic trait impacts the ability of the tree to access water. This choices were made to obtain tractable effects of the model parameters on the dynamics of the population, and we believe it will be possible to use the analysis framework we introduce to investigate to investigate the effect of more detailed biological processes;}\textcolor{red}{ we refer to \cite{satake2022cross} for a review of detailed phenology models.}

\medskip

We will now define the function $\phi^{water}_k(s,x)\geq 0$, that represents the availability of water for an adult individual of seed production trait $x\in\mathbb R$ in year $k\in\mathbb N\cup\{0\}$, at the season time $s\in[0,1]$. We assume that each individual receives an annual quantity of water $P_k$. When individuals are not dormant, mature individuals consume water at a rate $1+\alpha x^+$, where $x^+$ refers to the rate of seed production of the individual. Moreover, we assume that heat induces an additional consumption of water (for each mature individual that is not dormant) proportional to the temperature. Indeed, a large part of a plant's water consumption is related to its evapotranspiration, which can be modeled as a linear function of the temperature (see \cite{penman1948natural,thornthwaite1948approach}), provided enough ground water is available to the plant. These assumptions lead to a total water consumption in the year $k\in\mathbb N\cup\{0\}$ seasonal time $s\geq 0$ given by $(1+\alpha x^+)s+\beta\int_{0}^sT_k(\tau)\,d\tau$, provided the individual is not dormant. To determine if the water is available at season time $s$ (that is to see if $\phi^{water}_k(s,x)=1$) or if it is lacking (that is $\phi^{water}_k(s,x)= 0$), we compare the water consumption to  the precipitations level during year $k$, that is $P_k$. We obtain:
\begin{equation}\label{def:phiwater}
\phi^{water}_k(s,x)=\left\{\begin{array}{l}0\textrm{ if }(1+\alpha x^+)s+\beta\int_{0}^s(T_k(\tau))^+\,d\tau> P_k,\\
1\textrm{ otherwise}.
\end{array}\right.
\end{equation} 
We should also define the function $\phi^{plast}_k(s,y)\in[0,1]$, which indicates if the individual of threshold temperature trait $y$ is active (that is $\phi^{plast}_k(s,y)\sim 1$) or dormant (that is $\phi^{plast}_k(s,y)\sim 0$). In the definition below, we define $s\mapsto \phi^{plast}_k(s,x)$ as a Lipschitz function valued in $[0,1]$. We assume that the dormancy is heat triggered: when the temperature $T_k^0(s)$ is higher than the threshold temperature trait $y$, individuals become dormant with a given rate $\xi$. Then,
\begin{equation}\label{def:phiplast}
\phi^{plast}_k(s,y)= e^{-\xi \int_{0}^s 1_{T_k(\tau)>y}\,d\tau}.
\end{equation}
Note that the population $m_k(t,x,y)$ is structured by the phenotypic traits $(x,y)\in\mathbb R$ that are actually breeding values: the value of these traits is determined at birth, and are fully inherited. 


\bigskip

The annual structured model, for $\varepsilon>0$, is then
\begin{equation}\label{model:m}
    \left\{\begin{array}{l}\partial_s m_k(s,x,y) = -\varepsilon d- \varepsilon \phi^{plast}_k(s,y)\gamma (T_k(s))^+\left(1-\phi^{water}_k(s,x)\right) m_k(s,x,y) ,\textrm{ for }(k,s,x,y)\in(\mathbb N\cup\{0\})\times\mathbb R_+\times\mathbb R^2,\\ \\
m_k(0,x,y)=m_{k-1}(1,x,y)+\frac{1-\iint_{\mathbb R^2} m_{k-1}(1,\hat x,\hat y)\,d\hat x\,d\hat y}{\iint_{\mathbb R^2} s_{k-1}(\hat x,\hat y)\,d\hat x\,d\hat y} s_{k-1}(x,y),\textrm{ for }(k,x,y)\in\mathbb N\times\mathbb R^2,\\ \\
m_0(0,x,y)=m^0(x,y),\textrm{ for }(x,y)\in\mathbb R^2,\\ \\
s_k(x,y)= \nu \Gamma_{2\sigma_x^2}(x)\Gamma_{2\sigma_y^2}(y)\\ 
\phantom{erazer}+\int_{0}^1\iint_{\mathbb R^2}\iint_{\mathbb R^2}  \Gamma_{\sigma_x^2}\left(x-\frac{x_{1}+x_{2}}{2}\right)\Gamma_{\sigma_y^2}\left(y-\frac{y_{1}+y_{2}}{2}\right)x_1^+ m_k(s,x_1, y_1) \phi^{plast}_k(s,y_1)\phi^{water}_k(s,x_1)  \\ 
\phantom{erazew<wr} \frac{m_k(s, x_2,y_2) \phi^{plast}_k(s,y_2)\phi^{water}_k(s,x_2)}{\eta+\iint_{\mathbb R^2} m_k(s,\hat x,\hat y) \phi^{plast}_k(s,\hat y)\phi^{water}_k(s,\hat x) \,d\hat x\,d\hat y}  \,dx_1\,dx_2\,dy_1\,dy_2\,ds,\textrm{ for }(k,x,y)\in(\mathbb N\cup\{0\})\times\mathbb R^2,
\end{array}\right.
\end{equation}
where $\phi^{plast}_k$, $\phi^{water}_k$ are defined by \eqref{def:phiplast}, \eqref{def:phiwater} respectively. This system defines the population $m_k(x,y)$, in the sense of the existence and uniqueness setting given by the following theorem. In that theorem, we denote by $L^1((1+e^x)\,dx\,dy,\mathbb R_+)$ the set of non-negative functions on $\mathbb R^2$ that are integrable against a weight $(x,y)\mapsto (1+e^x)\,dx\,dy$:

\begin{theorem}\label{thm:m-existence}
    Assume $m^0\in L^1((1+e^x)\,dx\,dy,\mathbb R_+)$, with  $\iint_{\mathbb R^2}m^0(x,y)\,dx\,dy=1$. There exists a unique global non-negative solution $(m_k,p_k)\in \left(L^\infty([0,1],L^1((1+e^x)\,dx\,dy))\right)^2$ to the annual structured model \eqref{model:m}, for $k\in\mathbb N\cup\{0\}$. Moreover, for $k\in\mathbb N\cup\{0\}$,
    \begin{equation}\label{eq:mp1}
        \iint_{\mathbb R^2} m_k(0,x,y)\,dx\,dy=1.
    \end{equation}
\end{theorem}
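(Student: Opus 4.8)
The plan is to construct the solution explicitly and recursively over the years $k$, exploiting the fact that \eqref{model:m} has no circular dependency within a single year: given the datum $m_k(0,\cdot)$, the profile $(m_k(s,\cdot))_{s\in[0,1]}$ is obtained by solving a scalar linear ODE in $s$, the seed distribution $s_k$ is then an explicit integral functional of that profile, and only $s_k$ (together with $m_k(1,\cdot)$) enters the next year through the maturation rule. Consequently the whole construction factors into two per-year operations, the \emph{ODE solve} and the \emph{seed evaluation}, followed by the between-year \emph{maturation map}; uniqueness will follow automatically from uniqueness at each deterministic step, so the only real work is to propagate two invariants: membership in $L^1((1+e^x)\,dx\,dy)$ (whose norm I write $\|\cdot\|_w$) and the mass identity \eqref{eq:mp1}.

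For the ODE solve, I would note that for fixed $(x,y)$ the first line of \eqref{model:m} is linear with nonnegative, bounded coefficient (the data $(P_k,T^M_k,V_k)_k$ are bounded and $\phi^{plast}_k,\phi^{water}_k\in[0,1]$), hence
\begin{equation*}
m_k(s,x,y)=m_k(0,x,y)\,\exp\!\left(-\varepsilon\int_0^s\big(d+\phi^{plast}_k(\tau,y)\gamma(T_k(\tau))^+(1-\phi^{water}_k(\tau,x))\big)\,d\tau\right).
\end{equation*}
The exponential factor lies in $[0,1]$, which gives non-negativity, the pointwise bound $0\le m_k(s,x,y)\le m_k(0,x,y)$, and therefore $\|m_k(s,\cdot)\|_w\le\|m_k(0,\cdot)\|_w$ and $\iint m_k(s,\cdot)\le\iint m_k(0,\cdot)$ for all $s\in[0,1]$.

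For the seed evaluation, I would bound $\|s_k\|_w$. The immigration term is Gaussian and contributes a finite amount $\nu(1+e^{\sigma_x^2})$. For the reproduction term, integrating the kernel $\Gamma_{\sigma_x^2}(x-\tfrac{x_1+x_2}{2})$ against the weight $e^x$ produces a factor proportional to $e^{(x_1+x_2)/2}$; the fertilization denominator is bounded below by $\eta>0$, the $x_2$–integral of the numerator is controlled by $\int e^{x_2/2}m_k(s,x_2,y_2)\,dx_2dy_2\le\|m_k(s,\cdot)\|_w$ via $e^{x_2/2}\le 1+e^{x_2}$, and the egg factor by $\int x_1^+e^{x_1/2}m_k(s,x_1,y_1)\,dx_1dy_1\le C\|m_k(s,\cdot)\|_w$ via $x^+e^{x/2}\le C(1+e^x)$. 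This is the \emph{key estimate}, and the step I expect to be the main obstacle: it is precisely the interaction of the linear rate $x_1^+$ with the infinitesimal convolution (which centers the offspring trait at the parental average, hence demands an $e^{x/2}$ moment) that forces the exponential weight, while $\eta>0$ prevents the normalization from degenerating. Collecting terms yields $\|s_k\|_w\le\nu(1+e^{\sigma_x^2})+C\eta^{-1}\sup_{s\in[0,1]}\|m_k(s,\cdot)\|_w^2<\infty$ and $s_k\ge0$.

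Finally I would close the induction on $k$. At $k=0$ the datum $m^0$ is in the weighted space with unit mass by hypothesis, so the ODE solve and seed evaluation produce $m_0$ and $s_0$. For the inductive step I first check the two well-posedness conditions of the maturation map: $\iint s_{k-1}\ge\nu>0$ (from the immigration Gaussian), so the coefficient is defined, and $1-\iint m_{k-1}(1,\cdot)\ge 1-\iint m_{k-1}(0,\cdot)=0$ by the ODE solve and the inductive hypothesis \eqref{eq:mp1}; hence $m_k(0,\cdot)$ is a nonnegative combination of $m_{k-1}(1,\cdot)$ and $s_{k-1}$, both in the weighted space, so $m_k(0,\cdot)\ge0$ lies in $L^1((1+e^x)\,dx\,dy)$. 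Integrating the maturation rule gives
\begin{equation*}
\iint m_k(0,x,y)\,dx\,dy=\iint m_{k-1}(1)+\frac{1-\iint m_{k-1}(1)}{\iint s_{k-1}}\,\iint s_{k-1}=1,
\end{equation*}
which is exactly \eqref{eq:mp1}. Re-applying the ODE solve and seed evaluation then yields $m_k$ and $s_k$, so the pair $(m_k,s_k)$ lies in the stated space; since a finite weighted norm is mapped to a finite weighted norm at each step (through the quadratic recursion above, even though the bound may grow with $k$), the solution exists for every $k\in\mathbb N\cup\{0\}$. The delicate points to verify carefully are thus the seed estimate of the third paragraph and the two sign/positivity facts $\iint s_{k-1}\ge\nu$ and $1-\iint m_{k-1}(1,\cdot)\ge0$; the remainder is routine.
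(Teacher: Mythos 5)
Your proposal is correct and follows essentially the same route as the paper: an induction over years using the explicit exponential solution of the within-year ODE, the weighted $L^1((1+e^x)\,dx\,dy)$ bound on the reproduction term (your ``key estimate'' is precisely the paper's Proposition~\ref{prop:operatorT}, proved there by the same convolution/change-of-variables computation yielding the $e^{(x_1+x_2)/2}$ factor and the bounds $x^+e^{x/2}\leq C(1+e^x)$, $e^{x/2}\leq 1+e^x$), and the integration of the maturation rule to close \eqref{eq:mp1}. The only cosmetic difference is that the paper isolates the seed estimate as a standalone proposition and records the resulting doubly exponential bound $C^{2^k}$ on the weighted norm, whereas you leave the growth implicit.
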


\subsection{Asymptotic limit to an intermediate continuous structured model}\label{subsec:continuousmodel}

The annual structured model \eqref{model:m} involves a parameter $\varepsilon>0$, which quantifies the death rate of adult individuals: the life expectancy of adults trees in our model is of the order of $1/\varepsilon$. For numerous species of trees, year-to-year mortality is low for adults (typically of the order of $1\%$, see (\cite{das2016trees})), and we take advantage of this to propose an asymptotic limit of the model. Heuristically, when $\varepsilon>0$ is small, \eqref{model:m} implies , for $(k,s,x,y)\in\mathbb N\times[0,1]\times \mathbb R\times\mathbb R$,
\[m_k(1,x,y)=\left(1-\varepsilon a_k(1,x,y)\right)m_k(0,x,y)+\mathcal O(\varepsilon^2),\]
where 
\begin{align}\label{eq:gk}
    a_k(s,x,y) &= ds+\int_0^{s} \phi^{plast}_k(\tau,y) \gamma (T_k(\tau))^+\left(1-\phi^{water}_k(\tau,x)\right)\,d\tau.
\end{align}
Then,
\begin{align}
 &   m_{k+1}(0,x,y) \sim e^{-\varepsilon a_k(1,x,y)}m_k(0,x,y) \nonumber\\
& \quad  +\frac{1- \iint_{\mathbb R^2} e^{-\varepsilon a_k(1,\hat x, \hat y)} m_{k}(0,\hat x,\hat y)\,d\hat x\,d\hat y}{\iint_{\mathbb R^2} s_{k}(\hat x,\hat y)\,d\hat x\,d\hat y} \nonumber\\
&\quad \bigg(  \int_{0}^{1}\iint_{\mathbb R^2}\iint_{\mathbb R^2}  \Gamma_{\sigma_x^2}\left(x-\frac{x_{1}+x_{2}}{2}\right)\Gamma_{\sigma_y^2}\left(y-\frac{y_{1}+y_{2}}{2}\right)x_1^+ e^{-\varepsilon a_k(s,x_1,y_1)}m_k(0,x_1,y_1)\phi^{water}_k(s,x_1)\phi^{plast}_k(s,y_1) \nonumber \\
& \qquad   \frac{ e^{-\varepsilon a_k(s,x_2,y_2)}m_k(0,x_2,y_2) \phi^{water}_k(s,x_2)\phi^{plast}_k(s,y_2)}{\eta+\iint_{\mathbb R^2}  e^{-\varepsilon a_k(s,\hat x,\hat y)}m_k(0,\hat x,\hat y) \phi^{water}_k(s,\hat x)\phi^{plast}_k(s,\hat y) \,d\hat x\,d\hat y}  \,dx_1\,dx_2\,dy_1\,dy_2\,ds + \nu \Gamma_{2\sigma_x^2}(x)\Gamma_{2\sigma_y^2}(y) \bigg).\label{model:m-epsilon}
\end{align}
To pass to the limit in the integral term above, we assume that the environmental parameters $T^{M,\varepsilon},V^\varepsilon,P^\varepsilon$ change gradually from one year to another. More precisely, for $t\geq 0$ and $\varepsilon>0$ small,
\begin{equation}\label{ass:weather}
    (T^{M,\varepsilon},V^\varepsilon,P^\varepsilon)_{\lfloor t/\varepsilon\rfloor}=(\bar T^M,\bar V,\bar P)(t)+\mathcal O(\varepsilon),
\end{equation}
which corresponds to a situation where the climate changes on a time scale of $1/\varepsilon$, which is the time scale of the life expectancy of the trees. We assume that the functions $\bar T^M$, $\bar V$, $\bar P$ are Lipschitz continuous. Then, for  $s\in[0,1]$,
\begin{equation}\label{eq:limitT}
    T^{\varepsilon}_{\lfloor t/\varepsilon\rfloor}(s)=\bar T(t,s)+\mathcal O(\varepsilon),
\end{equation}
where $\bar T(t,s):=\bar T^M(t)+\bar V(t) T^V(s)$. We may then define 
\begin{equation}\label{def:barphip}
\bar \phi^{plast}(t,s,y)= e^{-\xi\int_0^s 1_{\bar T(t,\sigma)>y}\,d\sigma}
\end{equation}
\begin{equation}\label{def:barphiw}
\bar \phi^{water}(t,s,x)=\left\{\begin{array}{l}0\textrm{ if }(1+\alpha x^+)s+\beta\int_{0}^s (\bar T(t,\tau))^+\,d\tau> \bar P(t),\\
1\textrm{ otherwise},
\end{array}\right.
\end{equation}
and the population $m_k$ then satisfies
\begin{align*}
 &   \frac{m_{\lfloor t/\varepsilon\rfloor+1}(0,x,y)-m_{\lfloor t/\varepsilon\rfloor}(0,x,y)}{\varepsilon} \sim - a_{\lfloor t/\varepsilon\rfloor}(1,x,y)m_{\lfloor t/\varepsilon\rfloor}(0,x,y)  + \mathcal O(\varepsilon^2) \nonumber\\
& \quad  +  \left(\iint_{\mathbb R^2} a_{\lfloor t/\varepsilon \rfloor}(1,x,y) m_{\lfloor t/\varepsilon \rfloor}(0,x,y) + \mathcal O(\varepsilon^2) dx\,dy\right)\\
&\qquad\frac{1}{\bar R_{\lfloor t/\varepsilon\rfloor}[m] +\nu} \bigg[\iint_{\mathbb R^2}\iint_{\mathbb R^2}  \Gamma_{\sigma_x^2}\left(x-\frac{x_{1}+x_{2}}{2}\right)\Gamma_{\sigma_y^2}\left(y-\frac{y_{1}+y_{2}}{2}\right)x_1^+m_{\lfloor t/\varepsilon\rfloor}(0,x_1,y_1) m_{\lfloor t/\varepsilon\rfloor}(0,x_2,y_2) \nonumber\\
&\phantom{\qquad \frac{1}{\bar R_{\lfloor t/\varepsilon\rfloor}[m] +\nu} \Big[\iint_{\mathbb R^2}\iint_{\mathbb R^2} }\rho_{\lfloor t/\varepsilon\rfloor}[m](x_1,y_1,x_2,y_2)\, dx_1\,dx_2\,dy_1\,dy_2\,ds  +\nu \Gamma_{2\sigma_x^2}(x) \Gamma_{2\sigma_y^2}(y) \bigg],  \nonumber
\end{align*}
where
\[Q_{\lfloor t/\varepsilon \rfloor}[m]=\iint_{\mathbb R^2} m_{\lfloor t/\varepsilon\rfloor}(0,\hat x,\hat y)\int_0^1 \bar \phi^{water}(t,s,\hat x)\bar \phi^{plast}(t,s,\hat y)\,ds\,d\hat x\,d\hat y,\]
\begin{align*}
    &R_{\lfloor t/\varepsilon\rfloor}[m] =\iint_{\mathbb R^2}\iint_{\mathbb R^2} \hat x^+ m_{\lfloor t/\varepsilon\rfloor}(0,\hat x,\hat y) m_{\lfloor t/\varepsilon\rfloor}(0,\tilde x,\tilde y)\\
    &\phantom{R_{\lfloor t/\varepsilon\rfloor}[m] =\iint_{\mathbb R^2}\iint_{\mathbb R^2} }\int_0^1 \bar \phi^{water}(t,s,\hat x)\bar \phi^{plast}(t,s,\hat y) \frac{\bar \phi^{water}(t,s,\tilde x)\bar \phi^{plast}(t,s,\tilde y)}{\eta + Q_{\lfloor t/\varepsilon \rfloor}[m]}\,ds\,d\hat x\,d\hat y\,d\tilde x\,d\tilde y,
\end{align*}
\[ \rho_{\lfloor t/\varepsilon\rfloor}[m](x_1,y_1,x_2,y_2) =  \int_{0}^{1}  \frac{\bar \phi^{plast}(t,s,y_1)\bar \phi^{plast}(t,s,y_2)\bar \phi^{water}(t,s,x_1)\bar \phi^{water}(t,s,x_2)}{\eta+Q_{\lfloor t/\varepsilon\rfloor}[m]} \,ds. \]

We therefore expect the population to converge to a limit when $\varepsilon>0$ is small:
\[m_{\lfloor s/\varepsilon\rfloor}(0,x,y)\xrightarrow[\varepsilon\to 0]{}n(s,x,y),\] where $n$ is the solution of the following continuous structured model:
\begin{align}
    &\partial_t n(t,x,y)    = - \bar a(t,x,y) n(t,x,y) \nonumber \\
    &\quad + \frac{\iint_{\mathbb R^2} \bar a(t,\hat x,\hat y) n(t,\hat x,\hat y) \,d\hat x\,d\hat y}{\nu+\bar R[n](t)} \bigg[\iint_{\mathbb R^2}\iint_{\mathbb R^2}  \Gamma_{\sigma_x^2}\left(x-\frac{x_{1}+x_{2}}{2}\right)\Gamma_{\sigma_y^2}\left(y-\frac{y_{1}+y_{2}}{2}\right)  x_1^+n(t,x_1, y_1)n(t, x_2,y_2) \nonumber \\
    & \phantom{\quad + \frac{\iint_{\mathbb R^2} \bar a(t,\hat x,\hat y) n(t,\hat x,\hat y) \,d\hat x\,d\hat y}{\nu+\bar R[n](t)} \bigg[\iint_{\mathbb R^2}\iint_{\mathbb R^2}  }\bar\rho[n(t,\cdot,\cdot)](t,x_1,y_1,x_2,y_2) \,dx_1\,dx_2\,dy_1\,dy_2+\nu \Gamma_{2\sigma_x^2}(x)\Gamma_{2\sigma_y^2}(y)\bigg],
  \label{model:n}
\end{align}
with 
\begin{equation}\label{def:gt}
    \bar a(t,x,y) = d+\int_0^{1} \bar \phi^{plast}(t,s,y) \gamma (\bar T(t,s))^+\left(1-\bar \phi^{water}(t,s,x)\right)\,ds,
\end{equation}
\begin{equation}\label{def:barQ}
\bar Q[n](t) =  \iint_{\mathbb R^2} n(t,\hat x,\hat y)\int_0^1 \bar \phi^{water}(t,s,\hat x)\bar \phi^{plast}(t,s,\hat y)\,ds\,d\hat x\,d\hat y,
\end{equation}
\[
\bar R[n](t) = \iint_{\mathbb R^2}\iint_{\mathbb R^2} \hat x^+ n(t,\hat x,\hat y)n(t,\tilde x,\tilde y)\bar\rho[n(t,\cdot,\cdot)](t,\hat x,\hat y,\tilde x,\tilde y) \,d\hat x\,d\hat y\,d\tilde x\,d\tilde y,
\]
\[ \bar\rho[n(t,\cdot,\cdot)](t,x_1,y_1,x_2,y_2) =  \int_{0}^{1}  \frac{\bar \phi^{plast}(t,s,y_1)\bar \phi^{plast}(t,s,y_2)\bar \phi^{water}(t,s,x_1)\bar \phi^{water}(t,s,x_2)}{\eta+\bar Q[n](t)} \,ds. \]

We introduce an existence and uniqueness setting for solutions $n$ of \eqref{model:n} in the following theorem:
\begin{theorem}\label{thm:n-existence}
    Assume $n^0 \in L^{1}(\mathbb R^{2}, \mathbb R_+)\cap W^{1,\infty}(\mathbb R^2)$. There exists a unique solution $n \in  L^\infty(\mathbb R_+,L^{1}( \mathbb R^2,\mathbb R_+)) $ of \ref{model:n} with initial data $n^0$. If $n^0 \in L^\infty(\mathbb R^{2}, \mathbb R_+)$, then the solution $n$ is bounded locally in time:
    \[\|n(t,\cdot,\cdot)\|_{L^\infty(\mathbb R^2)}\leq C(1+t),\]
    for some $C>0$. Moreover, if $n^0$ is Lipschitz continuous, there is $C>0$ such that for $t\geq 0$, $x,x',y,y'\in\mathbb R$ satisfying $y'\leq y$,
    \begin{equation}\label{eq:Lips}
        \big|n(t,x,y)-n(t,x',y')\big|\leq C(1+t^2)\left(|x-x'|+|y-y'|\right).
    \end{equation}
    Finally, there is $C>0$ such that
    \begin{equation}\label{est:moment2}
        \iint_{\mathbb R^2} x^2 n(t,x,y)\,dx\,dy\leq C e^{Ct}.
    \end{equation}
\end{theorem}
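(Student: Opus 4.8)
The plan is to treat \eqref{model:n} as a semilinear equation $\partial_t n = -\bar a(t,x,y)\,n + F[n](t,x,y)$, where $F[n]$ collects the nonlocal reproduction and immigration terms, and to solve it through its Duhamel representation
\[ n(t,x,y) = e^{-\int_0^t \bar a(\tau,x,y)\,d\tau}\,n^0(x,y) + \int_0^t e^{-\int_s^t \bar a(\tau,x,y)\,d\tau}\,F[n](s,x,y)\,ds. \]
The structural facts I would record first are: $\bar a$ is bounded, $d\le \bar a\le d+\gamma\sup_{t,s}(\bar T(t,s))^+=:A$, since $\bar\phi^{plast}\in[0,1]$ and $\bar\phi^{water}\in\{0,1\}$; the two denominators are bounded below, $\nu+\bar R[n]\ge\nu>0$ and $\eta+\bar Q[n]\ge\eta>0$, so that $\bar\rho[n]\le 1/\eta$; and, most importantly, the unbounded weight $x^+$ is tamed by water stress, since $\bar\phi^{water}(t,s,x)=1$ forces $(1+\alpha x^+)s\le\bar P(t)$, whence $x^+\int_0^1\bar\phi^{water}(t,s,x)\,ds\le \bar P(t)/\alpha$ is bounded and therefore the kernel $x_1^+\,\bar\rho[n](x_1,y_1,x_2,y_2)$ is bounded uniformly in $n$. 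This last point is what makes all the moment-free estimates below close.

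For existence and uniqueness I would run a Banach fixed point for the map $n\mapsto \mathcal T[n]$ given by the right-hand side of the Duhamel formula, on a ball of $C([0,T];L^1(\mathbb R^2))$. The contraction reduces to an $L^1$-Lipschitz bound on $F$: writing the reproduction term as $(\Gamma_{\sigma_x^2}\otimes\Gamma_{\sigma_y^2})\ast\mu[n]$, with $\mu[n]$ the pushforward of $x_1^+ n(x_1,y_1)n(x_2,y_2)\bar\rho[n]$ under the midpoint map, telescoping the quadratic dependence on $n$ together with $|\bar\rho[n_1]-\bar\rho[n_2]|\le \eta^{-2}|\bar Q[n_1]-\bar Q[n_2]|\le\eta^{-2}\|n_1-n_2\|_{L^1}$, the boundedness of $x_1^+\bar\rho$ yields $\|F[n_1]-F[n_2]\|_{L^1}\le C_M\|n_1-n_2\|_{L^1}$ on $\{\|n\|_{L^1}\le M\}$. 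Nonnegativity is preserved because $F[n]\ge 0$ when $n\ge 0$ and $e^{-\int\bar a}\in(0,1]$. To globalize I would use conserved mass: integrating \eqref{model:n} in $(x,y)$, the Gaussians integrate to one and the reproduction bracket integrates to exactly $\bar R[n]+\nu$, so the prefactor makes the birth contribution equal to $\iint\bar a\,n$ and cancels the death term, giving $\frac{d}{dt}\iint n=0$; hence $\|n(t)\|_{L^1}=\|n^0\|_{L^1}$ and the local solutions patch into a global one.

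For the quantitative bounds I would argue on the Duhamel formula and via Grönwall. The $L^\infty$ estimate hinges on the Gaussian convolution: Young's inequality gives $\|\mathrm{Birth}[n]\|_{L^\infty}\le \|\Gamma_{\sigma_x^2}\|_\infty\|\Gamma_{\sigma_y^2}\|_\infty\,\bar R[n]$, to which immigration adds $\le C\nu$, while the prefactor obeys $\iint\bar a\,n/(\nu+\bar R[n])\le A/(\nu+\bar R[n])$; the factors $(\bar R[n]+\nu)$ cancel, so $\|F[n](s)\|_\infty\le CA$ \emph{uniformly} in $s$ and $\|n(t)\|_\infty\le\|n^0\|_\infty+CAt\le C(1+t)$. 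For the second moment I would multiply by $x^2$ and integrate; the death term is $\le 0$, the offspring law contributes $\big(\tfrac{x_1+x_2}{2}\big)^2+\sigma_x^2\le\tfrac12(x_1^2+x_2^2)+\sigma_x^2$, and using $x_1^+\bar\rho\le C$ with $\|n(t)\|_{L^1}=1$ one gets $\iint x^2 F[n]\le C(1+\iint x^2 n)$, whence $\frac{d}{dt}\iint x^2 n\le C(1+\iint x^2 n)$ and \eqref{est:moment2} follows from Grönwall.

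The main obstacle is the Lipschitz estimate \eqref{eq:Lips}, specifically the regularity in $y$. Differentiating the Duhamel formula in $x$ and $y$, the derivatives of the reproduction and immigration terms fall on the Gaussian kernels and are harmless (controlled by $\|\partial\Gamma\|_\infty$, the mass, and the already-established $L^\infty$ and moment bounds), so the time integral produces the polynomial factor $(1+t^2)$ once the linear $L^\infty$ growth is fed in. The difficulty is the multiplicative death term $-\bar a\,n$, for which one needs $\bar a$ Lipschitz in each variable. In $x$ this holds because the water-exhaustion time $s^\ast(x)$ defined by $(1+\alpha x^+)s^\ast+\beta\int_0^{s^\ast}(\bar T)^+=\bar P$ depends Lipschitz-continuously on $x$ (the $s$-slope of the left-hand side is bounded below), so $\int_0^1(1-\bar\phi^{water}(t,s,x))(\cdots)\,ds$ is Lipschitz in $x$. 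In $y$, however, $\bar\phi^{plast}(t,s,y)=e^{-\xi\int_0^s \mathbf 1_{\bar T(t,\sigma)>y}\,d\sigma}$ enters through an indicator, and its $y$-variation is governed by the Lebesgue measure of the temperature window $\{\sigma\in[0,s]:\,y'<\bar T(t,\sigma)\le y\}$; this is where I expect the real work, and where the monotonicity of $y\mapsto\bar\phi^{plast}(t,s,y)$ and the ordering hypothesis $y'\le y$ are used to bound $\bar a(t,x,y)-\bar a(t,x,y')\ge 0$ by $|y-y'|$ through the regularity of $\bar T$ in $s$, and to close the Grönwall inequality for the one-sided Lipschitz seminorm in $y$.
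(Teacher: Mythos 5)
Your proposal follows essentially the same route as the paper: the same Duhamel/mild formulation and Banach fixed point in $L^\infty_t L^1_{x,y}$, the same key observation that water limitation makes $x_1^+\bar\rho[n]$ uniformly bounded (the paper's estimate \eqref{est:x1rho}), globalization by mass conservation, the $L^\infty$ bound via the bounded Gaussian kernel, and the second moment via Gr\"onwall. All of those steps are correct and match the paper's proof.

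The one place where your plan, as written, would run into trouble is the Lipschitz estimate in $y$. You propose to control $\bar a(t,x,y)-\bar a(t,x,y')$ \emph{by} $|y-y'|$ ``through the regularity of $\bar T$ in $s$.'' No such two-sided bound holds in general: the $y$-variation of $\bar\phi^{plast}$ is governed by the Lebesgue measure of $\{\sigma:\,y'<\bar T(t,\sigma)\le y\}$, which can be of order one for arbitrarily close $y,y'$ if $\bar T(t,\cdot)$ has a near-plateau, so $y\mapsto\bar a(t,x,y)$ is not Lipschitz without a non-degeneracy hypothesis on $\partial_s\bar T$ that the theorem does not assume. The paper's resolution — which you half-identify but then step away from — is that no upper bound in $|y-y'|$ on this difference is needed at all: since $y\mapsto\bar\phi^{plast}(t,s,y)$ is non-decreasing and $y'\le y$, the $y$-contribution to $\bar a(t,x,y)-\bar a(t,x',y')$ is non-negative, so one only needs the one-sided bound $\bar a(t,x,y)-\bar a(t,x',y')\ge -C|x-x'|$ (from Lemma~\ref{lem:phiwater}), which is exactly what controls the positive parts $\left(e^{-\int\bar a(\tau,x,y)\,d\tau}-e^{-\int\bar a(\tau,x',y')\,d\tau}\right)^+$ appearing in the Duhamel difference. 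The $|y-y'|$ dependence in \eqref{eq:Lips} then comes only from $n^0$ and from the Lipschitz constants of the Gaussian kernels in $\mathcal G[n]$, and the estimate closes directly (the factor $1+t^2$ coming from the two nested time integrals), with no Gr\"onwall on a Lipschitz seminorm required. With that correction your argument coincides with the paper's.
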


We can also build a rigorous connection between the annual structured model \eqref{model:m} and the continuous structured model \eqref{model:n} when $\varepsilon>0$ is asymptotically small, thanks to the following result:
\begin{theorem}\label{thm:n-to-m} 
Assume $n^0 \in L^{1}(\mathbb R^{2}, \mathbb R_+)\cap W^{1,\infty}(\mathbb R^2)$. 
For $\varepsilon>0$, let $(m_k^\varepsilon,s_k^\varepsilon) \in L^{1}([0,1],\mathbb{R}^2 \times L^{\infty}(\mathbb{R}^{2})$the solution of \eqref{model:n} in the sense of Theorem~\ref{thm:m-existence}, with initial data $m^0:=n^0$. Let $n \in L^{1}(\mathbb{R_+} \times \mathbb{R}^2)$ the solution of \ref{model:n} in the sense of Theorem~\ref{thm:n-existence}, with initial data $n^0$. For any $T\geq 0$, we have 
\[\lim_{\varepsilon\to 0}\|n(t,x,y)-m_{\lfloor t/\varepsilon\rfloor}^\varepsilon(0,x,y)\|_{L^\infty([0,T],L^1(\mathbb R^2))}=0.\]
\end{theorem}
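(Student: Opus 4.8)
The plan is to read the heuristic recursion \eqref{model:m-epsilon} as an explicit one-step time-marching scheme for \eqref{model:n} and to run a consistency-plus-stability (discrete Gronwall) argument, entirely in $L^1(\mathbb R^2)$. Reindexing the year counter by $t=k\varepsilon$, the map \eqref{model:m-epsilon} reads $m_{k+1}^\varepsilon(0,\cdot)=S_\varepsilon(k,m_k^\varepsilon(0,\cdot))$ for a one-step operator $S_\varepsilon$, and expanding $e^{-\varepsilon a_k}=1-\varepsilon a_k+\mathcal O(\varepsilon^2)$ (with $a_k$ from \eqref{eq:gk}) gives $S_\varepsilon(k,m)=m+\varepsilon\,\mathcal F_k[m]+\mathcal O(\varepsilon^2)$, where $\mathcal F_k$ is the right-hand side of \eqref{model:n} built with the year-$k$ environment. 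Setting $e_k:=\|m_k^\varepsilon(0,\cdot)-n(k\varepsilon,\cdot)\|_{L^1}$ and splitting the one-step error as $\|S_\varepsilon(k,m_k^\varepsilon(0))-S_\varepsilon(k,n(k\varepsilon))\|_{L^1}+\|S_\varepsilon(k,n(k\varepsilon))-n((k+1)\varepsilon)\|_{L^1}$, I would prove the recursion $e_{k+1}\le(1+C\varepsilon)e_k+C\varepsilon^2$ and close it by a discrete Gronwall inequality: since $e_0=0$ (because $m^0=n^0$), this gives $\max_{k\varepsilon\le T}e_k\le C(T)\varepsilon\to0$, which is the claim (indeed with an $\mathcal O(\varepsilon)$ rate). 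Throughout, mass conservation (Theorem~\ref{thm:m-existence}, \eqref{eq:mp1}) and its continuous counterpart keep every iterate on the unit $L^1$-sphere.

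The first summand is the stability estimate, and I expect it to be the main obstacle, because the quadratic reproduction kernel carries the a priori unbounded weight $x_1^+$. The observation that makes the scheme work is that the water budget caps the realised fecundity: by \eqref{def:phiwater}--\eqref{def:barphiw}, for every $x$ one has $(1+\alpha x^+)\int_0^1\bar\phi^{water}(t,s,x)\,ds\le\bar P(t)$, whence $x^+\int_0^1\bar\phi^{water}(t,s,x)\,ds\le\bar P(t)/\alpha$ and the effective weight $x_1^+\,\bar\rho$ is uniformly bounded on $[0,T]$. Writing the reproduction term bilinearly as $\mathcal B(u-v,u)+\mathcal B(v,u-v)$ plus the $\bar Q$-through-the-denominator contribution, and integrating out the Gaussian convolutions (which have mass one), each piece is bounded by $C\|u-v\|_{L^1}$ times the masses of $u,v$, with no weighted norm of $u-v$ surviving. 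The nonlocal quotients are then harmless: $1/(\eta+\bar Q)$ and $\iint\bar a\,n/(\nu+\bar R)$ are $L^1$-Lipschitz because their denominators are bounded below by $\eta>0$ and $\nu>0$ while $\bar a$ is bounded (the environment is bounded and $\bar\phi^{plast},\bar\phi^{water}\in[0,1]$). Combining these, $\mathcal F_k$ is $L^1$-Lipschitz with a constant uniform in $(k,\varepsilon)$, producing the factor $1+C\varepsilon$.

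For the local truncation term I would use the time-regularity of $n$: from \eqref{model:n} and the bounds of Theorem~\ref{thm:n-existence}, $\partial_t n$ is bounded in $L^1$ on $[0,T]$, so $n((k+1)\varepsilon)-n(k\varepsilon)=\varepsilon\,\mathcal F[n(k\varepsilon)]+\mathcal O(\varepsilon^2)$. Comparing with $S_\varepsilon(k,n(k\varepsilon))=n(k\varepsilon)+\varepsilon\,\mathcal F_k[n(k\varepsilon)]+\mathcal O(\varepsilon^2)$, whose remainder is uniform because $a_k\le d+\gamma\sup(\bar T)^+$, it remains to bound $\|\mathcal F_k[n(k\varepsilon)]-\mathcal F[n(k\varepsilon)]\|_{L^1}$ by $C\varepsilon$. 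This is where the gradual-climate assumption \eqref{ass:weather}--\eqref{eq:limitT} enters: the year-$k$ environment differs from $(\bar T^M,\bar V,\bar P)(k\varepsilon)$ by $\mathcal O(\varepsilon)$, so the only delicate point is the propagation of this drift through the discontinuous threshold functions \eqref{def:phiplast}--\eqref{def:barphip} and \eqref{def:phiwater}--\eqref{def:barphiw}. The indicators $1_{T_k(\tau)>y}$ and $1_{\bar T(k\varepsilon,\tau)>y}$ differ on a $\tau$-set whose location depends on $y$, but after integrating against $n(k\varepsilon,\cdot)$ and applying Fubini the total defect collapses to $\|n\|_{L^\infty}\int_0^1|T_k(\tau)-\bar T(k\varepsilon,\tau)|\,d\tau=\mathcal O(\varepsilon)$, using the uniform $L^\infty$ bound of Theorem~\ref{thm:n-existence}; the water indicator is handled identically, the monotonicity in $s$ of the consumption in \eqref{def:barphiw} making the exhaustion time $s^\ast(x)$ well defined and Lipschitz in the data.

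Granting these two estimates, the recursion $e_{k+1}\le(1+C\varepsilon)e_k+C\varepsilon^2$ holds uniformly for $k\varepsilon\le T$, and the discrete Gronwall inequality with $e_0=0$ yields $\sup_{t\in[0,T]}\|n(t,\cdot)-m_{\lfloor t/\varepsilon\rfloor}^\varepsilon(0,\cdot)\|_{L^1(\mathbb R^2)}\le C(T)\,\varepsilon$, which gives the stated convergence as $\varepsilon\to0$. If one prefers to avoid the explicit truncation-error bookkeeping, the same uniform-in-$\varepsilon$ bounds (mass, the $L^\infty$ and Lipschitz controls, and the moment bound \eqref{est:moment2}, all propagated to the discrete iterates by induction) give relative compactness of $t\mapsto m_{\lfloor t/\varepsilon\rfloor}^\varepsilon(0,\cdot)$ in $C([0,T],L^1)$, after which one passes to the limit in the weak formulation and identifies the limit with $n$ by the uniqueness part of Theorem~\ref{thm:n-existence}; in that route the bounded effective fecundity and the Fubini estimate above reappear as the ingredients needed to pass to the limit in the reproduction term and the threshold functions.
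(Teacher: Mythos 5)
Your overall architecture coincides with the paper's: both proofs compare the one-year update with the Duhamel form of \eqref{model:n}, establish an $L^1$-stability estimate for the reproduction operator using the water-budget cap $x_1^+\bar\rho\le C$ (the paper's \eqref{est:x1rho}), control the environmental drift through Lemma~\ref{lem:phiwater}, and close with a discrete Gronwall recursion of the form $e_{k+1}\le(1+C\varepsilon)e_k+(\text{truncation error})$ starting from $e_0=0$.

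There is, however, a genuine gap in your consistency estimate, precisely at the step you flag as delicate. You claim that the dormancy defect, integrated against $n$, ``collapses to $\|n\|_{L^\infty}\int_0^1|T_k(\tau)-\bar T(k\varepsilon,\tau)|\,d\tau=\mathcal O(\varepsilon)$''. Write that defect as $\iint n(t,x,y)\,h(y)\,dx\,dy=\int_{\mathbb R}h(y)\,N(y)\,dy$, with $h(y):=\gamma\|T\|_{L^\infty}\int_0^1\big|\phi^{plast}_{\lfloor t/\varepsilon\rfloor}(\tau,y)-\bar\phi^{plast}(t,\tau,y)\big|\,d\tau$ and $N(y):=\int_{\mathbb R}n(t,x,y)\,dx$ the $y$-marginal. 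The Fubini computation indeed gives $\int_{\mathbb R}h(y)\,dy\le C\varepsilon$, but $h$ is \emph{not} small pointwise in $y$ (for $y$ near the temperature curve the two indicator sets can differ on a $\tau$-set of measure $O(1)$), so to exploit the $L^1_y$ smallness you must dominate $N(y)$ uniformly --- and $N(y)$ is an integral over the unbounded $x$-line, which is not controlled by $\|n\|_{L^\infty(\mathbb R^2)}$; Theorem~\ref{thm:n-existence} provides no $L^\infty$ bound on the $y$-marginal. The paper's proof addresses exactly this point: it splits $x$ into $|x|\le\varepsilon^{-1/2}$, where the $L^\infty$ bound on $n$ is used over a region of measure $2\varepsilon^{-1/2}$, and $|x|>\varepsilon^{-1/2}$, handled by Chebyshev's inequality together with the second-moment bound \eqref{est:moment2}; see \eqref{est:termeini}. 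This splitting costs a factor $\varepsilon^{-1/2}$, so the per-step truncation error is $Ce^{Ct}\varepsilon^{3/2}$ rather than your claimed $\mathcal O(\varepsilon^2)$, and the global rate is $\mathcal O(\sqrt\varepsilon)$ rather than $\mathcal O(\varepsilon)$. The qualitative convergence statement survives once you insert this splitting (the same repair is needed where $\bar\rho$ built from the year-$k$ environment is compared with its continuous counterpart inside the reproduction term), but the first-order rate you announce is not justified by the argument as written.
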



\subsection{Derivation of the macroscopic model} 
\label{sec:macromain}

The model \eqref{model:m} was constructed according to the life cycle processes of individual trees. The drawback of this derivation is the complexity of the model, which was reduced thanks to an asymptotic limit leading to the continuous structured model \eqref{model:n}. That second model remains complex and in this chapter we simplify it further. Our idea is to use the asymptotic limit where the phenotypic variance of the population is small, that is $\sigma_x^2>0$ and $\sigma_y^2>0$ small. This assumption is related to the weak selection that is often used in population genetics (\cite{wakeley2005limits}), and it has been used in mathematical studies involving the infinitesimal model (\cite{patout2023cauchy,raoul2017macroscopic}). Specifically, we assume:
\[(\sigma_x^2,\sigma_y^2)=\sigma^2(\bar \sigma_x^2,\bar \sigma_y^2),\]
for some parameter  $\sigma>0$ small. We also assume $\nu=\eta=0$ and $\xi\gg 1$ to simplify the notations in this section. 

\medskip

We define $(X(t), Y(t))$ as the mean phenotypic traits of the population: $X(t)$ is the mean  seed production trait; $Y(t)$ is the mean dormancy trait. Since $\iint_{\mathbb R^2} n(t,x,y)\,dx\,dy\equiv 1$,
\begin{equation}\label{def:meanpheno}
    X(t):=\iint_{\mathbb R^2} x\,n(t,x,y)\,dx\,dy,\quad Y(t):=\iint_{\mathbb R^2} y\,n(t,x,y)\,dx\,dy.
\end{equation}
When $\sigma>0$ is small, heuristically, the solution $n$ will be distributed as a normal distribution with covariance
\[\sigma^2\left(\begin{array}{cc}
    \bar\sigma_x^2 &0  \\
     0&\bar\sigma_y^2 
\end{array}\right),\]
centered around $(X(t),Y(t))$, that is
\begin{equation}\label{eq:Maxwellian}
    \tilde n(t,x,y)\sim \Gamma_{2\sigma^2\bar \sigma_x^2}\left(x-X(t)\right)\Gamma_{2\sigma^2\bar \sigma_y^2}\left(y-Y(t)\right).
\end{equation}
We explain why $\sigma>0$ small leads to this approximation  in Section~\ref{sec:Appendixmacro}. Thanks to this approximation formula, we simply need to describe the time dynamics of the mean phenotypic traits $X(t)$ and $Y(t)$, which, thanks to their definition \eqref{def:meanpheno} and to the model \eqref{model:n}, satisfy
\begin{align}
    &\frac d{dt}X(t)  = - \iint_{\mathbb R^2} x\bar a(t,x,y) n(t,x,y)\,dx\,dy \nonumber \\
    &\quad + \frac{\iint_{\mathbb R^2} \bar a(t,x,y) n(t,x,y) \,dx\,dy}{\iint_{\mathbb R^2}\iint_{\mathbb R^2}  x_1^+n(t,x_1, y_1)n(t, x_2,y_2)\bar\rho[n(t,\cdot,\cdot)](t,x_1,y_1,x_2,y_2) \,dx_1\,dy_1\,dx_2\,dy_2}\nonumber  \\
    &\qquad \iint_{\mathbb R^2}\iint_{\mathbb R^2}  \frac{x_{1}+x_{2}}{2}x_1^+n(t,x_1, y_1)n(t, x_2,y_2) \bar\rho[n(t,\cdot,\cdot)](t,x_1,y_1,x_2,y_2) \,dx_1\,dy_1\,dx_2\,dy_2,\label{eq:dX0}
\end{align}
\begin{align}
    &\frac d{dt}Y(t)  = - \iint_{\mathbb R^2} y\bar a(t,x,y) n(t,x,y)\,dx\,dy \nonumber \\
    &\quad + \frac{\iint_{\mathbb R^2} \bar a(t,x,y) n(t,x,y) \,dx\,dy}{\iint_{\mathbb R^2}\iint_{\mathbb R^2}  x_1^+n(t,x_1, y_1)n(t, x_2,y_2)\bar\rho[n(t,\cdot,\cdot)](t,x_1,y_1,x_2,y_2) \,dx_1\,dy_1\,dx_2\,dy_2}\nonumber  \\
    &\qquad \iint_{\mathbb R^2}\iint_{\mathbb R^2}  \frac{y_{1}+y_{2}}{2}x_1^+n(t,x_1, y_1)n(t, x_2,y_2) \bar\rho[n(t,\cdot,\cdot)](t,x_1,y_1,x_2,y_2) \,dx_1\,dy_1\,dx_2\,dy_2.\label{eq:dY0}
\end{align}
Using the approximation \eqref{eq:Maxwellian}, it is possible to simplify these expressions, as we detail in the appendix (Section~\ref{subsec:macrolimit}), leading to the following macroscopic model: 
\begin{align}\label{eq:dXY}
\left\{\begin{array}{cl}
\frac d{dt}X(t)&= 2\sigma^2\bar \sigma_x^2\bigg[-\partial_x \bar a(t,X(t),Y(t))+\frac{\bar a(t,X(t),Y(t))}{2X(t)} +\bar a(t,X(t),Y(t))\frac{\partial_{x}(\hat\rho[X(t),Y(t)])(t,X(t),Y(t))}{\hat\rho[ X(t),Y(t)](t,X(t),Y(t))} \bigg],\\ \\
    \frac d{dt}Y(t)&= 2\sigma^2\bar\sigma_y^2\bigg[-\partial_y \bar a(t,X(t),Y(t))+\bar a(t,X(t),Y(t))\frac{\partial_{y}(\hat \rho[X(t),Y(t)])(t,X(t),Y(t))}{\hat\rho[ X(t),Y(t)](t,X(t),Y(t))}\bigg],
\end{array}\right.
\end{align}
where $\bar a$ is defined by \eqref{def:gt} and 
\begin{equation}\label{def:rh}
    \hat  \rho[X(t),Y(t)](t,x,y) =  \int_{0}^{1}  \bar \phi^{plast}(t,s,y)\bar \phi^{plast}(t,s,Y(t))\bar \phi^{water}(t,s,x)\bar \phi^{water}(t,s,X(t))\,ds.
\end{equation}

Note that we do not provide a rigorous result for the derivation of \eqref{eq:dXY} from the continuous structured model \eqref{model:n}: our arguments are heuristic only. The ordinary differential equation  \eqref{eq:dXY} and approximation \eqref{eq:Maxwellian} provides a simple description of the dynamics of the population, and we detail in Section~\ref{subsec:coeff} how the coefficients of this macroscopic model can be computed. It summarizes the complex biological features taken into account to derive the original model \eqref{model:m} into a two-dimensional dynamical system, and we refer to Section~\ref{sec:numerics} for simulations of this macroscopic model. The different terms of equations \eqref{eq:dXY} can be understood biologically, as we describe below:
\begin{itemize}
    \item The first term on the right hand side of the equation on $X(t)$ comes from the impact of $X(t)$ on the mortality rate $\bar a(t,X(t),Y(t))$ and it represents the fact that evolution of $X(t)$ tends to lower the mortality rate of individuals, which is a typical effect of selection. A similar description can be made on the first term of the equation for $Y(t)$.
    \item The term $\frac{\bar a(t,X(t),Y(t))}{2X(t)}$ in the equation for $X(t)$ results from the definition of the phenotypic trait $x$: it designates the production of (the female part of) seeds, and increasing $X(t)$ therefore has a direct positive contribution on the seed production, which this term represents. The factor $\frac 12$ appears because the trait $X(t)$ only impacts one of the parents ($X(t)$ has no impact on pollen production), while the factor $\frac 1 {X(t)}$ appears because of our modeling choice of having a regulation of the population size by birth (new offspring mature only when adults in the population die).
    \item The last terms in equations for both $X(t)$ and $Y(t)$, involving $\hat \rho[X(t),Y(t)]$, represent the effect of assortative mating. For instance, if $\partial_{x}(\hat\rho[X(t),Y(t)])(t,X(t),Y(t))>0$, individual trees with a larger trait $x$ engage in more reproduction events and have more offspring. This assortative mating results from the phenological consequences of the traits we consider: at any seasonal time $s\in[0,1]$, reproduction only happens among adults that are not dormant and have sufficient water. 
\end{itemize}
Since we assume that $\xi\gg 1$, we can define the  seasonal time when individuals become dormant (or the \emph{onset of dormancy}), that we denote $D(t)$, and the seasonal time when individuals run out of water (or \emph{onset of water stress}), that we denote by $W(t)$. These plastic traits can indeed be defined as follows:
\begin{equation}\label{def:Dt}
D(t):=\min\left\{s\in(0,1); \bar T(s,t)\geq Y(t)\right\},
\end{equation}
\begin{equation}\label{Def:Wt}
    W(t):=\min\left\{s\in(0,1); (1+\alpha X(t))s+\beta\int_{0}^s (\bar T(t,\tau))^+\,d\tau\leq \bar P(t)\right\},
\end{equation}
with the convention $D(t)=1$ if the $\bar T(s,t)< Y(t)$ for $t\in(0,1)$ (and a similar convention for $W(t)$). At time $t$, the phenology of the population with mean phenotypic traits $(X(t),Y(t))$ can then be summarized using $D(t)$ and $W(t)$, which we represent in Figure~\ref{fig:csetup}. Moreover, the mortality rate $M(t)$ of the population is given by 
\begin{equation}\label{def:mortalityrate}
M(t):=\bar a(t,X(t),Y(t))= d+\int_0^{1} \bar \phi^{plast}(t,s,Y(t)) \gamma (\bar T(t,s))^+\left(1-\bar \phi^{water}(t,s,X(t))\right)\,ds
\end{equation}
The quantity $D(t),W(t)\in[0,1]$ and $M(t)\geq 0$ can be seen as plastic traits of the population, and they are paramount for applications: these traits are easier to monitor on field populations than $X(t)$ and $Y(t)$. Moreover, minimizing $t\mapsto M(t)$ (or $\sup_{t\geq 0}M(t)$) can be an objective of forest management policies.  

\begin{figure}
\centering
\begin{subfigure}{0.5\textwidth}
  \centering
  \includegraphics[width=.9\linewidth]{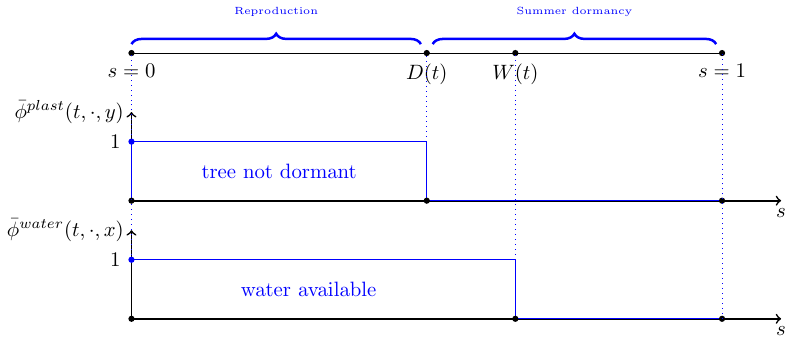}
  \caption{}
  \label{fig:sfig1}
\end{subfigure}%
\begin{subfigure}{.5\textwidth}
  \centering
  \includegraphics[width=.9\linewidth]{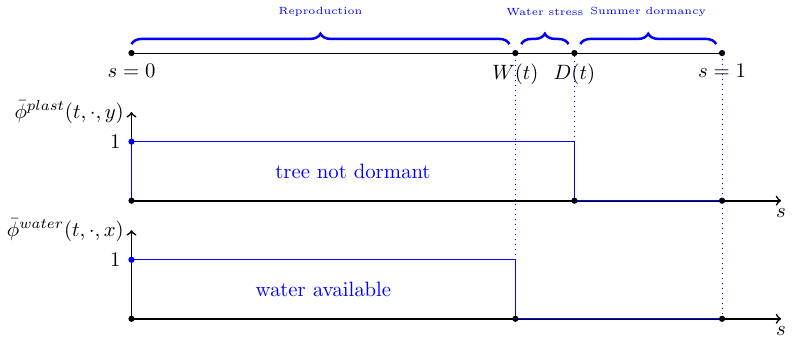}
  \caption{}
  \label{fig:sfig2}
\end{subfigure}

\caption{Phenology of a typical tree. (a) Cases where $D(t)<W(t)$, trees  produce seeds until seasonal time $s=D(t)$ when they become dormant until the next season. Note that in this case, the tree is not using all the available water. (b) Cases where $W(t)<D(t)$, the trees suffer from a lack of water from seasonal time $W(t)$ until they become dormant at seasonal time $D(t)$, this stress results in an increased mortality rate for trees during that seasonal period. }
\label{fig:csetup}
\end{figure}

\subsection{Computation of the coefficients of the macroscopic model}\label{subsec:coeff}

In this section, we explicit the coefficients of the macroscopic model \eqref{eq:dXY}.  To do so, we denote by $W(t,x)$ the seasonal time at which water is exhausted for the production of $x\geq 0$ seeds, and by $D(t,y)$ the seasonal time at which individuals with dormancy trait $y\in\mathbb R$ enter dormancy. Note that these notations are coherent with \eqref{Def:Wt}, \eqref{def:Dt}, since
\[D(t)=D(t,Y(t)),\quad W(t)=W(t,X(t)).\]
To estimate $\partial_y D(t,y)$, we notice that the definition of $D(t,y)$ (see \eqref{def:Dt}) implies $\bar T(D(t,y),t)=y$, and then
\[\partial_y D(t,y)=\frac 1{\partial_s \bar T(D(t,y),t)}.\]
Similarly, to compute $\partial_x W(t,x)$, we notice that the definition of $W(t,x)$ (see \eqref{Def:Wt}) implies
\[
(1+\alpha x)W(t,x)+\beta \int_{0}^{W(t,x)} \bar T(t,\tau)_+\,d\tau = \bar P(t).
\]
Differentiating with respect to $x$ yields:
\[
\partial_x W(t,x)=-\frac{\alpha W(t,x)}{1+\alpha x+\beta \bar T(W(t,x))_+}.
\]
Moreover, the definition of $\bar a$ and $\hat \rho$ (see \eqref{def:gt} and \eqref{def:rh}) imply
\[\bar a(t,x,y)=d+\gamma 1_{W(t,x)<D(t,y)}\int_{W(t,x)}^{D(t,y)}(\bar T(s,t))^+\,ds,\]
\[\hat \rho[\tilde x,\tilde y](x,y)=\min\big(D(t,\tilde y),D(t,y),W(t,\tilde x),W(t,x)\big).\]
We can then compute:
\[\partial_x\bar a(t,x,y)=-\gamma \partial_x W(t,x) (\bar T(W(t,x),t))^+1_{W(t,x)<D(t,y)}=\gamma \frac{\alpha W(t,x)(\bar T(W(t,x),t))^+}{1+\alpha x+\beta \bar T(W(t,x))_+}1_{W(t,x)<D(t,y)},\]
\[ \partial_y\bar a(t,x,y)=\gamma \partial_x D(t,y) (\bar T(D(t,y),t))^+1_{W(t,x)<D(t,y)}=\gamma \frac{(\bar T(D(t,y),t))^+}{\partial_s \bar T(D(t,y),t)}1_{W(t,x)<D(t,y)},\]
\begin{align*}
\partial_x\hat \rho[\tilde x,\tilde y](x,y)&=\partial_x W(t,x)1_{W(t,x)<\min\big(D(t,\tilde y),D(t,y),W(t,\tilde x)\big)}\\
&=-\frac{\alpha W(t,x)}{1+\alpha x+\beta \bar T(W(t,x))_+}1_{W(t,x)<\min\big(D(t,\tilde y),D(t,y),W(t,\tilde x)\big)},
\end{align*}
\begin{align*}
\partial_y\hat \rho[\tilde x,\tilde y](x,y)&=\partial_y D(t,y)1_{D(t,y)<\min\big(D(t,\tilde y),W(t,\tilde x),W(t,x)\big)}\\
&=\frac 1{\partial_s \bar T(D(t,y),t)}1_{D(t,y)<\min\big(D(t,\tilde y),W(t,\tilde x),W(t,x)\big)}.
\end{align*}
Finally, the mortality rate \eqref{def:mortalityrate} is given by
\[M(t)= d+\gamma 1_{W(t)<D(t)}\int_{W(t)}^{D(t)} (\bar T(t,s))^+\,ds.\]

\section{Numerical simulations of the macroscopic model}
\label{sec:numerics}

The structured model \eqref{model:m} provides a precise description of the life cycle of individuals and of the impact of the life cycle on the genetic evolution of the population. The model is however complex, even numerically: the evaluation of the birth term, that involves five integrals, makes the development of numerical simulations very challenging. The intermediate model \eqref{model:n} has similar integral terms and its simulation is also difficult. The macroscopic model \eqref{eq:dXY} consists of a system of two coupled differential equations in $\mathbb R$, and the dynamics of the model is then given by a vector field in $\mathbb R^2$. This vector field can be computed numerically thanks to the formula given in Section~\ref{subsec:coeff}, and we simulate the system \eqref{eq:dXY} using a fourth-order Runge–Kutta method.

\subsection{Dynamics of the population in a fixed environment}
\label{subsec:num-fixed}

\begin{figure}[H]
    \centering
    \begin{subfigure}{0.5\textwidth}
         \includegraphics[width=\textwidth]{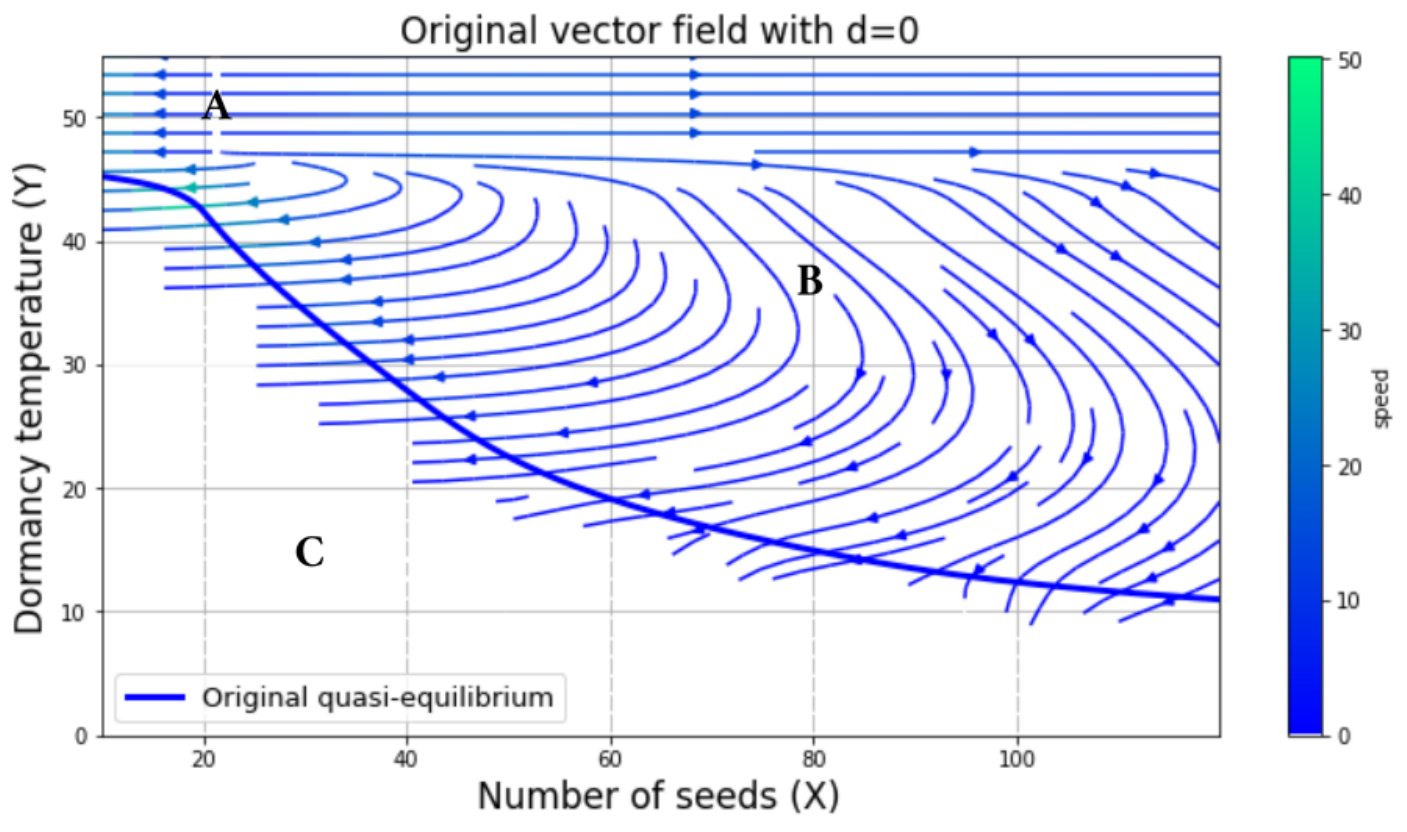}
         \caption{}
    \end{subfigure} \hfill
    \begin{subfigure}{0.5\textwidth}
         \includegraphics[width=\textwidth]{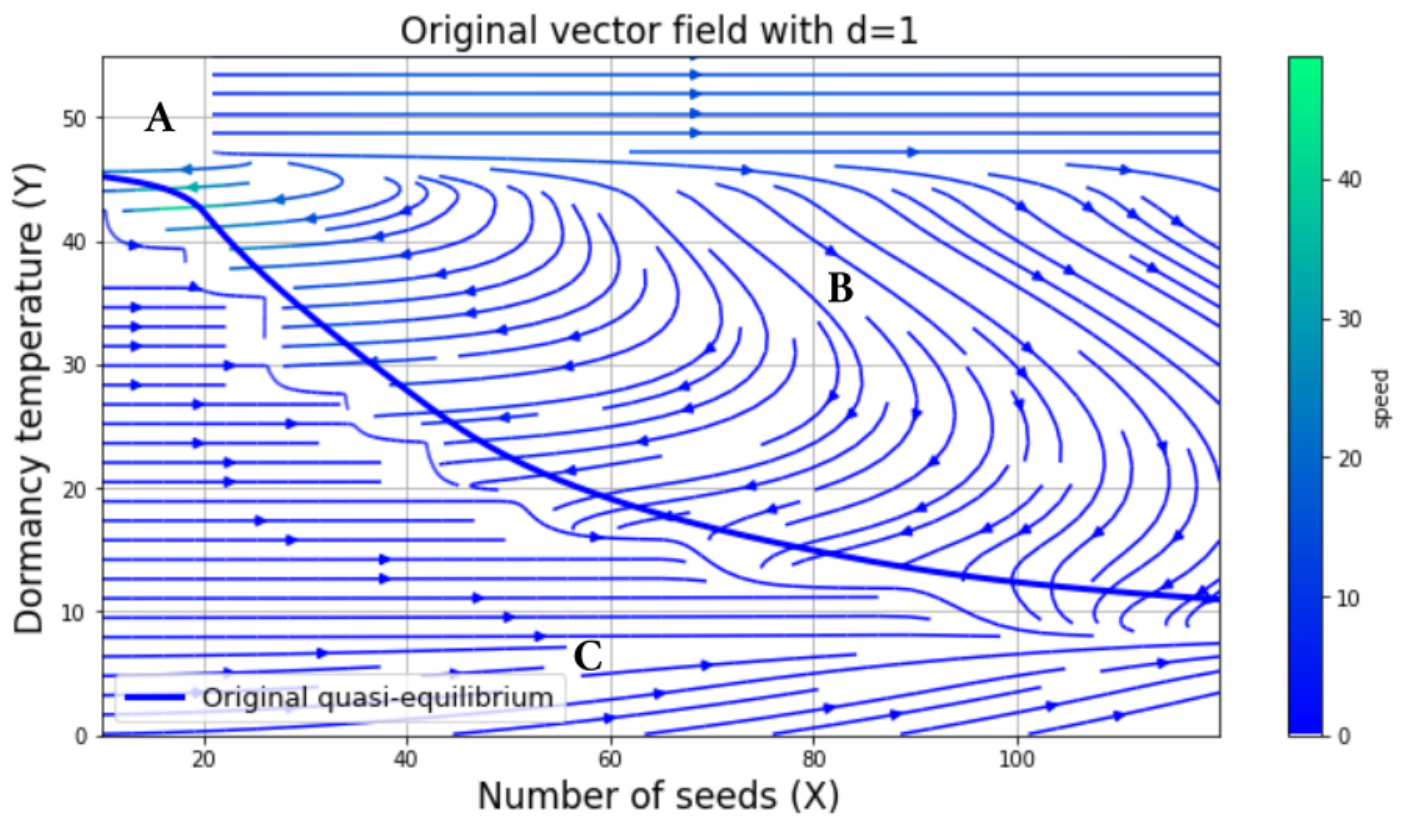}
         \caption{}
    \end{subfigure}
    \caption{Vector field for the differential equation \eqref{eq:dXY} on $(X(t),Y(t))$. The three graphs corresponds to different values of the base death rate $d\geq 0$: (a): case where the $d=0$, (b): case where $d>0$ is positive. The blue line represents the \textbf{CESS}, the letters $\bf A$, $\bf B$, $\bf C$ represent  the three regions described in Section~\ref{subsec:num-fixed}, and the blue arrows represent the vector field associated to the differential equation \eqref{eq:dXY}. 
    \textit{Parameters:} $\alpha=3, \beta=1, \gamma=2, s_0=0, \sigma_{x}=50, \sigma_y=1 \text{ and } d \in \{0,1,2\}$.}
    \label{fig:2}
\end{figure}

From \eqref{def:gt}, we observe that if $d=0$ (that is when deaths only results from water stress), $\bar a(t,x,y)$ is given by
\[\bar a(t,X(t),Y(t))=\int_0^1 1_{[0,D(t)]}(s)\gamma(\bar T(t,s))^+1_{[W(t),1]}(s)\,ds,\]
where we have used the notation $D(t)$, $W(t)$ introduced in \eqref{def:Dt}-\eqref{Def:Wt}. In particular, $\bar a(t,X(t),Y(t))=0$ if $D(t)<W(t)$. This defines a region of the phase plane $(X(t),Y(t))$ where no evolution occurs when $d=0$, that we denote by $\bf C$ in Figure~\ref{fig:2}(a). The boundary of this region is constituted of points where the evolution of the population halts, which we denote by \textbf{CESS}, for \emph{continuum of evolution singular strategies}, referring to the singular strategies of the theory of Adaptive Dynamics, which are fixed traits for the evolution dynamics. The area above the \textbf{CESS} can be divided further into two regions. The first one is denoted by $\bf A $ in Figure~\ref{fig:2}(a): it corresponds to the area where $Y$ is larger than the maximal temperature of the environment: since $Y(t)$ is the trigger temperature for dormancy, no dormancy occurs if $(X(t),Y(t))$ is in region $\bf A$, and no selection pressure is then exerted on $Y(t)$, that therefore does not evolve. We notice that region $\bf A$, the trait $X(t)$ increases indefinitely, which can be explained by the fact that individuals unable to enter dormancy inevitably encounter water stress, it is then beneficial to produce many seeds early in the season, that is to increase $X(t)$, to produce as many seeds as possible until the water resource is depleted. Finally, in region $\bf B$, both traits $X(t)$ and $Y(t)$ evolve, and they converge towards the \textbf{CESS}.

\begin{table}[h]
\centering
{
\begin{tabular}{|c|p{14cm}|}
\hline
\textbf{CESS} & Continuum of singular strategies: no evolution occurs when $(X,Y)$ is on this line  \\ \hline
$\bf{A}$ & The trees do not enter dormancy at all.  \\ \hline
$\bf{B}$  & Trees enter dormancy too late, and they face water stress, inducing some deaths and a rapid evolution of both $X$ and $Y$ as a result.         \\ \hline
$\bf{C}$  &  The trees are protected from water stress by dormancy. The mortality rate is then low and the evolution of $X$, $Y$ is slow.       \\ \hline
\end{tabular}}
\caption{{Ecological regimes discussed in Sections~\ref{subsec:num-fixed} and represented in Figures~\ref{fig:2}.}}
\label{tab:example}
\end{table}

In Figure~\ref{fig:2}(b), we consider a positive, but small, base death rate $d>0$, and a larger value of this parameter in Figure~\ref{fig:2}(c). We notice that the three regions $\bf A$, $\bf B$, $\bf C$ described above still make sense qualitatively when $d>0$, the main difference being that the traits in region $\bf C$ converge to the \textbf{CESS} if $d>0$, but this convergence is very slow if $d>0$ is small.

\subsection{Effect of an environmental shift}\label{subsec:shift}

We now consider a population that is well adapted to its initial environment, that is a population such that $(X(0),Y(0))$ lies on the \textbf{CESS} corresponding to the initial environment, represented by a thick blue line in Figures~\ref{fig:3}-\ref{fig:4}-\ref{fig:5}-\ref{fig:6}. We therefore assume that the population is in a stabilized situation initially, that is until time $t=0$. We consider that right after time $=0$ which we denote by $t=0^+$, there is a rapid shift of the environmental conditions, leading to a new vector field and a new \textbf{CESS},  represented in red in figures. We assume that this shift is rapid because the derivation of the macroscopic model we consider (see \eqref{eq:dXY}) involves a change of the time variable: any environmental shift that occurs on the time scale of a tree lifetime (or over an even shorter time) would appear instantaneous in the model we simulate here.  We consider three types of environmental shifts. We represent the dynamics through the phase-plan $(X(t),Y(t))$ in Figures~\ref{fig:4} to \ref{fig:6}. For some of these dynamics, we provide more details of the population's dynamics in  Figure~\ref{fig:7}, were we  represent the time dynamics of the traits $(X(t),Y(t))$ as well as the onset of summer dormancy $D(t)$ (see \eqref{def:Dt}) and the beginning of the water stress $W(t)$ (see \eqref{Def:Wt}). The traits $D(t)$ and $W(t)$ depend on both $(X(t),Y(t))$ and the environmental conditions; they can therefore be seen as plastic phenological traits of the individuals, which could be more easily estimated on real populations than $X(t)$ or $Y(t)$, and could be useful to investigate the connection between the model we built and field studies \textcolor{blue}{(see e.g. \cite{richards2020quantitative})}.

\medskip

    \noindent\textbf{Temperature shift:} We consider a situation where the temperatures increase, while the precipitation levels remain constant, see Figure~\ref{fig:3} and Figure~\ref{fig:7}(a). We observe that the original \textbf{CESS} (in blue) is below the new one (in red), so that $(X(0),Y(0)$ (that is on the original \textbf{CESS}) lies in the region $\bf C$ for the new environment (see the description of region $\bf C$ in section~\ref{subsec:num-fixed}). Therefore, the population adapts slowly to the new environment if $d>0$ is small (see Figure~\ref{fig:3}(b)), and no adaptation occurs if $d=0$  (see Figure~\ref{fig:3}(a)). To understand the biological meaning of this slow adaptation, we should remember that dormancy is triggered by the critical temperature $Y(t)$. The shift to higher temperatures then implies an earlier entry into dormancy, protecting individuals from the effect of the increasing temperatures. This protection implies that few individuals die, which leads in turn to a slow adaptation of the population. In Figure~\ref{fig:7}(a), we provide more details on the dynamics represented in Figure~\ref{fig:3}(a). We notice that the temperature shift implies an earlier onset of the summer dormancy $D(t)$, which protects the population: the shift does not imply a surge in mortality. As explain above, no evolution occurs: $X(t)$ and $Y(t)$ are constant.

\medskip

\noindent\textbf{Precipitation shift:} We consider a drop in the precipitation levels, while temperatures remain constant, see Figures~\ref{fig:4}-\ref{fig:5} and Figure\ref{fig:7}(b)-(c). We observe that the original \textbf{CESS} (in blue) is above the new one (in red), indicating an important water stress for individuals, and the important mortality it creates enables a fast evolution of the population. In Figure~\ref{fig:4}, we considered a situation where only $X(t)$ can evolve, which we obtain by considering $\bar \sigma_y=0$ in \eqref{eq:dXY}. We then observe two qualitatively different consequences of the precipitation shift, depending on its amplitude :
\begin{itemize}
    \item If the precipitation shift is small to moderate (Figure~\ref{fig:4}(a)), the trait $X(t)$ of the population adjusts rapidly to the new \textbf{CESS}.
    \item If the precipitation shift is large (Figure~\ref{fig:4}(b), the trait $X(t)$, that is the rate of seed production,  evolves to $+\infty$. This corresponds to the dynamics described in Section~\ref{subsec:num-fixed} for region $\bf A$: the population changes produces many seeds early in the season, accepting the effect of water stress once the seeds are produced.
\end{itemize}
The dependency of the dynamics of $X(t)$ in the amplitude of the drop in precipitations constitutes a tipping point: If the shift is small or moderate, the population is able to adapt its trait to the new environment, to produce a maximum number of seeds without facing water stress; if the shit is important, evolution brings the population to a different strategy, consisting in the production of many seeds early in the season. A large shift in precipitation can then lead the population to adopt a widely different behavior. The population represented in Figure~\ref{fig:4}(b) originally produces seed continuously over the season; after the shifts, it adopts a different niche, with $X(t)\gg 1$ corresponding to the production of all seeds at once early in the season. These two reproduction strategies can be observed in natural populations. In  \cite{neuffer1999colonization}, the local adaptation of the plant Capsella bursa-pastoris is described; plants adapted to cool and snowy regions the plant produces seeds over an extended period, while plants originating from warm and dry regions produce all seeds at once early in the season.

If both $X(t)$ and $Y(t)$ can evolve, the dynamics depicted in Figure~\ref{fig:4}(a) remains almost the same. The dynamics shown by Figure~\ref{fig:4}(b) also remains very similar when $\bar\sigma_y>0$ is small, although $Y(t)$ evolves slowly to a lower value, bringing $(X(t),Y(t))$ to the \textbf{CESS} corresponding to the new environment, as represented in Figure~\ref{fig:5}(a): we see that $X(t)$ increases  following the large precipitation shift, but after some time, the evolution of $Y(t)$ brings the traits of the population back to the \textbf{CESS}. Note that the final value of $X(t)$ significantly higher than its original value: this dynamics is coherent with the dynamics of Figure~\ref{fig:4}(b). If the coefficient $\bar \sigma_y>0$ is large enough (see Figure~\ref{fig:5}(b)), the rapid dynamics of $Y(t)$ seems to remove the tipping point described above. The dynamics of the population represented in Figure~\ref{fig:5}(a)-(b) is described further in Figure~\ref{fig:7}(b)-(c). We notice that the shift in precipitation levels induces an important water stress, with $W(t)$ dropping suddenly. This water stress implies an important mortality, and the population evolves as described above. 

\medskip

\noindent\textbf{Temperature \& Precipitation shift:} It is also possible to consider mixed senario, where the environmental shift combines an elevation of the temperatures with a drop in precipitations, see Figure~\ref{fig:6}. We observe that the original \textbf{CESS} (in blue) typically crosses the new one (in red). The dynamics of the population after the shift then depends on the initial position of the traits of the population along the original \textbf{CESS} (in blue): if the population has a trait $X(t)$ sufficiently small, it will suffer from mortality after the environmental shift, which will induce a rapid adaptation. If the trait $X(t)$ is large, however, the population will be protected from water stress by the dormancy, and will not adapt rapidly (if $d=0$ as it is the case in Figure~\ref{fig:6}, there is no mortality and therefore no adaptation). The dynamics of the population represented in Figure~\ref{fig:6}(c) is described further in Figure~\ref{fig:7}(d).

\begin{figure}[H]
    \centering
     \centering
     \begin{subfigure}{0.47\textwidth}
         \includegraphics[width=\textwidth]{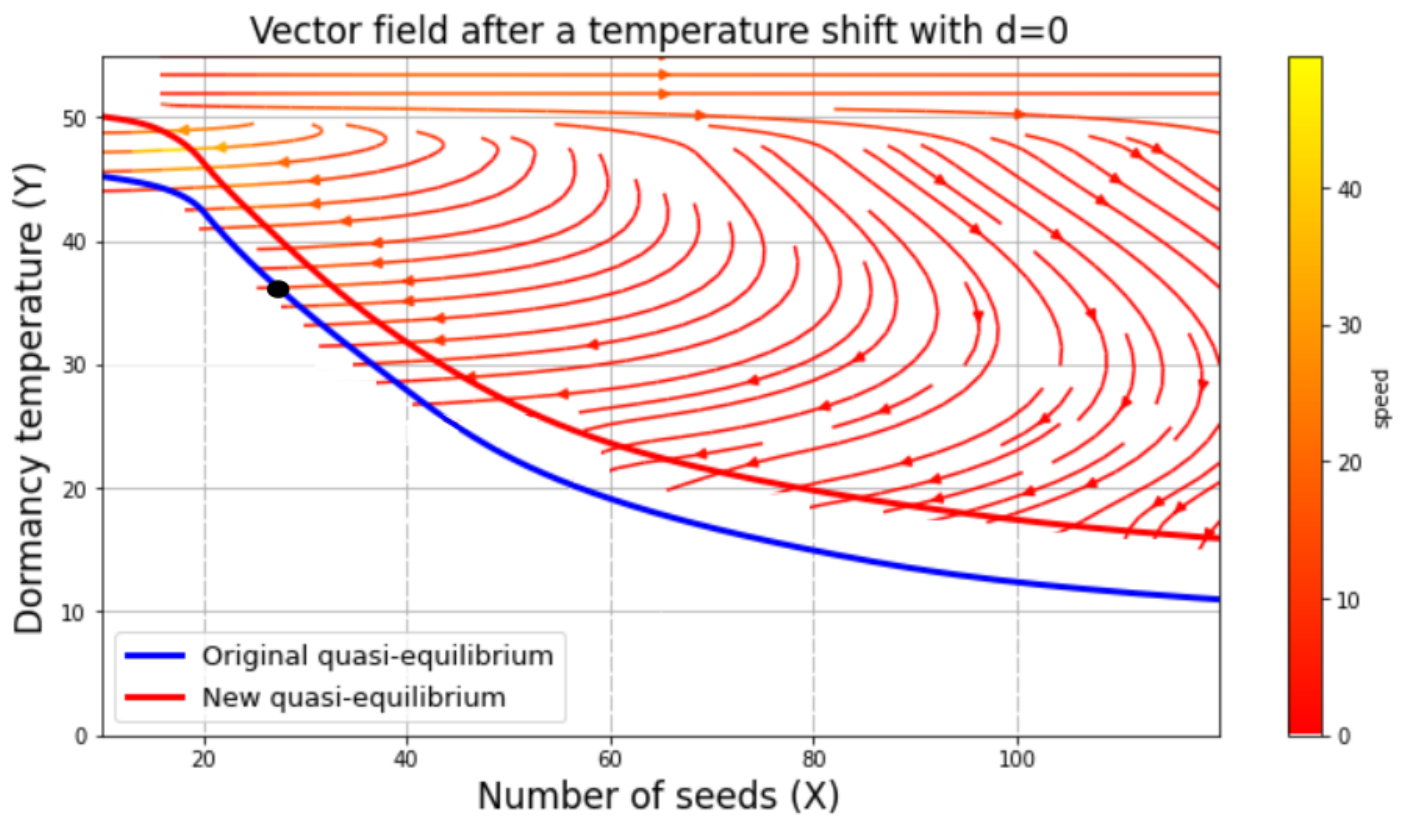}
         \caption{}
    \end{subfigure}\hfill
    \begin{subfigure}{0.47\textwidth}
         \includegraphics[width=\textwidth]{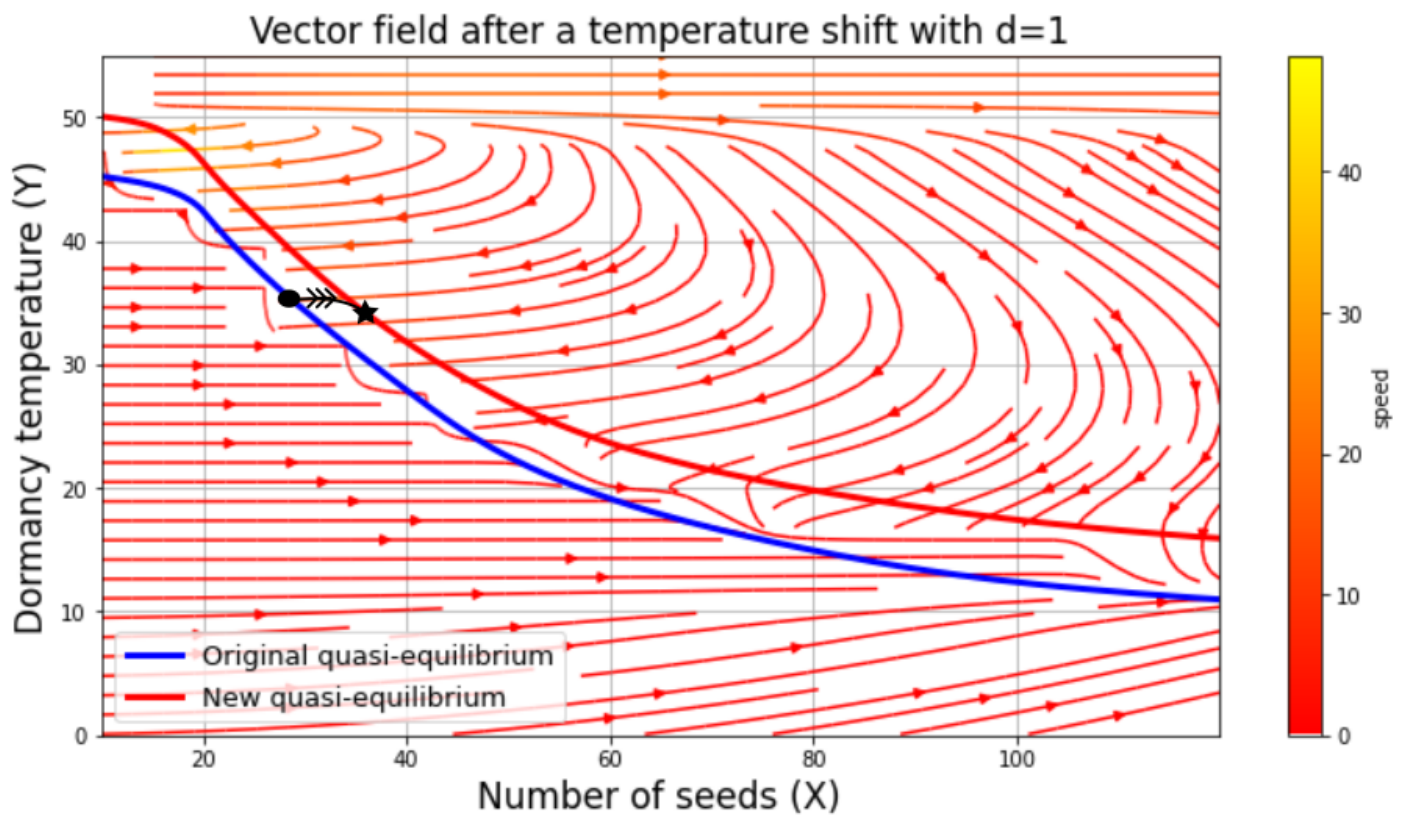}
         \caption{}
    \end{subfigure}
    \caption{Effect of an elevation of the temperatures: (a) when the base death rate is $d=0$; (b) when the base rate is positive but small $d>0$. The blue line represents the \textbf{CESS} before the environmental shift, while the red line and vectors represent the new \textbf{CESS} and the vector field of the differential equation~\eqref{eq:dXY} after the shift. The population trajectory is represented in black where the dot and the star represent initial and final positions respectively.
    \\
    \textit{Parameters:} $\alpha=3, \beta=1, \gamma=2, s_0=0, \sigma_{x}=50, \sigma_y=1 \text{ and } d \in \{0,1\}$.}
    \label{fig:3}
\end{figure}

\begin{figure}[H]
    \centering
     \begin{subfigure}{0.47\textwidth}
         \includegraphics[width=\textwidth]{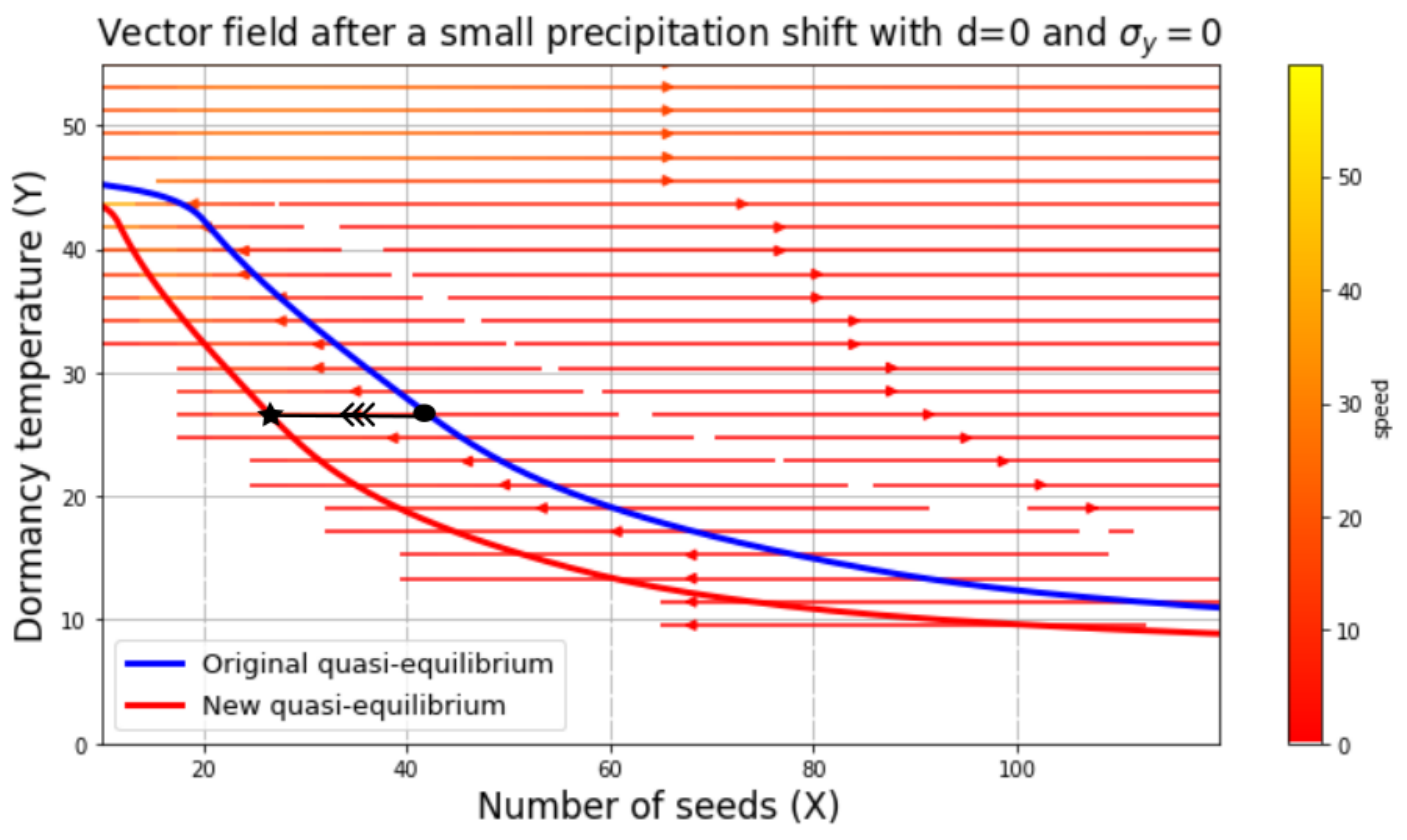}\caption{}
    \end{subfigure}\hfill
    \begin{subfigure}{0.47\textwidth}
         \includegraphics[width=\textwidth]{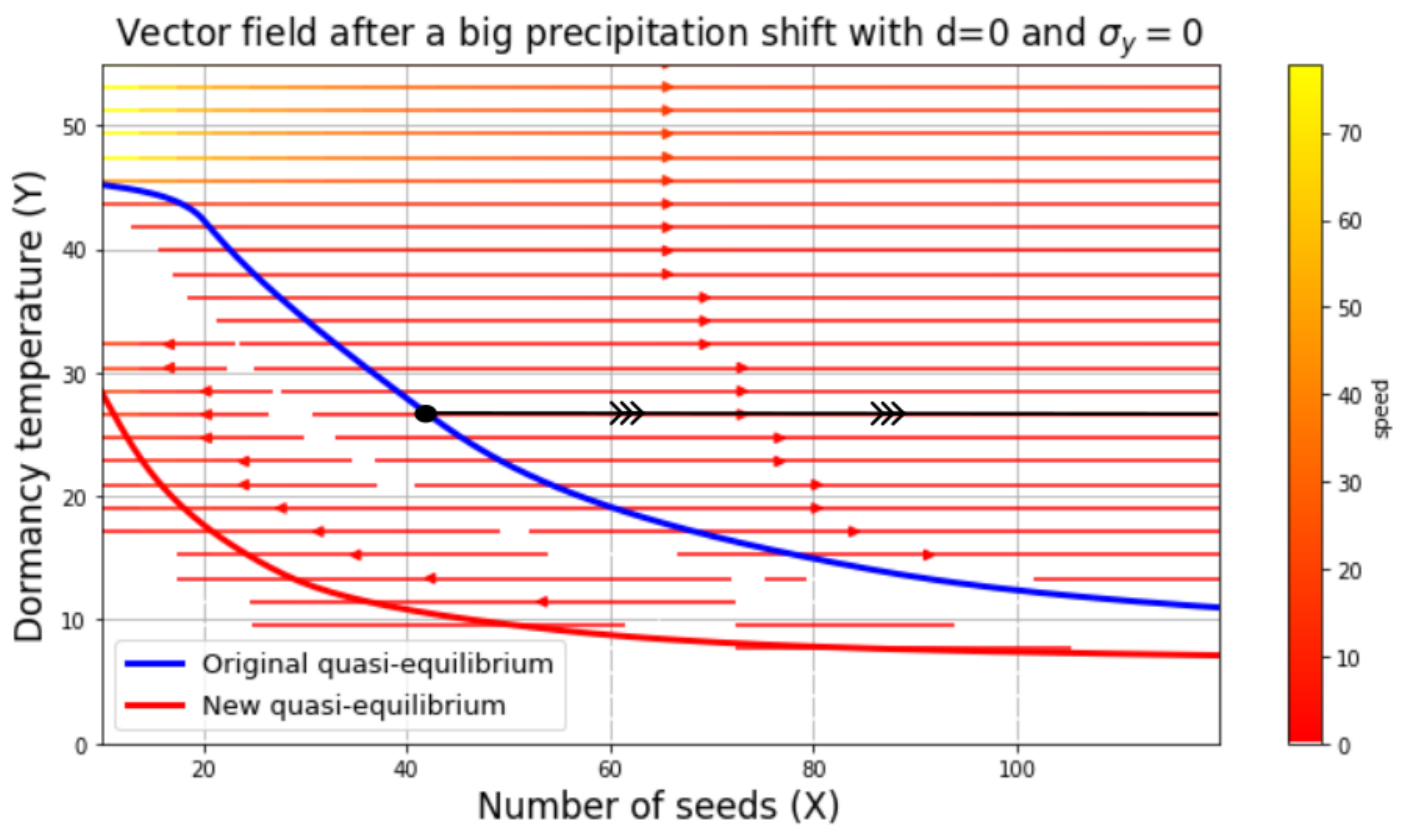}\caption{}
    \end{subfigure}
    \caption{Effect of a drop in precipitation levels, when only $X(t)$ evolves (that is when $\bar \sigma_y=0$): (a) effect of a small to moderate drop in precipitation levels (when $d=0$); (b) effect of a large drop in precipitation levels (when $d=0$).
     The blue line represents the \textbf{CESS} before the environmental shift, while the blue line and vectors represent the new \textbf{CESS} and the vector field of the differential equation~\eqref{eq:dXY} after the shift. The population trajectory is represented in black  where the dot and star represent initial and final positions respectively.
     \\
    \textit{Parameters:} $\alpha=3, \beta=1, \gamma=2, s_0=0, \sigma_{x}=50, \sigma_y=0$.}
    \label{fig:4}
\end{figure}

\begin{figure}[H]
    \centering
     \begin{subfigure}{0.45\textwidth}
         \includegraphics[width=\textwidth]{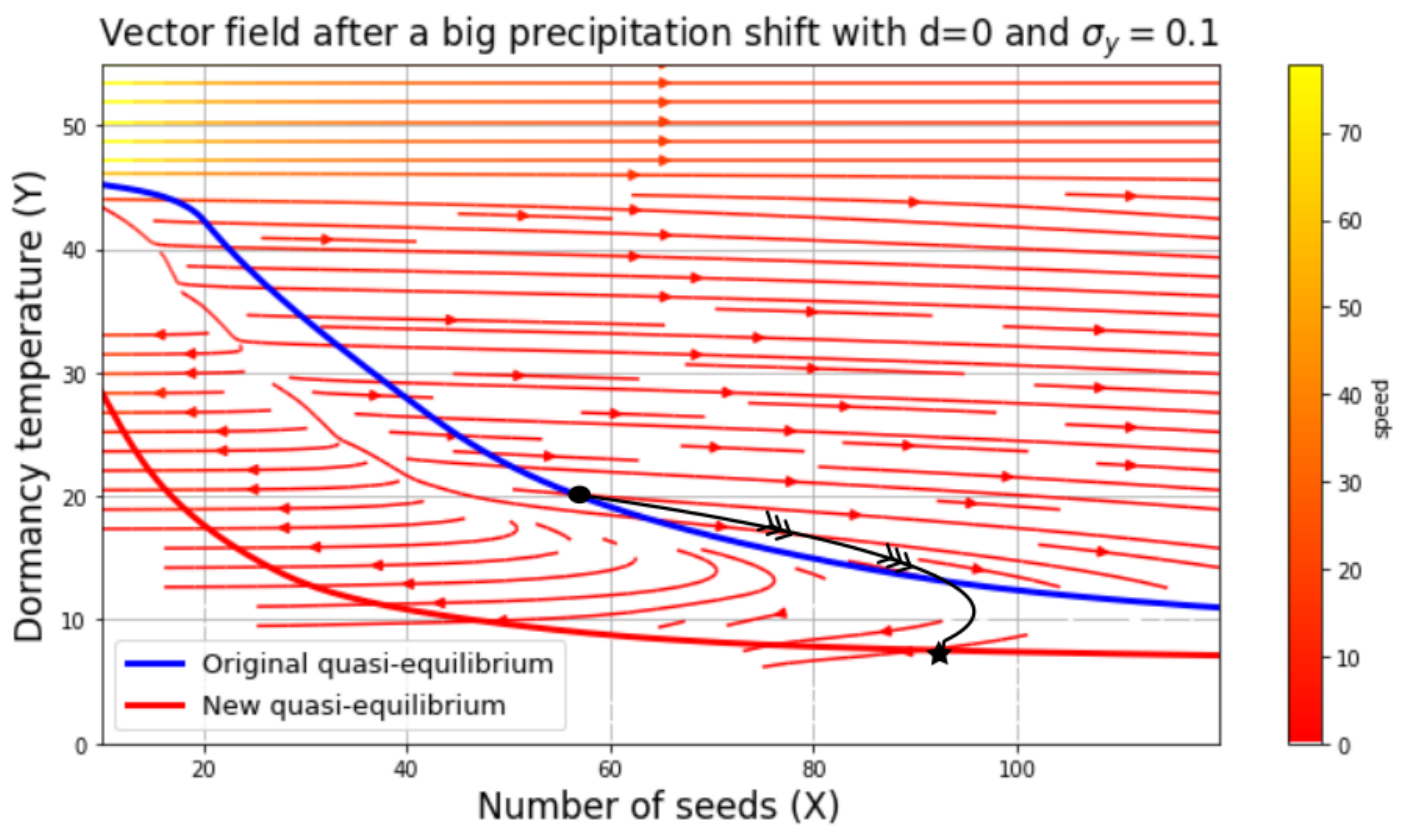}\caption{}
    \end{subfigure}\hfill
    \begin{subfigure}{0.45\textwidth}
         \includegraphics[width=\textwidth]{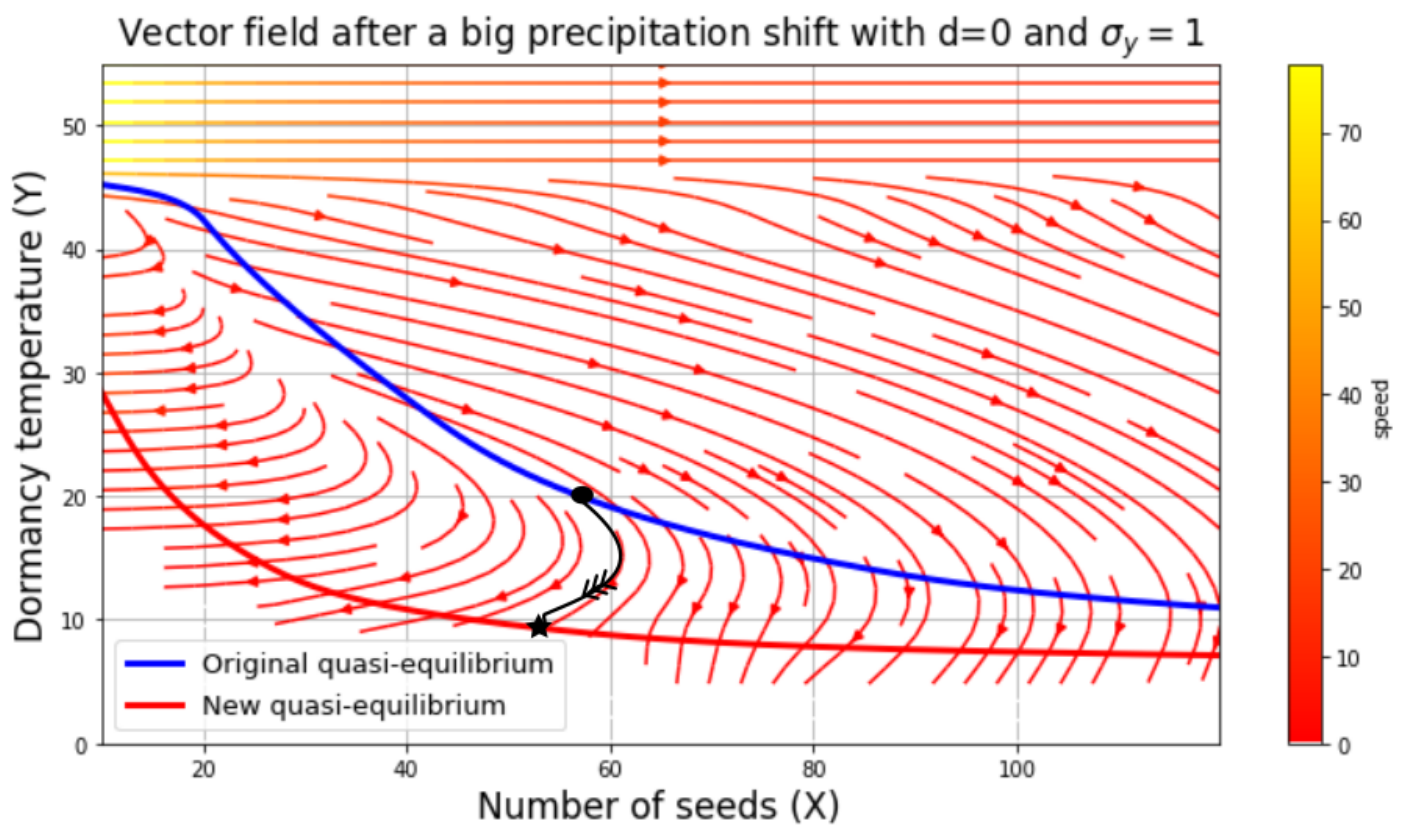}\caption{}
    \end{subfigure}
    \caption{Effect of a drop in precipitation levels, when both $X(t)$ and $Y(t)$ evolve: (a) case where $\bar \sigma_y>0$ is small; (b) case where $\bar\sigma_y>0$ is large.  The blue line represents the \textbf{CESS} before the environmental shift, while the blue line and vectors represent the new \textbf{CESS} and the vector field of the differential equation~\eqref{eq:dXY} after the shift. The population trajectory is represented in black  where the dot and  the star represent initial and final positions respectively..    
     \\
    \textit{Parameters:} $\alpha=3, \beta=1, \gamma=2, s_0=0, \sigma_{x}=50, \sigma_y \in \{0.1,1\} \text{ and } d=0$.}
    \label{fig:5}
\end{figure}

\begin{figure}[H]
    \centering
     \begin{subfigure}{0.45\textwidth}
         \includegraphics[width=\textwidth]{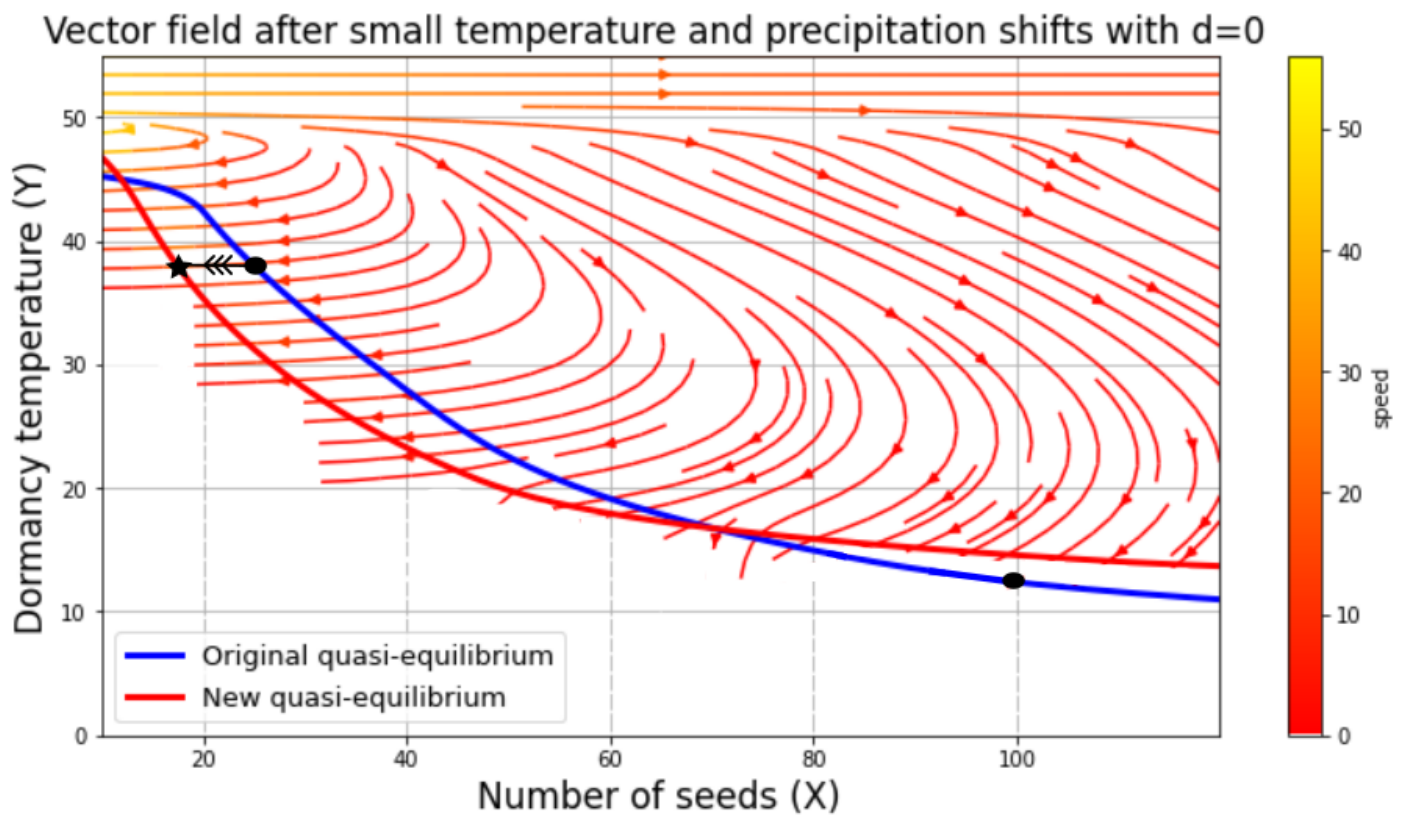}\caption{}
    \end{subfigure}\hfill
     \begin{subfigure}{0.45\textwidth}
         \includegraphics[width=\textwidth]{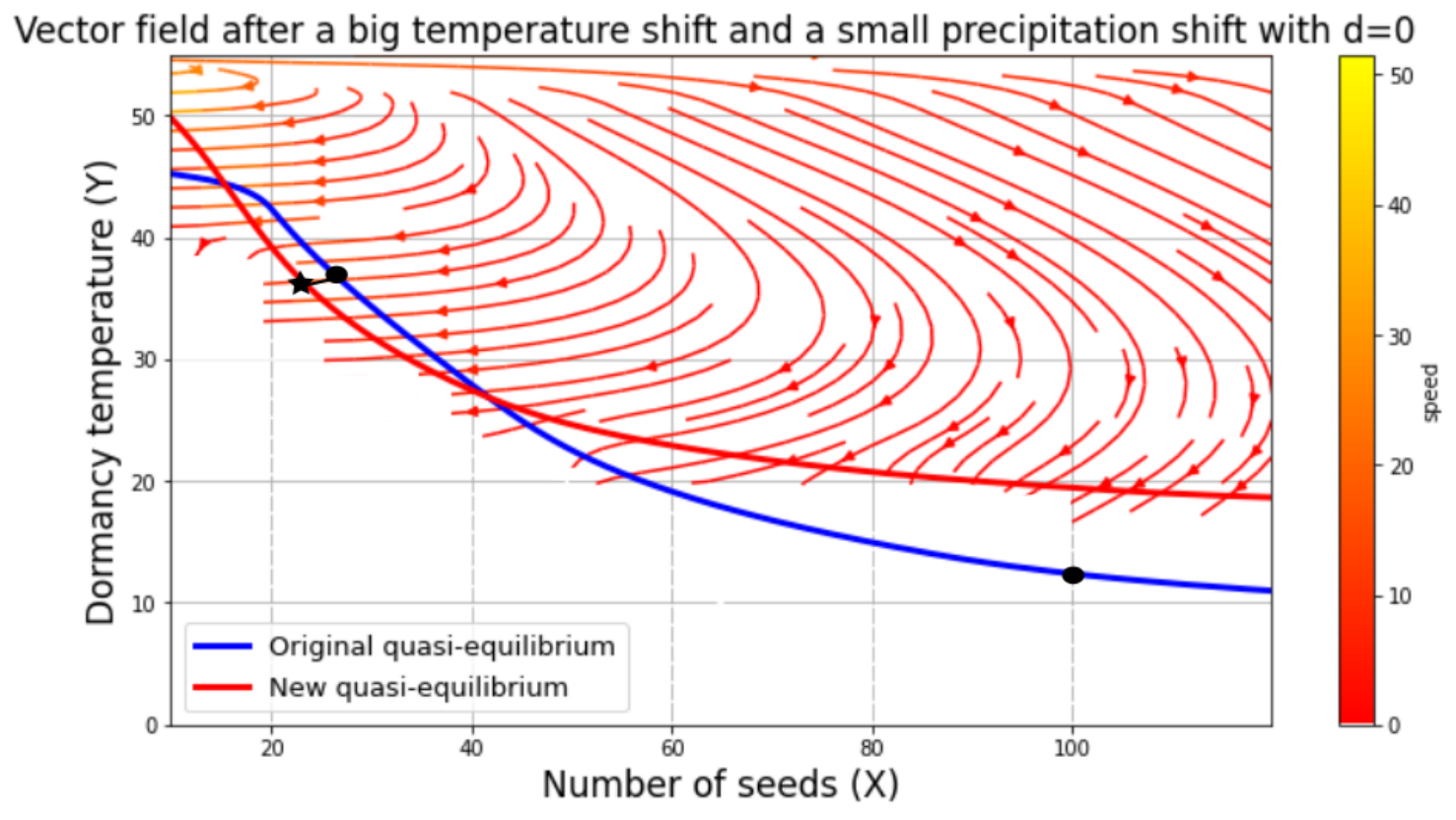}\caption{}
    \end{subfigure}\hfill
     \begin{subfigure}{0.45\textwidth}
         \includegraphics[width=\textwidth]{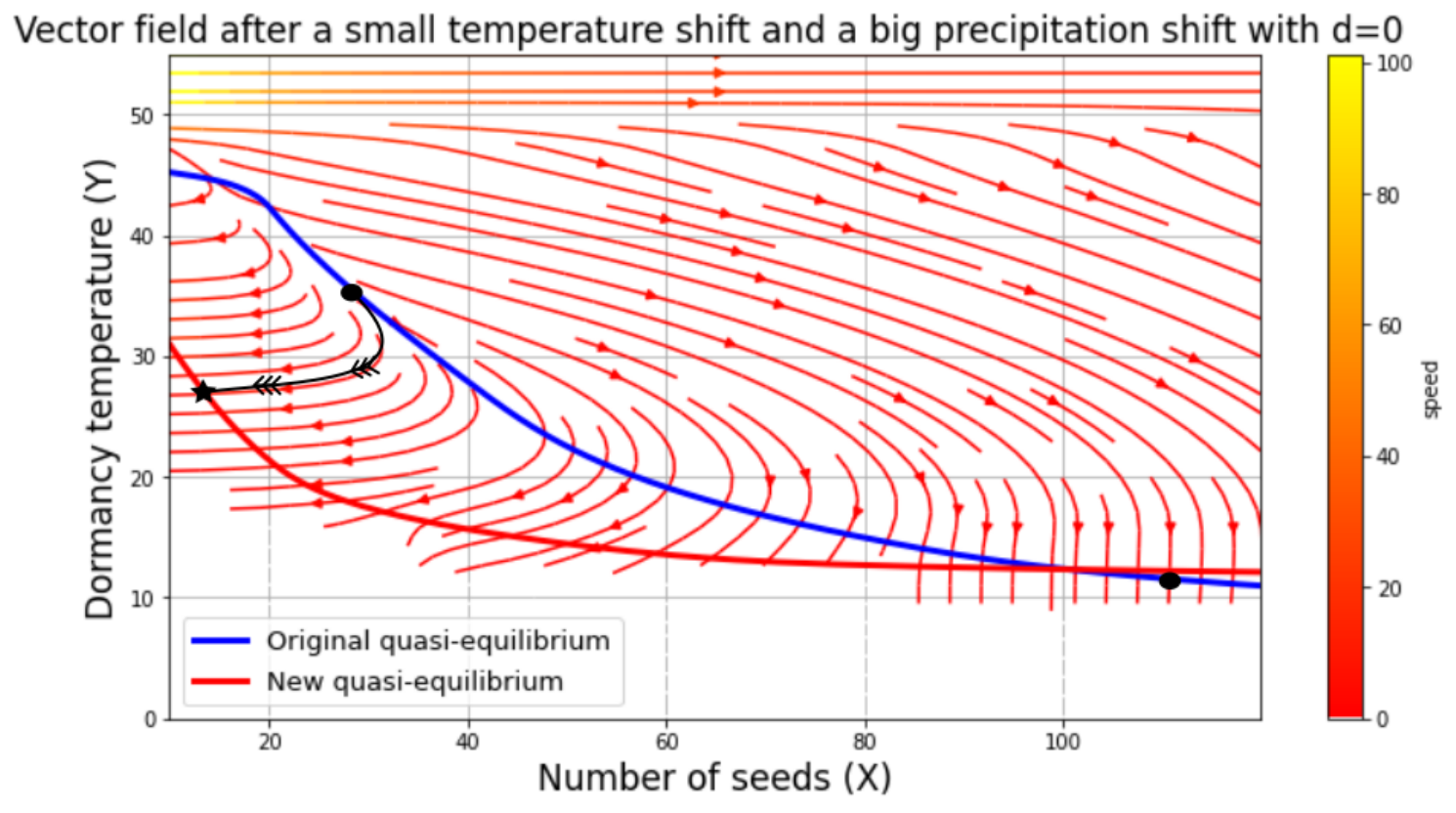}\caption{}
    \end{subfigure}\hfill
     \begin{subfigure}{0.45\textwidth}
         \includegraphics[width=\textwidth]{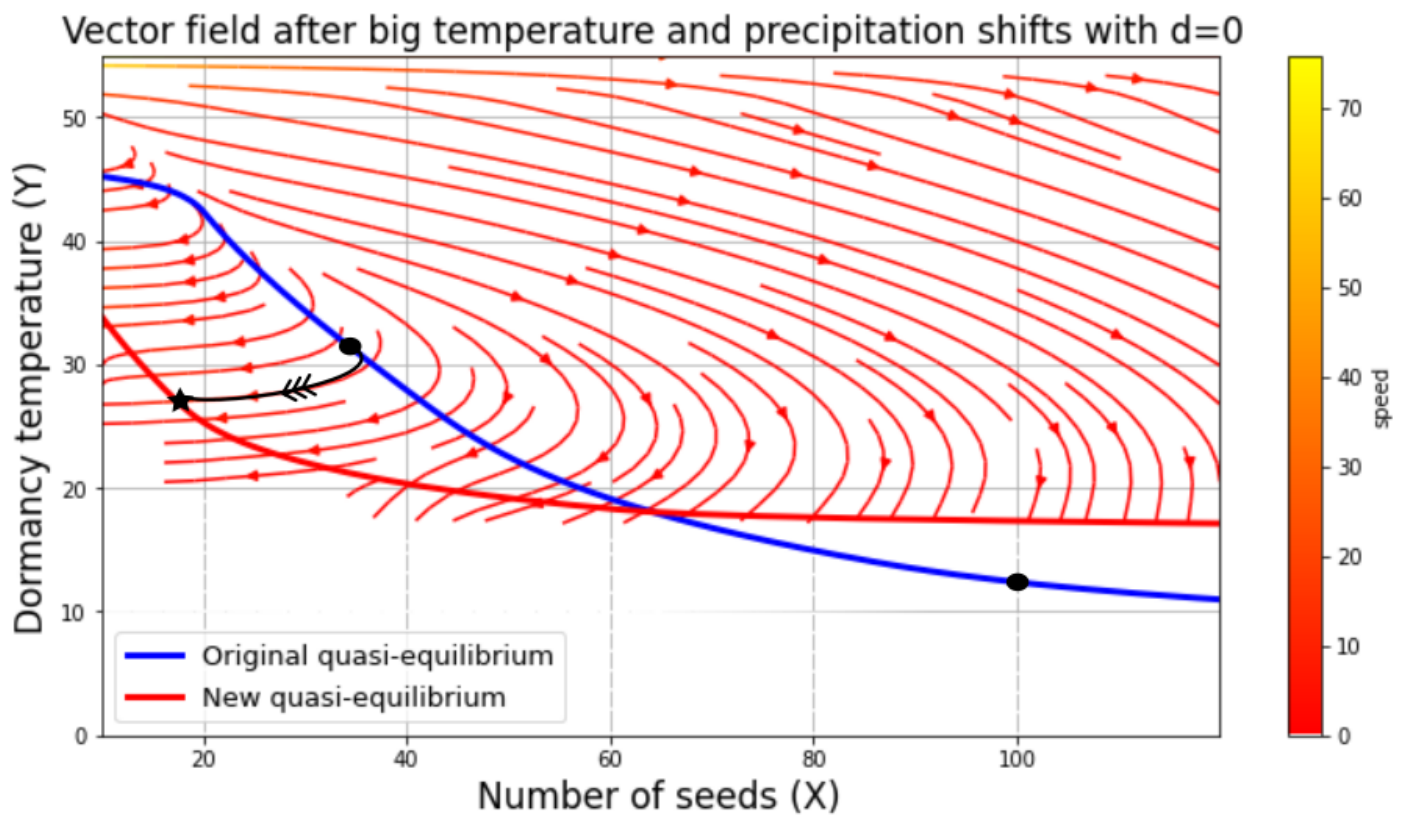}\caption{}
    \end{subfigure}\hfill
    \caption{Effect of environmental change that combines a temperature elevation and a drop in precipitation levels, with a base mortality $d=0$. (a) Case where both the drop in precipitation levels and the elevation of temperatures are small; (b) Case where the drop in precipitation levels is small, but the temperature shift is large; (c) Case where the drop in precipitation levels is large, but the elevation in temperatures is small; (d) Case where both the drop in precipitation levels and the elevation of temperatures are large.  The blue line represents the \textbf{CESS} before the environmental shift, while the blue line and vectors represent the new \textbf{CESS} and the vector field of the differential equation~\eqref{eq:dXY} after the shift. The population trajectory is represented in black.    
      \\
    \textit{Parameters:} $\alpha=3, \beta=1, \gamma=2, s_0=0, \sigma_{x}=50, \sigma_y=1 \text{ and } d=0$.}
    \label{fig:6}
\end{figure}

\begin{figure}[H]
    \centering
     \begin{subfigure}{1\textwidth}
         \includegraphics[width=\textwidth]{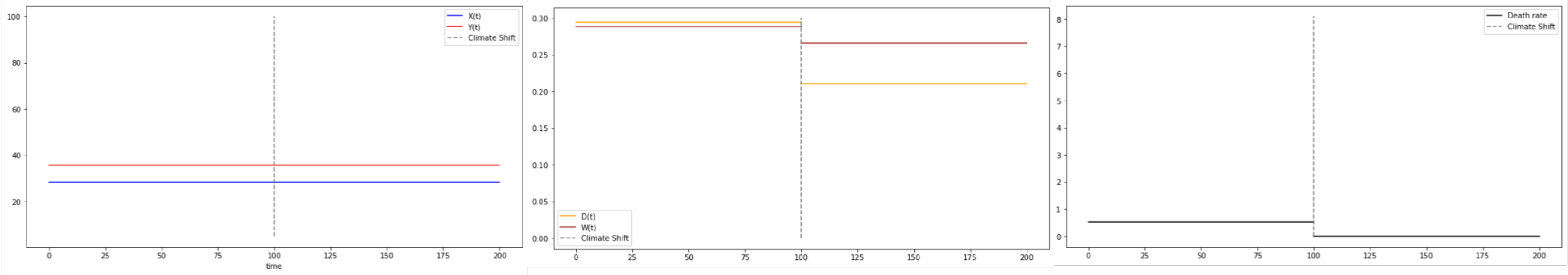}\caption{}
    \end{subfigure}\hfill
     \centering
     \begin{subfigure}{1\textwidth}
         \includegraphics[width=\textwidth]{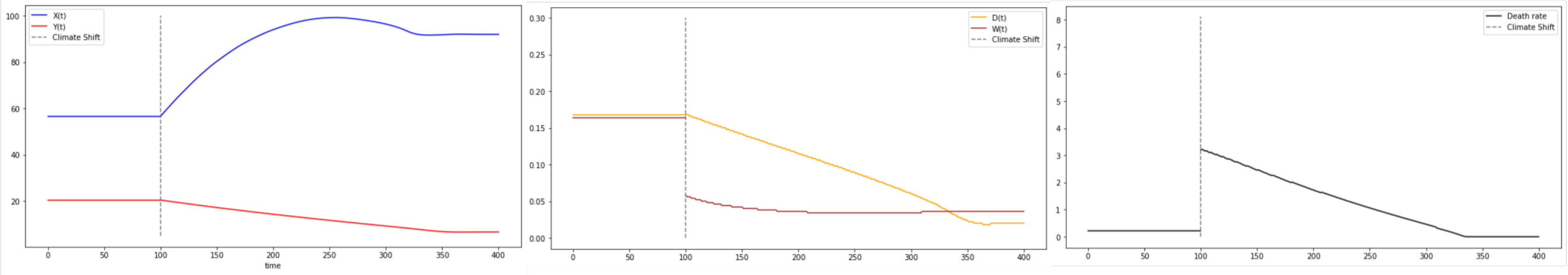}\caption{}
    \end{subfigure}\hfill
     \centering
     \begin{subfigure}{1\textwidth}
         \includegraphics[width=\textwidth]{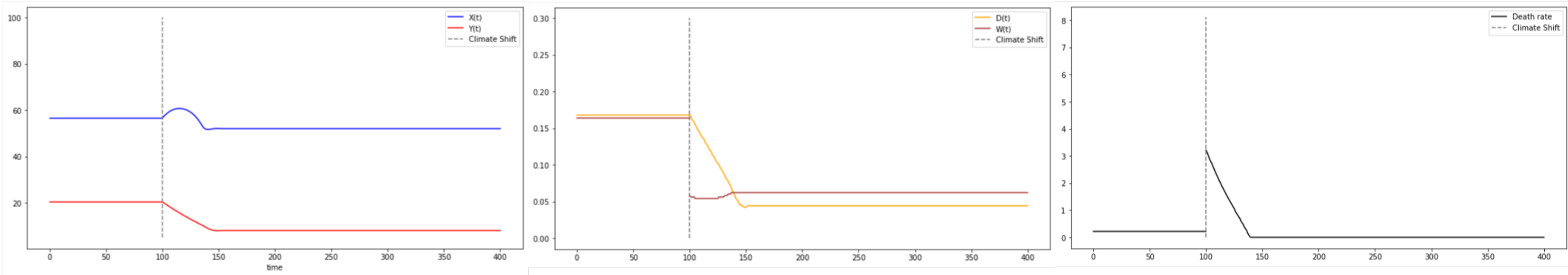}\caption{}
    \end{subfigure}\hfill
     \centering
     \begin{subfigure}{1\textwidth}
         \includegraphics[width=\textwidth]{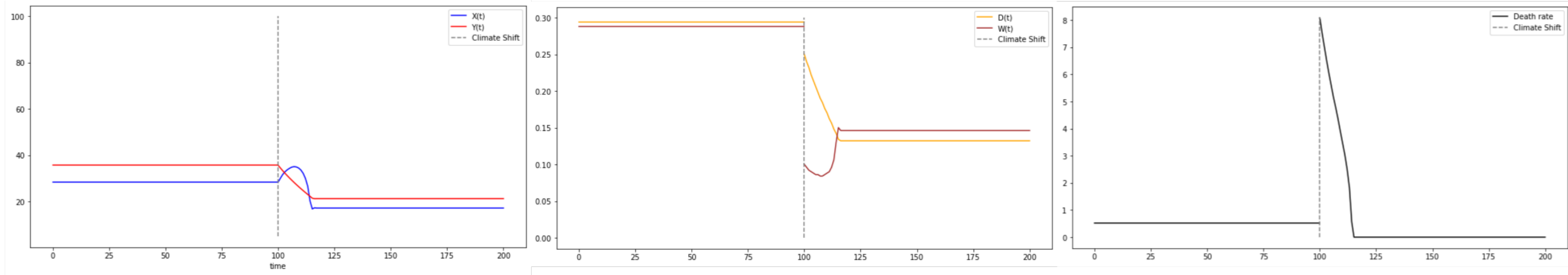}\caption{}
    \end{subfigure}\hfill
     \centering
    \caption{Temporal evolution of the phenotypic traits $X(t)$ and $Y(t)$ before and after climate shifts (left), the seasonal time when individuals become dormant $D(t)$ and the onset of water stress $W(t)$, as defined by \eqref{def:Dt} and \eqref{Def:Wt} (center), and the mortality rate $M(t)$ (right). (a) corresponds to Figure~\ref{fig:3}(a);(b) corresponds to Figure~\ref{fig:5}(a); (c) corresponds Figure~\ref{fig:5}(b);  (d) corresponds to Figure~\ref{fig:6}(c).}
    \label{fig:7}
\end{figure}

\section{Discussion}

In this manuscript, we have introduced a model for a tree population structured by two phenotypic traits: a seed production trait and a dormancy trait. Those two traits are breeding values, ie they are fully inherited, and we assume that the population is structured by these traits. Two phenological traits, namely the onset of dormancy and the onset of water stress, result from the breeding value we consider and the environmental conditions, and these phenological traits are then plastic. Furthermore, in some studies, the activity status of an individual, ie active or dormant, can itself be considered as discrete phenotypic trait (\cite{reid2022properties}). Our model can therefore be used to investigate the role of adaptative and plastic evolution in the context of a climate shift. Our model suggest that the plasticity of the onset of dormancy can lower the mortality of the population after an environmental shift, but it can also slow down the adaptation of the population. The phenology of the trees impacts their production of seeds (the production stops when they become dormant), as well as the phenotypes of the seeds produced by the population (dormant trees do not produce seeds nor pollen). We have tried to consider environmental factors and phenological traits that are commonly considered in field studies, to facilitate connection with other approaches in continuations of this work. {For instance, in \cite{richards2020quantitative}, a field study on the phenological phenotypes associated to winter dormancy was conducted, showing how measures of these phenotypes can be used to understand quantitatively their heritability and how they are influenced by meteorological conditions. }  \\

We have introduced two successive asymptotic limits to simplify our complex initial model (see e.g. \cite{auger2008structured} for related structured population models). The first one is based on the assumption that the yearly death rate of adult trees is low, we believe this is very reasonable for many tree species (\cite{das2016trees}). Note that to use this asymptotic limit, we have assumed that the weather conditions change slowly, but homogenization methods could provide a way to consider more realistic climate conditions. Typically, climate change could induce precipitation levels that are less stable from one year to another. In the amazonian forests, for instance, droughts lasting several years or abnormally intense precipitations are already threatening tree populations (\cite{flores2024critical,guyane}).  The second asymptotic limit we have used is based on the assumption that at the phenotypic variance of the population is small, which is similar to to the weak selection assumption (\cite{wakeley2005limits}) commonly used in population genetics, and it is related to assumption that the phenotypic variance of a population is constant (\cite{kirkpatrick1997evolution}). This limit is a powerful approach to obtain simple macroscopic models. This assumption is well established in the evolutionary biology community, and the mathematical aspects of this asymptotic limit are rapidly strengthening (\cite{raoul2017macroscopic,calvez2023uniform}). We believe specific models such as the one considered here could bring interesting new mathematical questions. It would be interesting to understand how macroscopic models should be written to take into account for the complex processes that are relevant for climate change problems.\\

Coming back to the case of amazonian forests, the lifespan of canopy trees is of the order of hundreds of years (\cite{laurance2004inferred}), so that climate change is a rapid phenomenon on that time scale. This means it would be pertinent to consider a rapid shift of the environmental conditions for our continuous structured population model. In this context, it is not clear that the second asymptotic limit we have used is biologically relevant. That second limit relies on the assumption that the phenotypic variance of the population is small. There is however an important phenotypic diversity observed in natural populations of trees (\cite{kremer2020oaks}), and this phenotypic diversity, coupled with adults tolerance to harsh meteorological conditions and to plasticity could allow the populations to evolve on a faster scale that what is described by the approach introduced in this manuscript. An analysis of the dynamics of the continuous structured population model when the phenotypic variance of the population is not small would be a great complement to this manuscript, and could open a fruitful discussion about the effect of plasticity and evolution in tree populations facing the current climate change. \\

Under the assumption of small phenotypic variance and Gaussian approximation of the population, the macroscopic model we obtain after the two asymptotic limits is a system of two coupled differential equations. The solutions can then be represented in a simple phase plane, and we took advantage of the simplicity of this final model to describe its evolutionary dynamics thoroughly, as well as its ecological outcomes. Note that the approach we have developed could be adapted to describe other phenological traits, such as winter dormancy (that can be driven by the photoperiod or by a threshold temperature, both of which can be adaptative) or situations where adult trees are more plastic and tolerant than young ones. We have also used neglected the detailed age structure of the population \cite{magal2018theory}, as well as their spatial structure \cite{cantrell2010spatial}, in order to obtain a simpler model. Finally, the assumption of a population of constant size that we made can be modified to observe ecological consequences of environmental effects, beyond the mortality rate that we consider as an output of the model here. We believe these macroscopic models, thanks to their simplicity, could be interesting tools to investigate the effect of different forest management practices.

\section*{Acknowledgements}
The authors were partially funded by the Chair Modélisation Mathématique et Biodiversité of Veolia - Ecole polytechnique - Museum national d’Histoire naturelle - Fondation X.
The second author acknowledges support from the ANR under grant DEEV: ANR-20-CE40-0011-01 and from the European Union (ERC-Adg SINGER, 101054787).

\section{Data Availability}
The codes to reproduce the figures of this article are available at https://github.com/SirineBoucenna/Plasticity-model.git

\begin{appendices}

\section{Preliminary technical lemma}

We define the application $\mathcal T$ below that will be useful to understand the effect of the sexual reproduction term in \eqref{model:m} and \eqref{model:n}:
\begin{align}
    &\mathcal T(f,g)(x,y)\nonumber\\
    &\quad =\iint_{\mathbb R^2}\iint_{\mathbb R^2}  \Gamma_{\sigma_x^2}\left(x-\frac{x_{1}+x_{2}}{2}\right)\Gamma_{\sigma_y^2}\left(y-\frac{y_{1}+y_{2}}{2}\right) x_1^+f(x_1,y_1)g(x_2,y_2)\,dx_1\,dy_1\,dx_2\,dy_2.\label{def:T}
\end{align}
$\mathcal T$ is actually a quadratic operator over $L^1((1+e^x)\,dx\,dy)$, as shown by the following proposition:
\begin{prop}\label{prop:operatorT}
    The non-linear operator $\mathcal T$ is defined on $L^1((1+e^x)\,dx\,dy)$ and there is $C>0$ such that for $f,g\in L^1((1+e^x)\,dx\,dy)$,
\[\|\mathcal T(f,g)\|_{L^1((1+e^x)\,dx\,dy)}\leq C\|f\|_{L^1((1+e^x)\,dx\,dy)}\|g\|_{L^1((1+e^x)\,dx\,dy)}.\]
\end{prop}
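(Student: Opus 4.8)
The plan is to bound $\|\mathcal T(f,g)\|_{L^1((1+e^x)\,dx\,dy)}$ directly. Writing $\|\cdot\|$ for the weighted norm $\|\cdot\|_{L^1((1+e^x)\,dx\,dy)}$, I would first note that the kernel and the factor $x_1^+$ are nonnegative, so $|\mathcal T(f,g)|\le\mathcal T(|f|,|g|)$ pointwise and it suffices to treat $|f|,|g|$; every integrand is then nonnegative and Tonelli's theorem applies. The key idea is to carry out the integration against the weight $(1+e^x)\,dx\,dy$ in the \emph{output} variables $(x,y)$ before integrating in the parental variables $(x_1,y_1,x_2,y_2)$. Since the two Gaussians are probability densities in $(x,y)$, the $y$-integral always contributes a factor $1$, while the $x$-integral splits according to the weight: the constant part contributes $\int \Gamma_{\sigma_x^2}(x-\tfrac{x_1+x_2}{2})\,dx=1$, whereas the part carrying $e^x$ produces the Gaussian moment generating function $\int e^x \Gamma_{\sigma_x^2}(x-\tfrac{x_1+x_2}{2})\,dx=e^{\frac{x_1+x_2}{2}+\sigma_x^2/2}$. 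This is the only genuinely computational step, and it is where the weight $e^x$ interacts with the reproduction kernel.

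After this integration, the estimate reduces to controlling
\[
\iint_{\mathbb R^2}\iint_{\mathbb R^2}\Big(1+e^{\sigma_x^2/2}\,e^{\frac{x_1+x_2}{2}}\Big)\,x_1^+\,|f(x_1,y_1)|\,|g(x_2,y_2)|\,dx_1\,dy_1\,dx_2\,dy_2,
\]
which factorizes into a product of integrals in the $f$-variables and the $g$-variables. It then remains to bound each factor by the corresponding weighted norm, for which I would use the elementary pointwise inequalities $x^+\le 1+e^x$, $e^{x/2}\le 1+e^x$, and $x^+e^{x/2}\le C(1+e^x)$ (the last because $x\,e^{-x/2}$ is bounded on $[0,\infty)$), valid for all $x\in\mathbb R$. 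These give $\iint x_1^+|f|\le\|f\|$, $\iint e^{x_1/2}x_1^+|f|\le C\|f\|$, $\iint|g|\le\|g\|$ and $\iint e^{x_2/2}|g|\le\|g\|$, so the whole expression is bounded by $(1+Ce^{\sigma_x^2/2})\|f\|\,\|g\|$, which is the claimed estimate.

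I expect no serious obstacle: the argument is a Tonelli computation followed by elementary moment bounds. The one point that deserves care is the appearance of the half-weight $e^{x/2}$ coming from the factor $e^{\frac{x_1+x_2}{2}}$: one must verify that $e^{x/2}$, and more delicately $x^+e^{x/2}$, are still dominated by $1+e^x$, so that the weighted norm on the right-hand side indeed controls these moments. This is precisely why the weight is taken to be $1+e^x$ rather than, say, $1+|x|$: the Gaussian convolution shifts the exponential weight by the bounded multiplicative constant $e^{\sigma_x^2/2}$ and halves its rate, both of which are absorbed by $1+e^x$, and it is this compatibility that makes the quadratic estimate possible.
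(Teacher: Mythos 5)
Your proof is correct and follows essentially the same route as the paper's: integrate the output variables against the weight first, use the unit Gaussian mass for the constant part and the moment generating function $e^{\frac{x_1+x_2}{2}+\sigma_x^2/2}$ for the exponential part, then absorb $x_1^+e^{x_1/2}$ and $e^{x_2/2}$ into the weight $1+e^x$. The only cosmetic difference is that the paper handles the unweighted part by rewriting $\mathcal T(f,g)$ as an iterated convolution and invoking Young's inequality, whereas you integrate out $(x,y)$ directly by Tonelli; both reduce to the same elementary moment bounds.
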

\begin{remark}
    This Proposition shows that $\mathcal T$ is well defined as an operator on $L^1((1+e^x)\,dx\,dy)$. This estimate allows us to define global solution of the annual structured population model \eqref{model:m}, see Theorem~\ref{thm:m-existence}. Since this is a quadratic estimate rather than a linear estimate, it  could only lead to a local existence result for the continuous structured population model~\eqref{model:n}. To get around this limitation and prove Theorem~\ref{thm:n-existence}, we will take advantage of the fact that the total number of seeds produced by an individual tree is limited by the available water (see \eqref{est:x1rho}), which is a property specific to the model we consider in this manuscript. This will allow us to use a more classical $L^1$ estimate (without weight) on the reproduction term, and the conservation of the total population mass \eqref{eq:mp1} can then be used to obtain a linear estimate on the reproduction term. 
\end{remark}

\begin{proof}[Proof of Proposition~\ref{prop:operatorT}]
For $f,g\in L^1((1+e^x)\,dx\,dy)$, thanks to a change of variable, we can write
\begin{align*}
    &\mathcal T(f,g)(x,y)\\
    &\quad=\iint_{\mathbb R^2}  \Gamma_{\sigma_x^2}\left(\frac12\left(2x-X_{1}\right)\right)\Gamma_{\sigma_y^2}\left(\frac 12\left(2y-Y_{1}\right)\right) \\
    &\phantom{\quad=\iint_{\mathbb R^2}  } \left[\iint_{\mathbb R^2} (X_1-x_2)^+f(X_1-x_2,Y_1-y_2)g(x_2,y_2)\,dx_2\,dy_2\right]\,dX_1\,dY_1,
\end{align*}
and using $*$ to represent convolutions over $\mathbb R^2$, we notice that $\mathcal T(f,g)$ can be seen as a convolution (see \cite{mirrahimi2013dynamics} for a similar idea):
\begin{align*}
    \mathcal T(f,g)&= \left[\left((x,y)\mapsto \Gamma_{\sigma_x^2}\left(\frac x 2\right)\Gamma_{\sigma_y^2}\left(\frac y 2\right)\right)\ast \big(\left((x,y)\mapsto xf(x,y)\right)\ast g\big)\right](2x,2y).
\end{align*}
Therefore
\begin{align*}
    \iint_{\mathbb R^2} \mathcal T(f,g)(x,y)\,dx\,dy&\leq \left\|(x,y)\mapsto \Gamma_{\sigma_x^2}\left(\frac x 2\right)\Gamma_{\sigma_y^2}\left(\frac y 2\right)\right\|_{L^1(\mathbb R^2)}\|(x,y)\mapsto x\,f(x,y)\|_{L^1(\mathbb R^2)}\|g\|_{L^1(\mathbb R^2)}\\
    &\leq 4\|f\|_{L^1((1+e^x)\,dx\,dy)}\|g\|_{L^1((1+e^x)\,dx\,dy)},
\end{align*}
and in particular $\mathcal T(f,g)\in L^1(\mathbb R^2)$.
Moreover,
\begin{align*}
    &\iint_{\mathbb R^2} e^{x}\mathcal T(f,g)(x,y)\,dx\,dy=\iint_{\mathbb R^2}\iint_{\mathbb R^2}  \left(\int e^{x}\Gamma_{\sigma_x^2}\left(x-\frac{x_{1}+x_{2}}{2}\right)\,dx\right) x_1^+\,g(x_1,y_1)f(x_2,y_2)\,dx_1\,dy_1\,dx_2\,dy_2\nonumber\\
    &\quad \leq C\iint_{\mathbb R^2}\iint_{\mathbb R^2}  \left(x_1^+\int_{\mathbb R} e^{x-\frac 1{2\sigma_x^2}\left(x-\frac{x_{1}+x_{2}}{2}\right)^2}\,dx\right) g(x_1,y_1)f(x_2,y_2)\,dx_1\,dy_1\,dx_2\,dy_2,
\end{align*}
where $C>0$ designates a universal constant that can change from one line to the next. Thanks to a change of variable $\tilde x:=\frac 1{\sqrt 2\sigma_x}\left(x-\frac{x_1+x_2}2\right)$,
\begin{align*}
    &\iint_{\mathbb R^2} e^{x}\mathcal T(f,g)(x,y)\,dx\,dy \leq C\iint_{\mathbb R^2}\iint_{\mathbb R^2}  \left(x_1^+\int_{\mathbb R} e^{\left(\tilde x+\frac{x_{1}+x_{2}}{2}\right)-\frac {\tilde x^2}{2\sigma_x^2}}\,d\tilde x\right) g(x_1,y_1)f(x_2,y_2)\,dx_1\,dy_1\,dx_2\,dy_2\nonumber\\
    &\quad \leq C\iint_{\mathbb R^2}\iint_{\mathbb R^2}  \left(x_1^+e^{\frac{x_{1}+x_{2}}{2}}\int_{\mathbb R} e^{\tilde x-\frac {\tilde x^2}{2\sigma_x^2}}\,d\tilde x\right) g(x_1,y_1)f(x_2,y_2)\,dx_1\,dy_1\,dx_2\,dy_2\nonumber\\
    &\quad \leq C\left(\iint_{\mathbb R^2}  x_1^+e^{\frac {x_1}2} g(x_1,y_1)\,dx_1\,dy_1\right)\left(\iint_{\mathbb R^2} e^{\frac {x_2}2}f(x_2,y_2)\,dx_2\,dy_2\right)\leq C\|g\|_{L^1((1+e^x)\,dx\,dy)}\|f\|_{L^1((1+e^x)\,dx\,dy)}.
\end{align*}

\end{proof}
We prove next the Lipschitz continuity of $t\mapsto \bar \phi^{water}(t,\cdot,x)$ for the total variation norm $\|\cdot\|_{TV}$, which we define as 
\[\|u-v\|_{TV}:=\int_0^1|u(\tau)-v(\tau)|\,d\tau,\]
for $u,v\in L^1([0,1])$). We also compare $\bar \phi^{water}(t,\cdot,x)$ and $\phi^{water}_{\lfloor t/\varepsilon\rfloor}(\cdot,x)$. These estimates will be used in Section~\ref{subsec:asymptoticsepsilon} to prove Theorem~\ref{thm:n-to-m} connecting the annual structured model and the continuous structured model.
\begin{lemma}\label{lem:phiwater}
    There is $C>0$ such that for $t,t'\geq 0$, $x,x'\in\mathbb R$ and $\varepsilon>0$,
    \[\left\|\bar \phi^{water}(t,\cdot,x)-\bar \phi^{water}(t',\cdot,x')\right\|_{TV}=\int_0^1\left|\bar \phi^{water}(t,\tau,x)-\bar \phi^{water}(t',\tau,x')\right|\,d\tau\leq C|x-x'|+C|t-t'|.\]
    \[\left\|\bar \phi^{water}(t,\cdot,x)-\phi^{water}_{\lfloor t/\varepsilon\rfloor}(\cdot,x)\right\|_{TV}=\int_0^1\left|\bar \phi^{water}(t,\tau,x)- \phi^{water}_{\lfloor t/\varepsilon\rfloor}(\tau,x)\right|\,d\tau\leq C\varepsilon.\]
\end{lemma}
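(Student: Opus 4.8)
The plan is to exploit the fact that both $\bar\phi^{water}(t,\cdot,x)$ and $\phi^{water}_{\lfloor t/\varepsilon\rfloor}(\cdot,x)$ are indicator functions of sublevel sets of a profile that is increasing in $s$ with a slope bounded below. I introduce
$$\bar h(t,s,x):=(1+\alpha x^+)s+\beta\int_0^s(\bar T(t,\tau))^+\,d\tau-\bar P(t),$$
so that $\bar\phi^{water}(t,s,x)=1$ exactly when $\bar h(t,s,x)\le 0$, and the analogous profile $h_k(s,x):=(1+\alpha x^+)s+\beta\int_0^s(T_k(\tau))^+\,d\tau-P_k$ for the annual model. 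The decisive structural property, used throughout, is that for every $s\in[0,1]$
$$\partial_s\bar h(t,s,x)=(1+\alpha x^+)+\beta(\bar T(t,s))^+\ge 1,$$
and likewise $\partial_s h_k\ge 1$; each profile is thus strictly increasing in $s$ with slope at least $1$.

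First I would prove an elementary \emph{band estimate} reducing both inequalities to a uniform-in-$s$ control of the profile. If $a=a(s)$ and $b=b(s)$ satisfy $\sup_{s}|a(s)-b(s)|\le\delta$, then $\mathbf 1_{\{a\le 0\}}$ and $\mathbf 1_{\{b\le 0\}}$ can differ at $s$ only when $a(s)$ and $b(s)$ straddle $0$, which forces $|a(s)|\le\delta$; hence their disagreement set lies in $\{s\in[0,1]:\ |a(s)|\le\delta\}$. Taking $a=\bar h(t,\cdot,x)$, which is increasing with $\partial_s a\ge 1$, if $a(s_1)=-\delta$ and $a(s_2)=\delta$ with $s_1<s_2$ then
$$2\delta=\bar h(t,s_2,x)-\bar h(t,s_1,x)\ge s_2-s_1,$$
so this set has Lebesgue measure at most $2\delta$. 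Since the total variation norm of the difference of two $\{0,1\}$-valued functions equals the measure of their disagreement set, I get in both cases a bound of the form $\|\cdot\|_{TV}\le 2\delta$, where $\delta$ is the uniform gap between $\bar h(t,\cdot,x)$ and the profile of the competing function.

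It then remains to control that gap. For the first estimate I split $\bar h(t,s,x)-\bar h(t',s,x')$ into the $x$-variation $\alpha s\bigl(x^+-(x')^+\bigr)$, bounded by $\alpha|x-x'|$ since $s\le 1$ and $x\mapsto x^+$ is $1$-Lipschitz, and the $t$-variation, which involves $\beta\int_0^s\bigl((\bar T(t,\tau))^+-(\bar T(t',\tau))^+\bigr)\,d\tau$ together with $\bar P(t')-\bar P(t)$. Using the $1$-Lipschitz continuity of the positive part, the representation $\bar T(t,\tau)=\bar T^M(t)+\bar V(t)T^V(\tau)$ with $\bar T^M,\bar V$ Lipschitz and $T^V$ bounded on $[0,1]$, and the Lipschitz continuity of $\bar P$, each piece is $\le C|t-t'|$, yielding $\sup_s|\bar h(t,s,x)-\bar h(t',s,x')|\le C(|x-x'|+|t-t'|)$ and hence the first claim. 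For the second estimate I instead invoke \eqref{ass:weather}--\eqref{eq:limitT}, which give $|P_{\lfloor t/\varepsilon\rfloor}-\bar P(t)|\le C\varepsilon$ and $\sup_\tau|T_{\lfloor t/\varepsilon\rfloor}(\tau)-\bar T(t,\tau)|\le C\varepsilon$; after the same $1$-Lipschitz bound on the positive part and integration over $\tau\in[0,1]$ this gives $\sup_s|\bar h(t,s,x)-h_{\lfloor t/\varepsilon\rfloor}(s,x)|\le C\varepsilon$, and the band estimate delivers the $C\varepsilon$ bound.

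The main obstacle --- and the only place where a genuine feature of the model enters --- is the slope lower bound $\partial_s\bar h\ge 1$: this non-degeneracy of the water-exhaustion threshold keeps the sublevel set from being flat and so prevents a small parameter perturbation from flipping the indicator over a set of large measure. Everything else reduces to the $1$-Lipschitz continuity of $x\mapsto x^+$ and the regularity/convergence hypotheses on the climate data. I would also stress that this argument dispenses with any separate treatment of the boundary convention: whether or not the threshold $\bar h=0$ is actually attained in $[0,1]$, the disagreement set is still contained in the band $\{|\bar h(t,\cdot,x)|\le\delta\}$, so the case where water never runs out (the $W=1$ convention) needs no special handling.
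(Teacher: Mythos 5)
Your proof is correct and takes essentially the same route as the paper's: both arguments rest on the cumulative water-consumption profile being increasing in $s$ with slope at least $1+\alpha x^+\ge 1$, combined with Lipschitz (resp.\ $\mathcal O(\varepsilon)$) control of that profile in $(t,x)$ (resp.\ between the annual and continuous climate data). The only difference is packaging: the paper compares the exhaustion times $W(t,x)$ and $W(t',x')$ directly, treating the case $W(t,x)=W(t',x')=1$ separately, whereas your band estimate on the disagreement set of the two indicators absorbs that boundary convention automatically.
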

\begin{proof}
Let $W(t,x)\in [0,1]$ as the seasonal time when the individual with trait $x$ runs out of water (see \eqref{Def:Wt}). Then $W(t,x)$ is defined by $W(t,x)=1$ if $(1+\alpha x^+)+\beta \int_{0}^1 (\bar T(t,\tau))^+\,d\tau-\bar P(t)\leq 0$, and $\bar \varphi( t,W(t,x),x)=0$ otherwise, where
\begin{align*}
    \bar \varphi(t,\tau,x)&=(1+\alpha x^+)\tau+\beta \int_{0}^\tau (\bar T(t,\tau))^+\,d\tau-\bar P(t).
\end{align*}
To prove the first estimate of the lemma, we notice that the definition~\ref{def:phiwater} of $\phi^{water}$ implies
    \[\int_0^1\left|\bar \phi^{water}(t,\tau,x)-\bar \phi^{water}(t',\tau,x')\right|\,d\tau=\left|W(t,x)-W(t',x')\right|.\]
If $W(t,x)=1=W(t',x')$, the result is proven, so that we may assume $W(t,x)<W(t',x')\leq 1$, and then
\begin{align*}
    0&\leq \left[(1+\alpha x^+)W(t,x)+\beta \int_{0}^{W(t,x)} (\bar T(t,\tau))^+\,d\tau-\bar P(t)\right]\\
    &\quad -\left[(1+\alpha (x')^+)W(t',x')+\beta \int_{0}^{W(t',x')} (\bar T(t',\tau))^+\,d\tau-\bar P(t')\right]\\
    &=-(1+\alpha x^+)\left(W(t',x')-W(t,x)\right)+\alpha((x')^+-x^+)W(t',x')\\
    &\quad -\beta \int_{W(t,x)}^{W(t',x')} (\bar T(t,\tau))^+\,d\tau+\beta \int_{0}^{W(t',x')} |\bar T(t,\tau)-\bar T(t',\tau)|\,d\tau\\
    &\leq -\left|W(t,x)-W(t',x)\right|+C|x-x'|+C|t-t'|.
\end{align*}
Therefore $\left|W(t,x)-W(t',x)\right|\leq C|x-x'|+C|t-t'|$, which proves the first statement of the lemma. To prove the second result, we introduce $W_{\lfloor t/\varepsilon \rfloor}(x)\geq 0$, that is defined by $W_{\lfloor t/\varepsilon \rfloor}(x)=1$ if $(1+\alpha x^+)+\beta \int_{0}^1 (T_k(\tau'))^+\,d\tau'-P_k>0$, and $ \varphi( {\lfloor t/\varepsilon \rfloor},W_{\lfloor t/\varepsilon \rfloor}(x),x)=0$ otherwise, where
\begin{align*}
\varphi(k,\tau,x)&=(1+\alpha x^+)\tau+\beta \int_{0}^\tau (T_k(\tau'))^+\,d\tau'-P_k.
\end{align*}
If $W_{\lfloor t/\varepsilon \rfloor}(x)=1=W(t,x)$, the statement is proven. Otherwise, we may assume w.l.o.g. that $W_{\lfloor t/\varepsilon \rfloor}(x)<W(t,x)$. Then,
\begin{align*}
    0&\leq \left[(1+\alpha x^+)W_{\lfloor t/\varepsilon \rfloor}(x)+\beta \int_{0}^{W_{\lfloor t/\varepsilon \rfloor}(x)} (\bar T_{\lfloor t/\varepsilon \rfloor}(\tau))^+\,d\tau-\bar P(\lfloor t/\varepsilon \rfloor)\right]\\
    &\quad -\left[(1+\alpha x^+)W(t,x)+\beta \int_{0}^{W(t',x)} (\bar T(t,\tau))^+\,d\tau-\bar P(t)\right]\\
    &\leq-(1+\alpha x^+)\left(W(t,x)-W_{\lfloor t/\varepsilon \rfloor}(x)\right)-\beta \int^{W(t,x)}_{W_{\lfloor t/\varepsilon \rfloor}(x)} (\bar T(t,\tau))^+\,d\tau\\
    &\quad +\beta \int_{0}^{W_{\lfloor t/\varepsilon \rfloor}(x)} |\bar T(t,\tau)-T_{\lfloor t/\varepsilon \rfloor}(\tau)|\,d\tau +\left(\bar P(t)-\bar P({\lfloor t/\varepsilon \rfloor})\right) \\
    &\leq -\left|W(t,x)-W_{\lfloor t/\varepsilon \rfloor}(x)\right|+C\varepsilon,
\end{align*}
and the result unfolds, since
\[\bar \phi^{water}(t,s,x)=1_{s\leq W(t,x)},\quad \phi^{water}_{\lfloor t/\varepsilon\rfloor}(s,x)=1_{s\leq W_{\lfloor t/\varepsilon\rfloor}(x)}.\]

\end{proof}

\section{Proof of Theorem~\ref{thm:m-existence} - Existence and uniqueness of the solution of the annual structured model}
We will prove this result through an induction on $k\in\mathbb N$. Assume that the non-negative function $m_{k-1}(0,\cdot,\cdot)$ satisfies $\left((x,y)\mapsto m_{k-1}(0,x,y)\right)\in L^1((1+e^{x})\,dx\,dy)$ and \eqref{eq:mp1}. Notice this holds for $k=1$ thanks to the assumptions made on $m^0$ in Theorem~\ref{thm:m-existence}. For any $(x,y)\in\mathbb R^2$, the first equation of \eqref{model:m} is an ordinary differential equation, with the following explicit solution for $s\in[0,1]$:
\begin{equation}\label{eq:mkm1}
    m_{k-1}(s,x,y) =  m_{k-1}(0,x,y)e^{-\varepsilon ds- \varepsilon \int_0^s\phi^{plast}_{k-1}(\tau,y)\gamma (T_{k-1}(\tau))^+\left(1-\phi^{water}_{k-1}(\tau,x)\right)\,d\tau},
\end{equation}
and since $0\leq \varepsilon ds+\varepsilon\int_0^s\phi^{plast}_{k-1}(\tau,y)\gamma (T_{k-1}(\tau))^+\left(1-\phi^{water}_{k-1}(\tau,x)\right)\,d\tau$, we have 
\begin{equation}\label{est:L1xmk}
\|m_{k-1}(s,\cdot,\cdot)\|_{L^1((1+e^x)\,dx\,dy)}\leq \|m_{k-1}(0,\cdot,\cdot)\|_{L^1((1+e^x)\,dx\,dy)},
\end{equation}
for $s\in[0,1]$. Note also that this solution is non-negative, and so is $s_k$ thanks to its definition. We can apply Proposition~\ref{prop:operatorT} to show 
\begin{align*}
&\|s_k\|_{L^1((1+e^x)\,dx\,dy)}\leq   \nu \|(x,y)\mapsto \Gamma_{2\sigma_x^2}(x)\Gamma_{2\sigma_y^2}(y)\|_{L^1((1+e^x)\,dx\,dy)}\\ 
&\qquad +\frac 1\eta\int_0^1\bigg\|\iint_{\mathbb R^2}\iint_{\mathbb R^2}  \Gamma_{\sigma_x^2}\left(x-\frac{x_{1}+x_{2}}{2}\right)\Gamma_{\sigma_y^2}\left(y-\frac{y_{1}+y_{2}}{2}\right)\\
&\phantom{\qquad +\frac 1\eta\int_0^1\bigg\|\iint_{\mathbb R^2}\iint_{\mathbb R^2} }x_1^+ m_k(s,x_1, y_1)m_k(s, x_2,y_2)   \,dx_1\,dx_2\,dy_1\,dy_2\bigg\|_{L^1((1+e^x)\,dx\,dy)}\,ds\\
&\quad \leq C\nu +\frac{C}{\eta}\int_0^1 \|m_k(s,\cdot,\cdot)\|_{L^1((1+e^x)\,dx\,dy)} ^2\,ds\leq C\nu +\frac{C}{\eta}\|m_k(0,\cdot,\cdot)\|_{L^1((1+e^x)\,dx\,dy)} ^2.
\end{align*}
We notice next that $m_k(0,\cdot,\cdot)$ is a convex combination of $m_{k-1}(1,\cdot,\cdot)$ and $\frac{s_k}{\iint s_k(x,y)\,dx\,dy}$, with a coefficient $\mathcal O(\varepsilon)$ multiplying the latter since $ \varepsilon\int_0^s\phi^{plast}_{k-1}(\tau,y)\gamma (T_{k-1}(\tau))^+\left(1-\phi^{water}_{k-1}(\tau,x)\right)\,d\tau\leq C\varepsilon$ and \eqref{eq:mkm1} implies 
\[0\leq 1-\iint_{\mathbb R^2}m_{k-1}(\hat x,\hat y)\,d\hat x\,d\hat y\leq C\varepsilon.\]
Then $m_{k}(0,\cdot,\cdot)$ is non-negative and  we have
\begin{align*}
    \|m_{k}(0,\cdot,\cdot)\|_{L^1((1+e^x)\,dx\,dy)}&\leq \|m_{k-1}(1,\cdot,\cdot)\|_{L^1((1+e^x)\,dx\,dy)}+C\varepsilon\|s_{k}\|_{L^1((1+e^x)\,dx\,dy)}\\
    &\leq \|m_{k-1}(0,\cdot,\cdot)\|_{L^1((1+e^x)\,dx\,dy)}+C\varepsilon\|m_{k-1}(0,\cdot,\cdot)\|_{L^1((1+e^x)\,dx\,dy)}^2+C\varepsilon.
\end{align*}
Therefore,
\begin{align*}
    \left(1+\|m_{k}(0,\cdot,\cdot)\|_{L^1((1+e^x)\,dx\,dy)}
    \right)&\leq C\left(1+\|m_{k-1}(0,\cdot,\cdot)\|_{L^1((1+e^x)\,dx\,dy)}\right)^2,
\end{align*}
and then $\|m_{k}(0,\cdot,\cdot)\|_{L^1((1+e^x)\,dx\,dy)}\leq C^{2^k-1}\left(1+\|m_{0}(0,\cdot,\cdot)\|_{L^1((1+e^x)\,dx\,dy)}\right)^{2^k}$. We are therefore able to construct $(m_k)\in L^1([0,1],L^1((1+e^{x})\,dx\,dy))$ for $k\in\mathbb N\cup\{0\}$ recursively, with 
\[\|m_{k}(s,\cdot,\cdot)\|_{L^1((1+e^x)\,dx\,dy)} \leq C^{2^k},\]
for $s\in[0,1]$. The solution is well defined, unique, non-negative and we can prove  equality \eqref{eq:mp1} thanks to the definition of $m_k(0,\cdot,\cdot)$:
\begin{align*}
    \iint_{\mathbb R^2}m_k(0,x,y)\,dx\,dy=\iint_{\mathbb R^2}m_{k-1}(1,x,y)\,dx\,dy+\frac{1-\iint_{\mathbb R^2} m_{k-1}(1,\hat x,\hat y)\,d\hat x\,d\hat y}{\iint_{\mathbb R^2} s_{k-1}(\hat x,\hat y)\,d\hat x\,d\hat y} \iint_{\mathbb R^2}s_{k-1}(x,y)\,dx\,dy=1.
\end{align*}

\section{Proof of Theorem~\ref{thm:n-existence} - Existence and uniqueness of the solution of the continuous structured model}

\noindent\textbf{Existence and uniqueness of solutions}

We define the operator $\mathcal G$ as follows 
\begin{align}
    \mathcal G[n](t,x,y)& = \iint_{\mathbb R^2}\iint_{\mathbb R^2}  \Gamma_{\sigma_x^2}\left(x-\frac{x_{1}+x_{2}}{2}\right)\Gamma_{\sigma_y^2}\left(y-\frac{y_{1}+y_{2}}{2}\right) x_1^+
  n(t,x_1, y_1)n(t, x_2,y_2)\nonumber\\
  &\phantom{ = \iiiint} \bar\rho[n(t,\cdot,\cdot)](t,x_1,x_2,y_1,y_2) \,dx_1\,dy_1\,dx_2\,dy_2 + \nu\Gamma_{2\sigma_x^2}(x)\Gamma_{2\sigma_y^2}(y),\label{eq:G}
\end{align}
and this operator satisfies
\begin{equation}\label{est:G}
\mathcal G[n](t,x,y)\leq C\left(\max_{(x_1,x_2,y_1,y_2)\in\mathbb R^4}\left(x_1^+\bar\rho[n(t,\cdot,\cdot)](t,x_1,x_2,y_1,y_2) \right)\eta\|n\|_{L^\infty([0,T],L^1(\mathbb R^2))}^2+\nu\right).
\end{equation}
Solutions of \eqref{model:n} satisfy $n(t,x,y) = \mathcal F[n](t,x,y)$, with
\begin{align}
    \mathcal F[n](t,x,y)&:=n^0(x,y)e^{-\int_0^t\bar a(s,x,y)\,ds} \nonumber\\
    &\quad +\int_0^t \frac{\iint_{\mathbb R^2} \bar a(s,\hat x,\hat y)n(s,\hat x,\hat y) \,d\hat x\,d\hat y }{\nu+\bar R[n](s)}\mathcal G[n](s,x,y)e^{-\int_s^t\bar a(\tau,x,y)\,d\tau}\,ds.\label{def:Fn}
\end{align}

Let $T>0$. The operator maps $L^\infty([0,T],L^1(\mathbb R^2,\mathbb R_+))$ into itself, since $\mathcal F[n](t,x,y)\geq 0$ whenever $n\geq 0$, and for $t\in[0,T]$,
\begin{align}
    &\|\mathcal F[n](t,\cdot,\cdot)\|_{L^1(\mathbb R^2)}\leq \|n^0\|_{L^1(\mathbb R^2)}\nonumber\\
    &\qquad + C\frac{T\gamma}\nu \|n\|_{L^\infty([0,T],L^1(\mathbb R^2))}\left(\frac {\max_{(x_1,x_2,y_1,y_2)\in\mathbb R^4}\left(x_1^+\bar\rho[n(t,\cdot,\cdot)](t,x_1,x_2,y_1,y_2) \right)}\eta\|n\|_{L^\infty([0,T],L^1(\mathbb R^2))}^2+\nu\right)\nonumber\\
    &\quad \leq C\left(1+\|n\|_{L^\infty([0,T],L^1(\mathbb R^2))}^3\right),\label{est:LinftL1}
\end{align}
where we have used that for $(x_1,x_2,y_1,y_2)\in\mathbb R^4$,
\begin{align}
    x_1^+\bar\rho[n(t,\cdot,\cdot)](t,x_1,x_2,y_1,y_2)&=x_1^+\int_{0}^{1}  \frac{\bar \phi^{plast}(t,s,y_1)\bar \phi^{plast}(t,s,y_2)\bar \phi^{water}(t,s,x_1)\bar \phi^{water}(t,s,x_2)}{\eta+\bar Q[n](t)} \,ds\nonumber\\
    &\leq \frac{x_1^+}\eta\int_{0}^{1}  \bar \phi^{water}(t,s,x_1)\,ds\leq \frac{x_1^+}\eta\int_{0}^{1}  1_{\alpha x_1^+s\leq \|\bar P\|_{L^\infty([0,\infty))}}\,ds\nonumber\\
    &\leq \frac {\|\bar P\|_{L^\infty([0,\infty))}}{\eta \alpha}\leq C.\label{est:x1rho}
\end{align}
For  $n,\tilde n\in L^\infty([0,T],L^1(\mathbb R^2))$ and $s\in[0,T]$, we have
\begin{align}
  &\iint_{\mathbb R^2}\left|\mathcal F[n]-\mathcal F[\tilde n]\right|(s,x,y)\,dx\,dy \nonumber\\
 & \quad \leq T \max_{t\in[0,T]} \iint_{\mathbb R^2} \left| \frac{\iint_{\mathbb R^2} \bar a(t,\hat x,\hat y)n(t,\hat x,\hat y) \,d\hat x\,d\hat y}{\nu +\bar R[n](t)}\mathcal G[n](t,x,y)- \frac{\iint_{\mathbb R^2} \bar a(t,\hat x,\hat y)\tilde n(t,\hat x,\hat y) \,d\hat x\,d\hat y}{ \nu+\bar R[\tilde n](t)} \mathcal G[\tilde n](t,x,y)\right|\,dx\,dy \nonumber \\
 & \quad \leq T \max_{t\in[0,T]}\bigg\{ \iint_{\mathbb R^2}  \bigg( \frac{\iint_{\mathbb R^2} \bar a(t,\hat x,\hat y)n(t,\hat x,\hat y) \,d\hat x\,d\hat y}{\nu+\bar R[n](t)} \left|  \mathcal G[n](t,x,y)-\mathcal G[\tilde n](t,x,y) \right| \nonumber \\
 & \qquad + \frac{\mathcal G[\tilde n](t,x,y)}{\nu + \bar R[\tilde n](t)} \iint _{\mathbb R^2} \bar a(t,\hat x, \hat y)\left| n(t, \hat x,\hat y) - \tilde n(t, \hat x,\hat y)  \right| \,d\hat x\,d\hat y \nonumber\\
 & \qquad + \frac{\mathcal G[\tilde n](t,x,y)}{\nu+\bar R[\tilde n](t)} \left| \frac{\nu+\bar R[\tilde n](t)}{\nu+\bar R[n](t)}-1\right|\iint_{\mathbb R^2}  \bar a(t,\hat x, \hat y)n(t, \hat x,\hat y) d\hat x\,d\hat y\bigg) \,dx\,dy \bigg\}\nonumber  \\
  &\quad\leq  T \gamma\|\bar T\|_{L^\infty}\max_{t \in [0,T]}  \bigg\{\iint_{\mathbb R^2}  \bigg( \frac{1}{\nu +\bar R[n](t)}\left|  \mathcal G[n](t,x,y)-\mathcal G[\tilde n](t,x,y) \right|\nonumber \\
  & \qquad + \frac{\mathcal G[\tilde n](t,x,y)}{\nu+\bar R[\tilde n](t)} \iint_{\mathbb R^2}  \left| n(t, \hat x,\hat y)- \tilde n(t, \hat x,\hat y) \right| d\hat x\,d\hat y + \frac{\mathcal G[\tilde n](t,x,y)}{\nu+\bar R[\tilde n](t)} \left| \frac{\bar R[\tilde n](t) - \bar R[n](t)}{\nu+ \bar R[n](t)} \right| \bigg)\, dx\,dy \bigg\}\nonumber  \\
 & \quad\leq T\frac{\gamma\|\bar T\|_{L^\infty}}{\nu}\max_{t \in [0,T]}  \bigg\{\iint_{\mathbb R^2}  \left|  \mathcal G[n](t,x,y)-\mathcal G[\tilde n](t,x,y) \right| \,dx\,dy \nonumber \\
  & \qquad+ C_{}  \iint_{\mathbb R^2} | n(t,x,y) - \tilde n(t,x,y) | \,dx\,dy+ \frac{C}{\nu} \left| \bar R[\tilde n](t) - \bar R[n](t)\right| \bigg\},
 \label{3.2}
\end{align}
where we have used the bound on $\left|\iint \mathcal G[n](t,x,y) \,dx\,dy\right|\leq C\|n(t,\cdot,\cdot)\|_{L^1(\mathbb R^2)}^2$ that follows from \eqref{est:G} and  \eqref{est:x1rho} (see \eqref{est:LinftL1} for a similar argument). Furthermore,
\begin{align}
  & \iint_{\mathbb R^2} \left|\mathcal G[n](s,x,y)-\mathcal G[\tilde n](s,x,y)\right| \,dx\,dy
    \leq \iint_{\mathbb R^2}  \iint_{\mathbb R^2}\iint_{\mathbb R^2}  \Gamma_{\sigma_x^2}\left(x-\frac{x_{1}+x_{2}}{2}\right)\Gamma_{\sigma_y^2}\left(y-\frac{y_{1}+y_{2}}{2}\right) 
     \nonumber \\
    &\qquad 
  \big|\tilde n(s,x_1, y_1)\tilde n(s, x_2,y_2)\left(x_1^+\bar\rho[n(s,\cdot,\cdot)](s,x_1,x_2,y_1,y_2) -x_1^+\bar\rho[\tilde n(s,\cdot,\cdot)](s,x_1,x_2,y_1,y_2) \right) \nonumber\\
  &\qquad+\left(n(s,x_1, y_1)n(s, x_2,y_2)-\tilde n(s,x_1, y_1)\tilde n(s, x_2,y_2)\right)x_1^+\bar\rho[n(s,\cdot,\cdot)](s,x_1,x_2,y_1,y_2)\big| \,dx_1\,dy_1\,dx_2\,dy_2\,dx\,dy. \nonumber \\
 &\quad \leq \iint_{\mathbb R^2}\iint_{\mathbb R^2}  
  \tilde n(s,x_1, y_1)\tilde n(s,x_2,y_2)\nonumber\\
  &\phantom{quad \leq \iint_{\mathbb R^2}\iint_{\mathbb R^2}  
  }\Big|x_1^+\bar\rho[n(s,\cdot,\cdot)](s,x_1,x_2,y_1,y_2) -x_1^+\bar\rho[\tilde n(s,\cdot,\cdot)](s,x_1,x_2,y_1,y_2) \Big| \,dx_1\,dx_2\,dy_1\,dy_2 \nonumber\\
  &\qquad+\iint_{\mathbb R^2}\iint_{\mathbb R^2}\left[\left|n-\tilde n\right|(s,x_1, y_1)\tilde n(s, x_2,y_2)+n(s,x_1,y_1)\left|n-\tilde n\right|(s,x_2, y_2)\right]\nonumber\\
  &\qquad \phantom{sqfezfaz}x_1^+\bar\rho[n(s,\cdot,\cdot)](s,x_1,x_2,y_1,y_2) \,dx_1\,dy_1\,dx_2\,dy_2. \nonumber
\end{align}
Since $(s,x_1,x_2,y_1,y_2)\mapsto x_1^+\bar\rho[n(s,\cdot,\cdot)](s,x_1,x_2,y_1,y_2)$ is uniformly bounded, we simply need to estimate the first term on the right hand side:
\begin{align}
    &x_1^+\Big|\bar\rho[n(s,\cdot,\cdot)](s,x_1,x_2,y_1,y_2) -\bar\rho[\tilde n(s,\cdot,\cdot)](s,x_1,x_2,y_1,y_2) \Big|\nonumber\\
    &\quad \leq \left|x_1^+ \int_0^1\bar \phi^{water}(t,s,x_1)\,ds\right|\left|\frac 1{\eta+\bar Q[n](s)}-\frac 1{\eta+\bar Q[\tilde n](s)}\right|\leq \frac C{\eta^2}\|n-\tilde n\|_{L^\infty([0,T],L^1(\mathbb R^2))},\label{est:xrho}
\end{align}
estimating $\left|x_1^+ \int_0^1\bar \phi^{water}(t,s,x_1)\,ds\right|\leq C$ thanks to the argument already employed in \eqref{est:x1rho}, and the definition of $\bar Q$, see \eqref{def:barQ}. Then,
\begin{equation}\label{est:FmF}
    \iint_{\mathbb R^2} \left|\mathcal G[n](s,x,y)-\mathcal G[\tilde n](s,x,y)\right| \,dx\,dy\leq C\|n-\tilde n\|_{L^\infty([0,T],L^1(\mathbb R^2))},
\end{equation}
where the constant only depends on a bound on $\|n\|_{L^\infty([0,T],L^1(\mathbb R^2))}$ and $\|\tilde n\|_{L^\infty([0,T],L^1(\mathbb R^2))}$. Moreover, we notice that  $\bar R[n](t) = \iint \mathcal G[n](t,x,y) \,dx\,dy - \nu$, therefore 
\[
| \bar R[n](t) - \bar R[\tilde n](t) | \leq \iint_{\mathbb R^2} | \mathcal G[n](t,x,y) - \mathcal G[\tilde n](t,x,y) | \,dx\,dy\leq C\|n-\tilde n\|_{L^\infty([0,T],L^1(\mathbb R^2))}.
\]
This estimate and \eqref{est:FmF} can be used to estimate the right hand side of \eqref{3.2} and obtain
\begin{align}
    &\|\mathcal F[n]-\mathcal F[\tilde n]\|_{L^\infty([0,T],L^1(\mathbb R^2))}\leq CT \|n-\tilde n\|_{L^\infty([0,T],L^1(\mathbb R^2))},\label{est:contract}
\end{align}
where the constant only depends on a bound on $\|n\|_{L^\infty([0,T],L^1(\mathbb R^2))}$ and $\|\tilde n\|_{L^\infty([0,T],L^1(\mathbb R^2))}$. If $T>0$ is small enough, this operator is a contraction on $L^\infty([0,T],L^1(\mathbb R^2))$. The existence of solutions for a small time interval $[0,T]$, $T>0$, then follows from the Banach contraction Theorem. These solutions are non-negative and an integration of \eqref{model:n} along $(x,y)\in\mathbb R^2$ shows that solutions $n$ satisfy $\iint_{\mathbb R^2} n(t,x,y)\,dx\,dy\equiv 1$. This property and \eqref{est:contract} prove the existence of solutions for $t\in[0,+\infty)$.

\medskip

\noindent\textbf{Regularity of solutions}
The boundedness of solutions when $n^0$ is bounded can be seen as follows, using the fact that the kernel $(x,y)\mapsto\Gamma_{\sigma_x^2}(x)\Gamma_{\sigma_y^2}(y)$ appearing in \eqref{eq:G} is bounded:
\begin{align*}
&\|n(t,\cdot,\cdot)\|_{L^\infty(\mathbb R^2)}\leq \|n^0\|_{L^\infty(\mathbb R^2)}\\
&\qquad +C \left(\|n\|_{L^\infty([0,T],L^1(\mathbb R^2))}^2\int_0^t\left(\max_{(x_1,x_2,y_1,y_2)\in\mathbb R^4}x_1^+\bar\rho[n(s,\cdot,\cdot)](s,x_1,x_2,y_1,y_2) \right)\,ds+C\nu t\right)\leq C(t+1).
\end{align*}
To prove that $n$ satisfies \eqref{eq:Lips}, we notice that thanks to \eqref{def:Fn},
\begin{align*}
    &\left|n(t,x,y)-n(t,x',y')\right|\leq \left|n^0(x,y)-n^0(x',y')\right|e^{-\int_0^t\bar a(s,x,y)\,ds}+n^0(x',y')\left(e^{-\int_0^t\bar a(s,x,y)\,ds}-e^{-\int_0^t\bar a(s,x',y')\,ds}\right)^+ \nonumber\\
    &\quad +\int_0^t \frac{\iint_{\mathbb R^2} \bar a(s,\hat x,\hat y)n(s,\hat x,\hat y) \,d\hat x\,d\hat y }{\nu+\bar R[n](s)}\\
    &\quad \left(\left|\mathcal G[n](s,x,y)-\mathcal G[n](s,x',y')\right|e^{-\int_s^t\bar a(\tau,x,y)\,d\tau}+\mathcal G[n](s,x',y')\left(e^{-\int_s^t\bar a(\tau,x,y)}-e^{-\int_s^t\bar a(\tau,x',y')\,d\tau}\right)^+\right)\,ds\\
    &\leq \|n^0\|_{W^{1,\infty}(\mathbb R^2)}\left(|x-x'|+|y-y'|\right)+n^0(x,y)\int_0^t\left(\bar a(s,x',y')-\bar a(s,x,y)\right)^+\,ds\nonumber\\
    &\quad +\int_0^t \frac{\iint_{\mathbb R^2} \bar a(s,\hat x,\hat y)n(s,\hat x,\hat y) \,d\hat x\,d\hat y }{\nu+\bar R[n](s)}\\
    &\quad \left(\left|\mathcal G[n](s,x,y)-\mathcal G[n](s,x',y')\right|+\|\mathcal G[n](s,\dot,\dot)\|_{L^\infty(\mathbb R^2)}\int_s^t\left(\bar a(\tau,x',y')-\bar a(\tau,x,y)\right)^+\,d\tau\right)\,ds.
\end{align*}
We notice that $y\mapsto \bar\phi^{plast}(t,s,y)$ is non decreasing and $x\mapsto \bar \phi^{water}(t,s,x)$ satisfies the regularity estimate provided by Lemma~\ref{lem:phiwater}. Then, if $y'\leq y$,
\begin{align*}
    \bar a(t,x,y)-\bar a(t,x',y') &= \int_0^{1} \bar \phi^{plast}(t,s,y) \gamma (\bar T(t,s))^+\left(\bar \phi^{water}(t,s,x')-\bar \phi^{water}(t,s,x)\right)\,ds\\
    &+ \int_0^{1} \left(\bar \phi^{plast}(t,s,y)-\bar \phi^{plast}(t,s,y')\right) \gamma (\bar T(t,s))^+\left(1-\bar \phi^{water}(t,s,x')\right)\,ds\\
    &\geq -C|x-x'|.
\end{align*}
Moreover,
\begin{align*}
    &\left|\mathcal G[n](t,x,y)-\mathcal G[n](t,x',y') \right|= \iint_{\mathbb R^2}\iint_{\mathbb R^2}  \bigg[\left|\Gamma_{\sigma_x^2}\left(x-\frac{x_{1}+x_{2}}{2}\right)-\Gamma_{\sigma_x^2}\left(x'-\frac{x_{1}+x_{2}}{2}\right)\right|\Gamma_{\sigma_y^2}\left(y-\frac{y_{1}+y_{2}}{2}\right)\\
    &\phantom{\left|\mathcal G[n](t,x,y)-\mathcal G[n](t,x',y') \right|= \iint_{\mathbb R^2}\iint_{\mathbb R^2}}\Gamma_{\sigma_x^2}\left(x'-\frac{x_{1}+x_{2}}{2}\right)\left|\Gamma_{\sigma_y^2}\left(y-\frac{y_{1}+y_{2}}{2}\right)-\Gamma_{\sigma_y^2}\left(y'-\frac{y_{1}+y_{2}}{2}\right)\right|\bigg]\\
  &\phantom{\left|\mathcal G[n](t,x,y)-\mathcal G[n](t,x',y') \right|= \iint_{\mathbb R^2}\iint_{\mathbb R^2}} x_1^+ n(t,x_1, y_1)n(t, x_2,y_2)\bar\rho[n(t,\cdot,\cdot)](t,x_1,x_2,y_1,y_2) \,dx_1\,dy_1\,dx_2\,dy_2 \\
  &\quad \leq \left(\max_{x_1,x_2,y_1,y_2\in\mathbb R} \left(x_1^+\rho[n(t,\cdot,\cdot)](t,x_1,x_2,y_1,y_2)\right)\right)\left(\|\Gamma_{\sigma_x^2}'\|_{L^\infty(\mathbb R)}|x-x'|+\|\Gamma_{\sigma_y^2}'\|_{L^\infty(\mathbb R)}|y-y'|\right).
  \end{align*}
  Therefore, for some $C>0$, if $y'\leq y$, $n$ satisfies \eqref{eq:Lips}. To obtain the last inequality of Theorem~\ref{thm:n-existence}, we observe that thanks to \eqref{model:n},
\begin{align*}
    \frac d{dt}\iint_{\mathbb R^2} x^2 n(t,x,y)\,dx\,dy&\leq \left(\max_{x_1,x_2,y_1,y_2\in\mathbb R} \left(x_1^+\rho[n(t,\cdot,\cdot)](t,x_1,x_2,y_1,y_2)\right)\right)\\
    &\qquad \iint_{\mathbb R ^2}\iint_{\mathbb R^2} \left|\frac{x_1+x_2}2\right|^2 
    n(t,x_1, y_1)n(t, x_2,y_2) \,dx_1\,dx_2\,dy_1\,dy_2+2\nu \sigma_x^2\\
    &\quad \leq C\iint_{\mathbb R^2} x^2 n(t,x,y)\,dx\,dy+2\nu \sigma_x^2,
\end{align*}
which implies \eqref{est:moment2}. Notice that to make this last calculation fully rigorous, it would be necessary to introduce a modified model with $x\in[-R,R]$, prove the existence of solutions for this modified model, and then pass to the limit $R\to\infty$ using the calculation above. We omit this very technical argument here. 

\section{Proof of Theorem~\ref{thm:n-to-m} - asymptotic limit from the annual structured model to the continuous structured model}\label{subsec:asymptoticsepsilon}


    Using the notation \eqref{eq:G}, we have
\begin{align*}
    n(t+\varepsilon,x,y) &= n(t,x,y)e^{-\int_t^{t+\varepsilon} \bar a(s,x,y)\,ds}+\int_t^{t+\varepsilon} \frac{\iint_{\mathbb R^2} \bar a(s,\hat x,\hat y)n(s,\hat x,\hat y) \,d\hat x\,d\hat y}{\nu+\bar R[n](s)}\mathcal G[n](s,x,y)e^{-\int_s^{t+\varepsilon}\bar a(\tau,x,y)\,d\tau}\,ds,
\end{align*}
and using \eqref{model:m} we write : 
\begin{align*}
     m_{\lfloor (t+\varepsilon)/\varepsilon\rfloor}(0,x,y)  &= m_{\lfloor t/\varepsilon \rfloor} (0,x,y) e^{-\varepsilon a_{\lfloor t/\varepsilon\rfloor}(1,x,y)}  \\
     &\quad + \frac{1- \iint_{\mathbb R^2} e^{-\varepsilon a_{\lfloor t/\varepsilon\rfloor}(1,\hat x, \hat y)} m_{\lfloor t/\varepsilon\rfloor}(0,\hat x,\hat y)\,d\hat x\,d\hat y}{\nu+R_{\lfloor t/\varepsilon \rfloor}} \int_0^1 \mathcal G[ e^{-\varepsilon a}m_{\lfloor t/\varepsilon \rfloor}(0,\cdot,\cdot)](s,x,y) \,ds.
\end{align*}
These imply
\begin{align}
    & \| n(t+\varepsilon,\cdot,\cdot) - m_{\lfloor (t+\varepsilon)/\varepsilon\rfloor}(0,\cdot,\cdot) \|_{L^1(\mathbb R^2)}\leq\iint_{\mathbb R^2}\left|\, n(t,x,y) e^{-\int_t^{t+\varepsilon} \bar a(s,x,y)ds}-m_{\lfloor t/\varepsilon \rfloor} (0,x,y) e^{-\varepsilon a_{\lfloor t/\varepsilon\rfloor}(1,x,y)} \right|\,dx\,dy\nonumber \\
    & \quad +  \iint_{\mathbb R^2}\bigg| \int_t^{t+\varepsilon} \frac{\iint_{\mathbb R^2} \bar a(s,\hat x,\hat y)n(s,\hat x,\hat y) \,d\hat x\,d\hat y}{\nu+\bar R[n](s)}\mathcal G[n(s,\cdot,\cdot)](x,y)e^{-\int_s^{t+\varepsilon}\bar a(\tau,x,y)\,d\tau}\,ds\nonumber \\
     & \qquad  - \frac{1- \iint_{\mathbb R^2} e^{-\varepsilon a_{\lfloor t/\varepsilon\rfloor}(1,\hat x, \hat y)} m_{\lfloor t/\varepsilon\rfloor}(0,\hat x,\hat y)\,d\hat x\,d\hat y}{\nu+R_{\lfloor t/\varepsilon \rfloor}}\int_0^1 \mathcal G[ e^{-\varepsilon \bar a(s,\cdot,\cdot)}m_{\lfloor t/\varepsilon \rfloor}(0,\cdot,\cdot)](x,y) \,ds\bigg|\,dx\,dy.\label{eq:norm-n-m}
\end{align}
To estimate the first term on the right hand side of \eqref{eq:norm-n-m}, we recall the definitions \eqref{eq:gk} of $a_k$ and \eqref{def:gt} of $\bar a(t,x,y)$. For $s\in[t,t+\varepsilon]$,
\begin{align}
    &\left|a_{\lfloor t/\varepsilon \rfloor}(1,x,y)-\bar a(s,x,y)\right|\nonumber\\
    &\quad \leq \gamma\|T\|_{L^\infty}\left(\int_0^1\left|\phi_{\lfloor t/\varepsilon \rfloor}^{plast}(\tau,y)-\bar \phi^{plast}(s,\tau,y)\right|\,d\tau+\int_0^1\left|\phi_{\lfloor t/\varepsilon \rfloor}^{water}(\tau,x)-\bar \phi^{water}(s,\tau,x)\right|\,d\tau\right)\nonumber\\
    &\quad =C\int_0^1\left|\phi_{\lfloor t/\varepsilon \rfloor}^{plast}(\tau,y)-\bar \phi^{plast}(s,\tau,y)\right|\,d\tau+\mathcal O(\varepsilon),\label{eq:estama}
\end{align}
 thanks to Lemma~\ref{lem:phiwater}, while we have
  \begin{align*}
    &\int_{\mathbb R}\left|\phi_{\lfloor t/\varepsilon \rfloor}^{plast}(\tau,y)-\bar \phi^{plast}(s,\tau,y)\right|\,dy=\int_{\mathbb R}\left|e^{-\xi \int_0^\tau 1_{T_{\lfloor t/\varepsilon \rfloor}(\tau')>y}\,d\tau'}-e^{-\xi \int_0^\tau 1_{\bar T(t,\tau')>y}\,d\tau'}\right|\,dy\\
    &\quad \leq \xi \int_{\mathbb R}\left|\int_0^\tau 1_{T_{\lfloor t/\varepsilon \rfloor}(\tau')>y}\,d\tau'- \int_0^\tau 1_{\bar T(t,\tau')>y}\,d\tau'\right|\,dy\leq \xi \int_0^1\int_{\mathbb R} \left|1_{T_{\lfloor t/\varepsilon \rfloor}(\tau')>y}- 1_{\bar T(t,\tau')>y}\right|\,dy\,d\tau'\\
    &\quad \leq \xi \int_0^1\left|T_{\lfloor t/\varepsilon \rfloor}(\tau')-\bar T(t,\tau')\right|\,d\tau'\leq\xi \|T_{\lfloor t/\varepsilon \rfloor}-\bar T(t,\cdot)\|_{L^\infty}\leq C\varepsilon,
\end{align*}
and then
\begin{equation}\label{eq:ama}
    \int_{\mathbb R} \left|a_{\lfloor t/\varepsilon \rfloor}(1,x,y)-\bar a(s,x,y)\right|\,dy\leq C\varepsilon,
\end{equation}
where the constant $C>0$ is independent from $t\geq 0$, $s\in[t,t+\varepsilon]$ and from $x\in\mathbb R$. We can use this estimate to control the first term on the right hand side of \eqref{eq:norm-n-m}:
\begin{align}
    &\iint_{\mathbb R^2}\left|\, n(t,x,y) e^{-\int_t^{t+\varepsilon} \bar a(s,x,y)ds}-m_{\lfloor t/\varepsilon \rfloor} (0,x,y) e^{-\varepsilon a_{\lfloor t/\varepsilon\rfloor}(1,x,y)} \right|\,dx\,dy\nonumber\\
    &\quad \leq \left\|\, n(t,\cdot,\cdot) -m_{\lfloor t/\varepsilon \rfloor} (0,\cdot,\cdot) \right\|_{L^1(\mathbb R^2)}+ \iint_{\mathbb R^2} n(t,x,y)\left|\,\int_t^{t+\varepsilon} \bar a(s,x,y)- a_{\lfloor t/\varepsilon\rfloor}(1,x,y)\,ds \right|\,dx\,dy\nonumber\\
    &\quad \leq \left\|\, n(t,x,y) -m_{\lfloor t/\varepsilon \rfloor} (0,x,y) \right\|_{L^1(x,y)}\nonumber\\
    &\qquad + \left\| \bar a(\cdot,\cdot,\cdot)- a_{\lfloor t/\varepsilon\rfloor}(1,\cdot,\cdot)\right\|_{L^\infty([0,1]\times \mathbb R^2)}\left(\int_t^{t+\varepsilon} ds\right)\,\int_{\mathbb R}\int_{x^2\geq 1/\varepsilon} n(t,x,y)\,dx\,dy\\
    &\qquad +\|n(t,\cdot,\cdot)\|_{L^{\infty}(\mathbb R^2)} \int_{x^2\leq 1/\varepsilon}\int_t^{t+\varepsilon} \int_{\mathbb R}\left|\bar a(s,x,y)- a_{\lfloor t/\varepsilon\rfloor}(1,x,y)\right|\,dy\,ds \,dx\nonumber\\
    &\quad \leq \left\| n(t,x,y) -m_{\lfloor t/\varepsilon \rfloor} (0,x,y) \right\|_{L^1(x,y)}+2\varepsilon^2 \int_{x^2\geq 1/\varepsilon} x^2n(t,x,y)\,dx\,dy+\nonumber\\
    &\qquad +\varepsilon\frac 2{\sqrt \varepsilon} \|n(t,\cdot,\cdot)\|_{L^{\infty}(\mathbb R^2)}\max_{|x|\leq 1/\sqrt\varepsilon,s\in[t,t+\varepsilon]} \int_{\mathbb R}\left|\bar a(s,x,y)- a_{\lfloor t/\varepsilon\rfloor}(1,x,y)\right|\,dy\nonumber\\
    &\quad \leq \left\| n(t,x,y) -m_{\lfloor t/\varepsilon \rfloor} (0,x,y) \right\|_{L^1(x,y)}+ Ce^{Ct}\varepsilon^{3/2},
    \label{est:termeini}
\end{align}
where we have used a Chebyshev's inequality, \eqref{est:moment2} and \eqref{eq:ama}. To estimate the last part of \eqref{eq:norm-n-m}, we notice that for $s\in[t,t+\varepsilon]$, 
\begin{align}
    & \iint_{\mathbb R^2}\bigg| \int_t^{t+\varepsilon}  \mathcal G[n(s,\cdot,\cdot)](x,y)e^{-\int_s^{t+\varepsilon}\bar a(\tau',x,y)\,d\tau'}\,ds-\varepsilon\int_0^1\mathcal G[ e^{-\varepsilon a_{\lfloor t/\varepsilon \rfloor}(s,\cdot,\cdot)}m_{\lfloor t/\varepsilon \rfloor}(0,\cdot,\cdot)](x,y) \,ds  \bigg|\,dx\,dy \nonumber\\
    &\quad = \iint_{\mathbb R^2}\bigg| \int_t^{t+\varepsilon}  \mathcal G[n(s,\cdot,\cdot)](x,y)\,ds-\varepsilon\int_0^1\mathcal G[m_{\lfloor t/\varepsilon \rfloor}(0,\cdot,\cdot)](x,y) \,ds  \bigg|\,dx\,dy+\mathcal O(\varepsilon^2) \nonumber\\
    &\quad =  \mathcal O(\varepsilon^2) +\iint_{\mathbb R^2}\iint \left(\iint_{\mathbb R^2} \Gamma_{\sigma_x^2}\left(x-\frac{x_{1}+x_{2}}{2}\right)\Gamma_{\sigma_y^2}\left(y-\frac{y_{1}+y_{2}}{2}\right) \,dx\,dy\right)\nonumber\\
    &\qquad \bigg|\int_{t}^{t+\varepsilon}n(s,x_1, y_1)n(s, x_2,y_2)\left(x_1^+ \bar\rho[n(t,\cdot,\cdot)](s,x_1,x_2,y_1,y_2)\right)\,ds \nonumber\\
  &\qquad -\varepsilon\left(\int_0^1x_1^+ \bar\rho[m_{\lfloor t/\varepsilon \rfloor}(0,\cdot,\cdot)](t,x_1,x_2,y_1,y_2)\,ds\right)m_{\lfloor t/\varepsilon \rfloor}(0,x_1,y_1)m_{\lfloor t/\varepsilon \rfloor}(t, x_2,y_2)\bigg|\,dx_1\,dy_1\,dx_2\,dy_2\nonumber\\
    &\quad = \mathcal O(\varepsilon^2)+ \varepsilon\iint_{\mathbb R^2}\iint_{\mathbb R^2}  \bigg|n(t,x_1, y_1)n(t, x_2,y_2)\left(x_1^+ \bar\rho[n(t,\cdot,\cdot)](t,x_1,x_2,y_1,y_2)\right)\nonumber \\
  &\qquad -m_{\lfloor t/\varepsilon \rfloor}(0,x_1,y_1)m_{\lfloor t/\varepsilon \rfloor}(t, x_2,y_2)\left(x_1^+ \bar\rho[m_{\lfloor t/\varepsilon \rfloor}(0,\cdot,\cdot)](t,x_1,x_2,y_1,y_2)\right) \bigg|\,dx_1\,dy_1\,dx_2\,dy_2,\nonumber
\end{align}
where we have used the Lipschitz regularity of $n$ in time as a function valued in $L^1(\mathbb R^2)$ (we recall that the term $x_1^+ \bar\rho[n(t,\cdot,\cdot)](t,x_1,x_2,y_1,y_2)$ is uniformly bounded, see \eqref{est:x1rho}). Then, using an estimate similar to \eqref{est:xrho}, we get
\begin{align}
    & \iint_{\mathbb R^2}\bigg| \int_t^{t+\varepsilon}  \mathcal G[n(s,\cdot,\cdot)](x,y)e^{-\int_s^{t+\varepsilon}\bar a(\tau',x,y)\,d\tau'}\,ds-\varepsilon\int_0^1\mathcal G[ e^{-\varepsilon a_{\lfloor t/\varepsilon \rfloor}(s,\cdot,\cdot)}m_{\lfloor t/\varepsilon \rfloor}(0,\cdot,\cdot)](x,y) \,ds  \bigg|\,dx\,dy \nonumber\\
    &\quad \leq \mathcal O(\varepsilon^2)+ C\varepsilon\left\|\, n(t,\cdot,\cdot) -m_{\lfloor t/\varepsilon \rfloor} (0,\cdot,\cdot) \right\|_{L^1(\mathbb R^2)}.\label{est:GG}
\end{align}
Moreover, when $s\in[t,t+\varepsilon]$,
\begin{align}
    &\left|\frac{\iint_{\mathbb R^2} \bar a(s,\hat x,\hat y)n(s,\hat x,\hat y) \,d\hat x\,d\hat y}{\nu+\bar R[n](s)}-\frac 1\varepsilon \frac{1- \iint_{\mathbb R^2} e^{-\varepsilon a_{\lfloor t/\varepsilon\rfloor}(1,\hat x, \hat y)} m_{\lfloor t/\varepsilon\rfloor}(0,\hat x,\hat y)\,d\hat x\,d\hat y}{\nu+R_{\lfloor t/\varepsilon \rfloor}[m]} \right|\nonumber\\
    &\quad =\left|\frac{\iint_{\mathbb R^2} \bar a(s,\hat x,\hat y)n(t,\hat x,\hat y) \,d\hat x\,d\hat y}{\nu+\bar R[n](t)}- \frac{\iint_{\mathbb R^2} a_{\lfloor t/\varepsilon\rfloor}(1,\hat x, \hat y) m_{\lfloor t/\varepsilon\rfloor}(0,\hat x,\hat y)\,d\hat x\,d\hat y}{\nu+R_{\lfloor t/\varepsilon \rfloor}[m]} \right|+\mathcal O(\varepsilon)\nonumber\\
    &\quad \leq\frac {1}\nu \iint_{\mathbb R^2} n(t,\hat x,\hat y)\left(\bar a(s,\hat x,\hat y)-a_{\lfloor t/\varepsilon\rfloor}(1,\hat x,\hat y)\right)\,d\hat x\,d\hat y+\frac {1}{\nu^2}\left|\bar R[n](t)-R_{\lfloor t/\varepsilon \rfloor}[m]\right|\nonumber\\
    &\qquad + \frac 1\nu \iint_{\mathbb R^2} \left|n(t,\hat x,\hat y)- m_{\lfloor t/\varepsilon\rfloor}(0,\hat x,\hat y)\right|\,d\hat x\,d\hat y+\mathcal O(\varepsilon)\leq Ce^{Ct }\sqrt\varepsilon+ C\left\|\, n(t,\cdot,\cdot) -m_{\lfloor t/\varepsilon \rfloor} (0,\cdot,\cdot) \right\|_{L^1(\mathbb R^2)},\label{est:aa}
\end{align}
where we have estimated the first term on the right-hand side with the argument used in \eqref{est:termeini}. For the second term, we noticed that $\bar R[n](t)=\iint_{\mathbb R^2} \mathcal G[n](t,x,y)\,dx\,dy$ (and a similar formula for $R_{\lfloor t/\varepsilon \rfloor}[m]$), so that  \eqref{est:GG} can be used to estimate the term $\left|\bar R[n](t)-R_{\lfloor t/\varepsilon \rfloor}[m]\right|$.

Bringing together \eqref{est:termeini}, \eqref{est:GG} and \eqref{est:aa}, the estimate \eqref{eq:norm-n-m} becomes 
\begin{align}
    & \| n(t+\varepsilon,\cdot,\cdot) - m_{\lfloor (t+\varepsilon)/\varepsilon\rfloor}(0,\cdot,\cdot) \|_{L^1(\mathbb R^2)}\nonumber\\
    &\leq  \left\|\, n(t,\cdot,\cdot) -m_{\lfloor t/\varepsilon \rfloor} (0,\cdot,\cdot)\right\|_{L^1(\mathbb R^2)}+C\varepsilon\left\|\, n(t,\cdot,\cdot) -m_{\lfloor t/\varepsilon \rfloor} (0,\cdot,\cdot)\right\|_{L^1(\mathbb R^2)}+Ce^{Ct}\varepsilon^{3/2}.\nonumber
\end{align}
We now consider a fixed time $t\geq 0$. We have
\begin{align*}
    \left\|\, n(t,\cdot,\cdot) -m_{\lfloor t/\varepsilon \rfloor} (0,\cdot,\cdot)\right\|_{L^1(\mathbb R^2)}&\leq \left\|\, n(\lfloor t/\varepsilon \rfloor\varepsilon,\cdot,\cdot) -m_{\lfloor t/\varepsilon \rfloor} (0,\cdot,\cdot)\right\|_{L^1(\mathbb R^2)}+C\varepsilon\\
    &\leq \max_{k\leq 1/\varepsilon} \left\|\, n(k\varepsilon ,\cdot,\cdot) -m_{k} (0,\cdot,\cdot)\right\|_{L^1(\mathbb R^2)}+C\varepsilon,
\end{align*}
while an induction shows
\begin{align*}
    \left\|\, n(k\varepsilon ,\cdot,\cdot) -m_{k} (0,\cdot,\cdot)\right\|_{L^1(\mathbb R^2)}\leq y(k\varepsilon),
\end{align*}
where $y'(s)=y(s)+C\sqrt\varepsilon e^{Ct}$ and $y(0)=0$, that is $y(s)=C\sqrt\varepsilon e^{ct}\left(e^s-1\right)$. Then, 
\begin{align*}
    \left\|\, n(t,\cdot,\cdot) -m_{\lfloor t/\varepsilon \rfloor} (0,\cdot,\cdot)\right\|_{L^1(\mathbb R^2)}
    &\leq C\sqrt\varepsilon e^{ct}\left(e^t-1\right)+C\varepsilon,
\end{align*}
and $\left\|\, n(t,\cdot,\cdot) -m_{\lfloor t/\varepsilon \rfloor} (0,\cdot,\cdot)\right\|_{L^1(\mathbb R^2)}\to 0$ as $\varepsilon\to 0$, which concludes the proof.

\section{Approximation of the population by a normal distribution }
\label{sec:Appendixmacro}
We want to briefly explain why we expect solutions of \eqref{model:n} to be well approximated by a normal distribution in the asymptotic limit proposed in Section~\ref{sec:macromain}. More specifically, we believe the population $(x,y)\mapsto n(t,x,y)$ is close to a normal distribution with variance $\sigma^2(\bar \sigma_x,\bar \sigma_y)$ when $\sigma>0$ is small. This approximation has been used for simpler models in e.g. (\cite{patout2023cauchy,raoul2017macroscopic}), and is related to the assumption that a population is normally distributed with a constant phenotypic variance (\emph{variance at linkage equilibrium}), which is a classical approach in population genetics (\cite{kirkpatrick1997evolution}). 

To explain this idea, assume the population is initially concentrated around $(X(t),Y(t))$. Then, for $t\geq 0$ small and $(x,y)$ in a neighborhood of $(X(t),Y(t)$, that is when $\left|(x,y)-(X(t),Y(t))\right|\leq C\sigma$, the equation \eqref{model:n} can be roughly approximated as follows:
\begin{align}
    &\partial_t n(t,x,y)   \sim - \bar a(t,X(t),Y(t)) n(t, x, y)  +\bar a(t,X(t),Y(t))\nonumber\\
    &\qquad  \iint_{\mathbb R^2}\iint_{\mathbb R^2}  \Gamma_{\sigma^2\bar \sigma_x^2}\left(x-\frac{x_{1}+x_{2}}{2}\right)\Gamma_{\sigma^2\bar \sigma_y^2}\left(y-\frac{y_{1}+y_{2}}{2}\right)  n(t,x_1, y_1)n(t, x_2,y_2)\,dx_1\,dy_1\,\,dx_2\,dy_2.\label{eq:rough}
\end{align}
The dynamics of \eqref{eq:rough} is then dominated on a time scale of order $1$ (more precisely for $0<t\ll 1/\sigma$) by the infinitesimal reproduction operator, defined for $f\in\mathcal P_2(\mathbb R^2)$ by:
\[\bar T[f](\tilde x)=\iint_{\mathbb R^2} \Gamma_{\bar \sigma_x^2}\left(\tilde x-\frac{\tilde x_{1}+\tilde x_{2}}{2}\right)\Gamma_{\bar\sigma_y^2}\left(\tilde y-\frac{\tilde y_{1}+\tilde y_{2}}{2}\right)  f(\tilde x_1, \tilde y_1)f( \tilde x_2,\tilde y_2)\,d\tilde x_1\,d\tilde x_2\,d\tilde y_1\,d\tilde y_2.\]
This reproduction operator does not change the population size (it is coherent with the model \eqref{model:n} where $\iint_{\mathbb R^2}n(t,x,y)\,dx\,dy\equiv 1$ for $t\geq 0$) and it also does not affect the mean phenotypic traits of the population. This operator however contracts solutions exponentially fast  to the distribution of the population to the Gaussian function (this can be made rigorous in the sense of the Wasserstein distance, see \cite{raoul2017macroscopic}):
\begin{equation*}
    \tilde n(t,x,y)\sim \Gamma_{2\sigma^2\bar \sigma_x^2}\left(x-X(t)\right)\Gamma_{2\sigma^2\bar \sigma_y^2}\left(y-Y(t)\right).
\end{equation*}
In particular, this keeps the population density concentrated around its mean phenotypic traits $(X(t),Y(t))$, so that when $\sigma>0$ is small, the relaxation toward a normal distribution centered around $(X(t),Y(t))$ is propagated in time and \eqref{eq:Maxwellian} holds for all times $t\geq 0$.
Note that the rough approximation \eqref{eq:rough} is introduced to explain why $n$ is normally distributed, but that a more precise approach would be necessary to obtain a rigorous asymptotic result, as we describe in Section~\ref{sec:macromain}. 

\section{Heuristic derivation of the macroscopic model \eqref{eq:dXY}}
\label{subsec:macrolimit}
In this section, we simplify the expressions \eqref{eq:dX0} and \eqref{eq:dY0} under the approximation \eqref{eq:Maxwellian}. We have
\begin{align*}
    &\iint_{\mathbb R^2} x\bar a(t,x,y) n(t,x,y)\,dx\,dy\sim\iint_{\mathbb R^2} x\bar a(t,x,y) \Gamma_{2\sigma_x^2}(x-X(t))\Gamma_{2\sigma_y^2}(y-Y(t))\,dx\,dy\\
    &\quad =\iint_{\mathbb R^2} x\Big(\bar a(t,X(t),Y(t))+(x-X(t))\partial_x \bar a(t,X(t),Y(t))+(y-Y(t))\partial_y \bar a(t,X(t),Y(t))+\frac{(x-X(t))^2}2\partial_{xx}\bar a(t,X(t),Y(t))\\
    &\qquad +\frac{(y-Y(t))^2}2\partial_{yy}\bar a(t,X(t),Y(t))+(x-X(t))(y-Y(t))\partial_{xy}\bar a(t,X(t),Y(t))+\mathcal O(\|(x-X(t),y-Y(t))\|^3) \Big)\\
    &\qquad \Gamma_{2\sigma_x^2}(x-X(t))\Gamma_{2\sigma_y^2}(y-Y(t))\,dx\,dy\\
    &\quad = X(t)\bar a(t,X(t),Y(t))+2\sigma_x^2\partial_x \bar a(t,X(t),Y(t))+\sigma_x^2 X(t)\partial_{xx}\bar a(t,X(t),Y(t))+\sigma_y^2 X(t)\partial_{yy}\bar a(t,X(t),Y(t))+\mathcal O\left(\|(\sigma_x,\sigma_y)\|^3\right).
\end{align*}
Similarly,
\begin{align*}
    &\iint_{\mathbb R^2} y\bar a(t,x,y) n(t,x,y)\,dx\,dy\sim\iint_{\mathbb R^2} y\bar a(t,x,y) \Gamma_{2\sigma_x^2}(x-X(t))\Gamma_{2\sigma_y^2}(y-Y(t))\,dx\,dy\\
    &\quad = Y(t)\bar a(t,X(t),Y(t))+2\sigma_y^2\partial_y \bar a(t,X(t),Y(t))+\sigma_x^2 Y(t)\partial_{xx}\bar a(t,X(t),Y(t))+\sigma_y^2 Y(t)\partial_{yy}\bar a(t,X(t),Y(t))+\mathcal O\left(\|(\sigma_x,\sigma_y)\|^3\right).
\end{align*}
We have
\begin{align*}
    &\iint_{\mathbb R^2} \bar a(t,x,y) n(t,x,y) \,dx\,dy\sim \iint_{\mathbb R^2} \bar a(t,x,y) \Gamma_{2\sigma_x^2}(x-X(t))\Gamma_{2\sigma_y^2}(y-Y(t)) \,dx\,dy\\
    &\quad =\iint_{\mathbb R^2} \Big(\bar a(t,X(t),Y(t))+(x-X(t))\partial_x \bar a(t,X(t),Y(t))+(y-Y(t))\partial_y \bar a(t,X(t),Y(t))+\frac{(x-X(t))^2}2\partial_{xx}\bar a(t,X(t),Y(t))\\
    &\qquad +\frac{(y-Y(t))^2}2\partial_{yy}\bar a(t,X(t),Y(t))+(x-X(t))(y-Y(t))\partial_{xy}\bar a(t,X(t),Y(t))+\mathcal O(\|(x-X(t),y-Y(t))\|^3) \Big)\\ 
    &\qquad \Gamma_{2\sigma_x^2}(x-X(t))\Gamma_{2\sigma_y^2}(y-Y(t)) \,dx\,dy\\
    &\quad =\bar a(t,X(t),Y(t))+\sigma_x^2\partial_{xx}\bar a(t,X(t),Y(t))+\sigma_y^2\partial_{yy}\bar a(t,X(t),Y(t))+\mathcal O\left(\|(\sigma_x,\sigma_y)\|^3\right).
\end{align*}
and, if we use the shortened notation $\bar\rho:=\bar\rho[\delta_{x=X(t)}\delta_{y=Y(t)}](t,X(t),Y(t),X(t),Y(t))$ and the derivatives of this quantity, we get
\begin{align*}
    &\iint_{\mathbb R^2}\iint_{\mathbb R^2}  x_1^+n(t,x_1, y_1)n(t, x_2,y_2)\bar\rho[n(t,\cdot,\cdot)](t,x_1,y_1,x_2,y_2) \,dx_1\,dy_1\,dx_2\,dy_2\\
    &\quad \sim\iint_{\mathbb R^2}\iint_{\mathbb R^2}  x_1^+\bar\rho[\Gamma_{2\sigma_x^2}(\cdot-X(t))\Gamma_{2\sigma_y^2}(\cdot-Y(t))](t,x_1,y_1,x_2,y_2) \\
    &\phantom{\quad \sim\iint_{\mathbb R^2}\iint_{\mathbb R^2}}\Gamma_{2\sigma_x^2}(x_1-X(t))\Gamma_{2\sigma_y^2}(y_1-Y(t))\Gamma_{2\sigma_x^2}(x_2-X(t))\Gamma_{2\sigma_y^2}(y_2-Y(t))\,dx_1\,dx_2\,dy_1\,dy_2\\
    &\quad =\iint_{\mathbb R^2}\iint_{\mathbb R^2}  x_1^+\Big( \bar\rho+ (x_1-X(t))\partial_{x_1}{\bar\rho}+ (x_2-X(t))\partial_{x_2}\bar\rho+ (y_1-Y(t))\partial_{y_1}{\bar\rho}+ (y_2-Y(t))\partial_{y_2}\bar\rho\\
    &\qquad +\frac{(x_1-X(t))^2}2\partial_{x_1x_1}\bar\rho+ \frac{(x_2-X(t))^2}2\partial_{x_2x_2}\bar\rho+ \frac{(y_1-Y(t))^2}2\partial_{y_1y1}\bar\rho+ \frac{(y_2-Y(t))^2}2\partial_{y_2y_2}\bar\rho\\
    &\qquad+(x_1-X(t))(x_2-X(t))\partial_{x_1x_2}\bar\rho+(y_1-Y(t))(y_2-Y(t))\partial_{y_1y_2}\bar\rho\\
    &\qquad +(x_1-X(t))(y_1-Y(t))\partial_{x_1y_1}\bar\rho+(x_2-X(t))(y_2-Y(t))\partial_{x_2y_2}\bar\rho\\
    &\qquad +(x_1-X(t))(y_2-Y(t))\partial_{x_1y_2}\bar\rho+(x_2-X(t))(y_1-Y(t))\partial_{x_2y_1}\bar\rho+\mathcal O(\|(x-X(t),y-Y(t))\|^3) \Big)\\
    &\qquad \Gamma_{2\sigma_x^2}(x_1-X(t))\Gamma_{2\sigma_y^2}(y_1-Y(t))\Gamma_{2\sigma_x^2}(x_2-X(t))\Gamma_{2\sigma_y^2}(y_2-Y(t))\,dx_1\,dy_1\,dx_2\,dy_2\\
    &\quad = \bar\rho X(t)+ 2\sigma_x^2\partial_{x_1}\bar\rho+\sigma_x^2 X(t)\partial_{x_1x_1}\bar\rho+ \sigma_x^2X(t)\partial_{x_2x_2}\bar\rho+ \sigma_y^2 X(t)\partial_{y_1y1}\bar\rho+ \sigma_y^2 X(t)\partial_{y_2y_2}\bar\rho+\mathcal O\left(\|(\sigma_x,\sigma_y)\|^3\right).
\end{align*}
We consider now the last factor of \eqref{eq:dX0}:
\begin{align*}
    &\iint_{\mathbb R^2}\iint_{\mathbb R^2}  \frac{x_{1}+x_{2}}{2}x_1^+\bar\rho[n(t,\cdot,\cdot)](t,x_1,y_1,x_2,y_2)n(t,x_1, y_1)n(t, x_2,y_2)  \,dx_1\,dy_1\,dx_2\,dy_2\\
    &\quad\sim \frac 12 \iint_{\mathbb R^2}\iint_{\mathbb R^2}  (x_{1}^2+x_{2}x_1)\bar\rho[\Gamma_{2\sigma_x^2}(\cdot-X(t))\Gamma_{2\sigma_y^2}(\cdot-Y(t))](t,x_1,y_1,x_2,y_2)\\
    &\phantom{\quad \sim\frac 12\iint_{\mathbb R^2}\iint_{\mathbb R^2}}\Gamma_{2\sigma^2}(x_1-X(t))\Gamma_{2\sigma^2}(y_1-Y(t))\Gamma_{2\sigma^2}(x_2-X(t))\Gamma_{2\sigma^2}(y_2-Y(t)) \,dx_1\,dy_1\,dx_2\,dy_2\\
    &\quad = (X(t)^2+\sigma_x^2)\bar\rho+3\sigma_x^2X(t)\partial_{x_1}\bar\rho+ \sigma_x^2 X(t)\partial_{x_2}\bar\rho\\
    &\qquad +\sigma_x^2 X^2\partial_{x_1x_1}\bar\rho+ \sigma_x^2 X^2\partial_{x_2x_2}\bar\rho+ \sigma_y^2 X^2\partial_{y_1y1}\bar\rho+ \sigma_y^2 X^2\partial_{y_2y_2}\bar\rho+\mathcal O\left(\|(\sigma_x,\sigma_y)\|^3\right),
\end{align*}
as we prove in the next calculations:
\begin{align*}
    &\iint_{\mathbb R^2}\iint_{\mathbb R^2}  x_{1}^2\bar\rho[\Gamma_{2\sigma_x^2}(\cdot-X(t))\Gamma_{2\sigma_y^2}(\cdot-Y(t))](t,x_1,y_1,x_2,y_2) \\
    &\phantom{\iint_{\mathbb R^2}\iint_{\mathbb R^2}  }\Gamma_{2\sigma^2}(x_1-X(t))\Gamma_{2\sigma^2}(y_1-Y(t))\Gamma_{2\sigma^2}(x_2-X(t))\Gamma_{2\sigma^2}(y_2-Y(t))\,dx_1\,dy_1\,dx_2\,dy_2\\
    &\quad =\iint_{\mathbb R^2}\iint_{\mathbb R^2}  x_{1}^2\Big( \bar\rho+ (x_1-X(t))\partial_{x_1}\bar\rho+ (x_2-X(t))\partial_{x_2}\bar\rho+ (y_1-Y(t))\partial_{y_1}\bar\rho+ (y_2-Y(t))\partial_{y_2}\bar\rho\\
    &\qquad +\frac{(x_1-X(t))^2}2\partial_{x_1x_1}\bar\rho+ \frac{(x_2-X(t))^2}2\partial_{x_2x_2}\bar\rho+ \frac{(y_1-Y(t))^2}2\partial_{y_1y1}\bar\rho+ \frac{(y_2-Y(t))^2}2\partial_{y_2y_2}\bar\rho\\
    &\qquad+(x_1-X(t))(x_2-X(t))\partial_{x_1x_2}\bar\rho+(y_1-Y(t))(y_2-Y(t))\partial_{y_1y_2}\bar\rho\\
    &\qquad +(x_1-X(t))(y_1-Y(t))\partial_{x_1y_1}\bar\rho+(x_2-X(t))(y_2-Y(t))\partial_{x_2y_2}\bar\rho\\
    &\qquad +(x_1-X(t))(y_2-Y(t))\partial_{x_1y_2}\bar\rho+(x_2-X(t))(y_1-Y(t))\partial_{x_2y_1}\bar\rho+\mathcal O(\|(x-X(t),y-Y(t))\|^3) \Big)\\
    &\qquad \Gamma_{2\sigma_x^2}(x_1-X(t))\Gamma_{2\sigma_y^2}(y_1-Y(t))\Gamma_{2\sigma_x^2}(x_2-X(t))\Gamma_{2\sigma_y^2}(y_2-Y(t))\,dx_1\,dy_1\,dx_2\,dy_2\\
    &\quad =(X(t)^2+2\sigma_x^2)\bar\rho+ 4\sigma_x^2X(t)\partial_{x_1}\bar\rho+\sigma_x^2 X(t)^2\partial_{x_1x_1}\bar\rho+ \sigma_x^2 X(t)^2\partial_{x_2x_2}\bar\rho+ \sigma_y^2 X(t)^2\partial_{y_1y1}\bar\rho\\
    &\qquad + \sigma_y^2 X(t)^2\partial_{y_2y_2}\bar\rho +\mathcal O\left(\|(\sigma_x,\sigma_y)\|^3\right), 
\end{align*}
where we noticed that 
\begin{align*}
&\partial_{x_1x_1}\bar\rho \int_{\mathbb R} x_1^2\frac{(x_1-X(t))^2}2\Gamma_{2\sigma_x^2}(x_1-X(t))\,dx_1\\
&\quad =\partial_{x_1x_1}\bar\rho\left(\sigma_x^2+\frac 12 \int_{\mathbb R} x^4\Gamma_{2\sigma_x^2}(x)\,dx\right)=\sigma_x^2\partial_{x_1x_1}\bar\rho+\mathcal O\left(\|(\sigma_x,\sigma_y)\|^3\right).
\end{align*}
Also,
\begin{align*}
    &\iint_{\mathbb R^2}\iint_{\mathbb R^2}  x_{1}x_2\bar\rho[\Gamma_{2\sigma_x^2}(\cdot-X(t))\Gamma_{2\sigma_y^2}(\cdot-Y(t))](t,x_1,y_1,x_2,y_2) \\
    &\phantom{\iint_{\mathbb R^2}\iint_{\mathbb R^2}  }\Gamma_{2\sigma^2}(x_1-X(t))\Gamma_{2\sigma^2}(y_1-Y(t))\Gamma_{2\sigma^2}(x_2-X(t))\Gamma_{2\sigma^2}(y_2-Y(t))\,dx_1\,dy_1\,dx_2\,dy_2\\
    &\quad =\iint_{\mathbb R^2}\iint_{\mathbb R^2}  x_{1}x_2\Big( \bar\rho+ (x_1-X(t))\partial_{x_1}\bar\rho+ (x_2-X(t))\partial_{x_2}\bar\rho+ (y_1-Y(t))\partial_{y_1}\bar\rho+ (y_2-Y(t))\partial_{y_2}\bar\rho\\
    &\qquad +\frac{(x_1-X(t))^2}2\partial_{x_1x_1}\bar\rho+ \frac{(x_2-X(t))^2}2\partial_{x_2x_2}\bar\rho+ \frac{(y_1-Y(t))^2}2\partial_{y_1y1}\bar\rho+ \frac{(y_2-Y(t))^2}2\partial_{y_2y_2}\bar\rho\\
    &\qquad+(x_1-X(t))(x_2-X(t))\partial_{x_1x_2}\bar\rho+(y_1-Y(t))(y_2-Y(t))\partial_{y_1y_2}\bar\rho\\
    &\qquad +(x_1-X(t))(y_1-Y(t))\partial_{x_1y_1}\bar\rho+(x_2-X(t))(y_2-Y(t))\partial_{x_2y_2}\bar\rho\\
    &\qquad +(x_1-X(t))(y_2-Y(t))\partial_{x_1y_2}\bar\rho+(x_2-X(t))(y_1-Y(t))\partial_{x_2y_1}\bar\rho+\mathcal O(\|(x-X(t),y-Y(t))\|^3) \Big)\\
    &\qquad \Gamma_{2\sigma^2}(x_1-X(t))\Gamma_{2\sigma^2}(y_1-Y(t))\Gamma_{2\sigma^2}(x_2-X(t))\Gamma_{2\sigma^2}(y_2-Y(t))\,dx_1\,dy_1\,dx_2\,dy_2\\
    &\quad =X(t)^2\bar\rho+ 2\sigma_x^2 X(t)\partial_{x_1}\bar\rho+ 2\sigma_x^2 X(t)\partial_{x_2}\bar\rho\\
    &\qquad +\sigma_x^2 X(t)^2\partial_{x_1x_1}\bar\rho+ \sigma_x^2 X(t)^2\partial_{x_2x_2}\bar\rho+ \sigma_y^2 X(t)^2\partial_{y_1y1}\bar\rho+ \sigma_y^2 X(t)^2\partial_{y_2y_2}\bar\rho+\mathcal O\left(\|(\sigma_x,\sigma_y)\|^3\right). 
\end{align*}
And we can finally consider now the last factor of \eqref{eq:dY0}:
\begin{align*}
    &\iint_{\mathbb R^2}\iint_{\mathbb R^2}  \frac{y_{1}+y_{2}}{2}x_1^+n(t,x_1, y_1)n(t, x_2,y_2) \bar\rho[n(t,\cdot,\cdot)](t,x_1,y_1,x_2,y_2) \,dx_1\,dy_1\,dx_2\,dy_2\\
    &\quad\sim \frac 12 \iint_{\mathbb R^2}\iint_{\mathbb R^2}  (y_1x_{1}+y_2x_1)\bar\rho[\Gamma_{2\sigma_x^2}(\cdot-X(t))\Gamma_{2\sigma_y^2}(\cdot-Y(t))](t,x_1,y_1,x_2,y_2)\\
    &\phantom{\quad\sim \frac 12 \iint_{\mathbb R^2}\iint_{\mathbb R^2}}\Gamma_{2\sigma^2}(x_1-X(t))\Gamma_{2\sigma^2}(y_1-Y(t)) \Gamma_{2\sigma^2}(x_2-X(t))\Gamma_{2\sigma^2}(y_2-Y(t)) \,dx_1\,dy_1\,dx_2\,dy_2\\
    &\quad =  X(t)Y(t)\bar\rho+ 2\sigma_x^2Y(t)\partial_{x_1}\bar\rho+ \sigma_y^2 X(t)\partial_{y_1}\bar\rho+\sigma_y^2 X(t)\partial_{y_2}\bar\rho\\
    &\qquad +\sigma_x^2X(t)Y(t)\partial_{x_1x_1}\bar\rho+ \sigma_x^2X(t)Y(t)\partial_{x_2x_2}\bar\rho+ \sigma_y^2X(t)Y(t)\partial_{y_1y1}\bar\rho+ \sigma_y^2X(t)Y(t)\partial_{y_2y_2}\bar\rho+\mathcal O\left(\|(\sigma_x,\sigma_y)\|^3\right),
\end{align*}
since
\begin{align*}
    &\iint_{\mathbb R^2}\iint_{\mathbb R^2}  y_1x_1\bar\rho[\Gamma_{2\sigma_x^2}(\cdot-X(t))\Gamma_{2\sigma_y^2}(\cdot-Y(t))](t,x_1,y_1,x_2,y_2)\\
    &\phantom{\iint_{\mathbb R^2}\iint_{\mathbb R^2}  }\Gamma_{2\sigma^2}(x_1-X(t))\Gamma_{2\sigma^2}(y_1-Y(t))\Gamma_{2\sigma^2}(x_2-X(t))\Gamma_{2\sigma^2}(y_2-Y(t)) \,dx_1\,dx_2\,dy_1\,dy_2\\
    &\quad =\iint_{\mathbb R^2}\iint_{\mathbb R^2}  x_1y_1\Big( \bar\rho+ (x_1-X(t))\partial_{x_1}\bar\rho+ (x_2-X(t))\partial_{x_2}\bar\rho+ (y_1-Y(t))\partial_{y_1}\bar\rho+ (y_2-Y(t))\partial_{y_2}\bar\rho\\
    &\qquad +\frac{(x_1-X(t))^2}2\partial_{x_1x_1}\bar\rho+ \frac{(x_2-X(t))^2}2\partial_{x_2x_2}\bar\rho+ \frac{(y_1-Y(t))^2}2\partial_{y_1y1}\bar\rho+ \frac{(y_2-Y(t))^2}2\partial_{y_2y_2}\bar\rho\\
    &\qquad+(x_1-X(t))(x_2-X(t))\partial_{x_1x_2}\bar\rho+(y_1-Y(t))(y_2-Y(t))\partial_{y_1y_2}\bar\rho\\
    &\qquad +(x_1-X(t))(y_1-Y(t))\partial_{x_1y_1}\bar\rho+(x_2-X(t))(y_2-Y(t))\partial_{x_2y_2}\bar\rho\\
    &\qquad +(x_1-X(t))(y_2-Y(t))\partial_{x_1y_2}\bar\rho+(x_2-X(t))(y_1-Y(t))\partial_{x_2y_1}\bar\rho+\mathcal O(\|(x-X(t),y-Y(t))\|^3) \Big)\\
    &\qquad \Gamma_{2\sigma^2}(x_1-X(t))\Gamma_{2\sigma^2}(y_1-Y(t))\Gamma_{2\sigma^2}(x_2-X(t))\Gamma_{2\sigma^2}(y_2-Y(t))\,dx_1\,dy_1\,dx_2\,dy_2\\
    &\quad = X(t)Y(t)\bar\rho+ 2\sigma_x^2Y(t)\partial_{x_1}\bar\rho+ 2\sigma_y^2 X(t)\partial_{y_1}\bar\rho+\sigma_x^2X(t)Y(t)\partial_{x_1x_1}\bar\rho\\
    &\qquad + \sigma_x^2X(t)Y(t)\partial_{x_2x_2}\bar\rho+ \sigma_y^2X(t)Y(t)\partial_{y_1y1}\bar\rho+ \sigma_y^2X(t)Y(t)\partial_{y_2y_2}\bar\rho+\mathcal O\left(\|(\sigma_x,\sigma_y)\|^3\right),
\end{align*}
and the similarly when $y_2$ replaces $y_1$.

Bringing all these expressions together, we obtain
\begin{align*}
    &\frac d{dt}X(t)  = - \left(X(t)a(t,X(t),Y(t))+2\sigma_x^2\partial_x \bar a(t,X(t),Y(t))+\sigma_x^2 X(t)\partial_{xx}\bar a(t,X(t),Y(t))+\sigma_y^2 X(t)\partial_{yy}\bar a(t,X(t),Y(t))\right) \\
    &\qquad + \frac{\bar a(t,X(t),Y(t))+\sigma_x^2\partial_{xx}\bar a(t,X(t),Y(t))+\sigma_y^2\partial_{yy}\bar a(t,X(t),Y(t))}{\bar\rho X(t)+ 2\sigma_x^2\partial_{x_1}\bar\rho+\sigma_x^2 X(t)\partial_{x_1x_1}\bar\rho+ \sigma_x^2X(t)\partial_{x_2x_2}\bar\rho+ \sigma_y^2 X(t)\partial_{y_1y1}\bar\rho+ \sigma_y^2 X(t)\partial_{y_2y_2}\bar\rho} \\
    &\qquad \Big((X(t)^2+\sigma_x^2)\bar\rho+3\sigma_x^2X(t)\partial_{x_1}\bar\rho+ \sigma_x^2 X(t)\partial_{x_2}\bar\rho\\
    &\qquad +\sigma_x^2 X^2\partial_{x_1x_1}\bar\rho+ \sigma_x^2 X^2\partial_{x_2x_2}\bar\rho+ \sigma_y^2 X^2\partial_{y_1y1}\bar\rho+ \sigma_y^2 X^2\partial_{y_2y_2}\bar\rho\Big)+\mathcal O\left(\|(\sigma_x,\sigma_y)\|^3\right)\\
    &\quad =- X(t)\bar a(t,X(t),Y(t))-2\sigma_x^2\partial_x \bar a(t,X(t),Y(t))-\sigma_x^2 X(t)\partial_{xx}\bar a(t,X(t),Y(t))-\sigma_y^2 X(t)\partial_{yy}\bar a(t,X(t),Y(t)) \\
    &\qquad + \bar a(t,X(t),Y(t)) X(t)\frac{1+\sigma_x^2\frac{\partial_{xx}\bar a(t,X(t),Y(t))}{\bar a(t,X(t),Y(t))}+\sigma_y^2\frac{\partial_{yy}\bar a(t,X(t),Y(t))}{\bar a(t,X(t),Y(t))}}{1+ 2\sigma_x^2\frac{\partial_{x_1}\bar\rho}{\bar\rho X(t)}+\sigma_x^2 \frac{\partial_{x_1x_1}\bar\rho}{\bar\rho }+ \sigma_x^2\frac{\partial_{x_2x_2}\bar\rho}{\bar\rho }+ \sigma_y^2 \frac{\partial_{y_1y1}\bar\rho}{\bar\rho}+ \sigma_y^2 \frac{\partial_{y_2y_2}\bar\rho}{\bar\rho}} \\
    &\qquad \Big(1+\frac{\sigma_x^2}{X(t)^2}+3\sigma_x^2\frac{\partial_{x_1}\bar\rho}{X(t)\bar\rho}+ \sigma_x^2 \frac{\partial_{x_2}\bar\rho}{X(t)\bar\rho}\\
    &\qquad +\sigma_x^2 \frac{\partial_{x_1x_1}\bar\rho}{\bar\rho}+ \sigma_x^2 \frac{\partial_{x_2x_2}\bar\rho}{\bar\rho}+ \sigma_y^2 \frac{\partial_{y_1y1}\bar\rho}{\bar\rho}+ \sigma_y^2 \frac{\partial_{y_2y_2}\bar\rho}{\bar\rho}\Big)+\mathcal O\left(\|(\sigma_x,\sigma_y)\|^3\right)\\
    &\quad =- X(t)\bar a(t,X(t),Y(t))-2\sigma_x^2\partial_x \bar a(t,X(t),Y(t))-\sigma_x^2 X(t)\partial_{xx}\bar a(t,X(t),Y(t))-\sigma_y^2 X(t)\partial_{yy}\bar a(t,X(t),Y(t)) \\
    &\qquad + \bar a(t,X(t),Y(t)) X(t)\left(1+\sigma_x^2\frac{\partial_{xx}\bar a(t,X(t),Y(t))}{\bar a(t,X(t),Y(t))}+\sigma_y^2\frac{\partial_{yy}\bar a(t,X(t),Y(t))}{\bar a(t,X(t),Y(t))}\right)\\
    &\qquad \left(1- 2\sigma_x^2\frac{\partial_{x_1}\bar\rho}{\bar\rho X(t)}-\sigma_x^2 \frac{\partial_{x_1x_1}\bar\rho}{\bar\rho }- \sigma_x^2\frac{\partial_{x_2x_2}\bar\rho}{\bar\rho }- \sigma_y^2 \frac{\partial_{y_1y1}\bar\rho}{\bar\rho }- \sigma_y^2 \frac{\partial_{y_2y_2}\bar\rho}{\bar\rho }\right) \\
    &\qquad \Big(1+\frac{\sigma_x^2}{X(t)^2}+3\sigma_x^2\frac{\partial_{x_1}\bar\rho}{X(t)\bar\rho}+ \sigma_x^2 \frac{\partial_{x_2}\bar\rho}{X(t)\bar\rho}\\
    &\qquad +\sigma_x^2 \frac{\partial_{x_1x_1}\bar\rho}{\bar\rho}+ \sigma_x^2 \frac{\partial_{x_2x_2}\bar\rho}{\bar\rho}+ \sigma_y^2 X^2\frac{\partial_{y_1y1}\bar\rho}{\bar\rho}+ \sigma_y^2 X^2\frac{\partial_{y_2y_2}\bar\rho}{\bar\rho}\Big)+\mathcal O\left(\|(\sigma_x,\sigma_y)\|^3\right)\\
    &\quad =- X(t)\bar a(t,X(t),Y(t))-2\sigma_x^2\partial_x \bar a(t,X(t),Y(t))-\sigma_x^2 X(t)\partial_{xx}\bar a(t,X(t),Y(t))-\sigma_y^2 X(t)\partial_{yy}\bar a(t,X(t),Y(t)) \\
    &\qquad + \bar a(t,X(t),Y(t)) X(t)+\left(\sigma_x^2X(t)\partial_{xx}\bar a(t,X(t),Y(t))+\sigma_y^2X(t)\partial_{yy}\bar a(t,X(t),Y(t))\right)\\
    &\qquad +\Big(- 2\sigma_x^2\bar a(t,X(t),Y(t)) \frac{\partial_{x_1}\bar\rho}{\bar\rho }-\sigma_x^2 \bar a(t,X(t),Y(t)) X(t)\frac{\partial_{x_1x_1}\bar\rho}{\bar\rho }- \sigma_x^2\bar a(t,X(t),Y(t)) X(t)\frac{\partial_{x_2x_2}\bar\rho}{\bar\rho}\\
    &\qquad \phantom{+(sf}- \sigma_y^2 \bar a(t,X(t),Y(t)) X(t)\frac{\partial_{y_1y1}\bar\rho}{\bar\rho }- \sigma_y^2 \bar a(t,X(t),Y(t)) X(t)\frac{\partial_{y_2y_2}\bar\rho}{\bar\rho }\Big) \\
    &\qquad +\Big(\bar a(t,X(t),Y(t))\frac{\sigma_x^2}{X(t)}+3\sigma_x^2\bar a(t,X(t),Y(t)) \frac{\partial_{x_1}\bar\rho}{\bar\rho}+ \sigma_x^2 \bar a(t,X(t),Y(t))\frac{\partial_{x_2}\bar\rho}{\bar\rho}\\
    &\qquad \phantom{+(sf}+\sigma_x^2 \bar a(t,X(t),Y(t)) X(t)\frac{\partial_{x_1x_1}\bar\rho}{\bar\rho}+ \sigma_x^2\bar a(t,X(t),Y(t)) X(t) \frac{\partial_{x_2x_2}\bar\rho}{\bar\rho}\\
    &\qquad \phantom{+(sf}+ \sigma_y^2\bar a(t,X(t),Y(t)) X(t)\frac{\partial_{y_1y1}\bar\rho}{\bar\rho}+ \sigma_y^2\bar a(t,X(t),Y(t)) X(t)\frac{\partial_{y_2y_2}\bar\rho}{\bar\rho}\Big)+\mathcal O\left(\|(\sigma_x,\sigma_y)\|^3\right)\\
    &\quad =2\sigma_x^2\left[-\partial_x \bar a(t,X(t),Y(t))+\bar a(t,X(t),Y(t))\frac{\partial_{x_1}\bar\rho}{2\bar\rho} +\bar a(t,X(t),Y(t))\frac{1}{2X(t)}+\bar a(t,X(t),Y(t))\frac{\partial_{x_2}\bar\rho}{2\bar\rho}\right]+\mathcal O\left(\|(\sigma_x,\sigma_y)\|^3\right)\\
    &\quad =2\sigma_x^2\left[-\partial_x \bar a(t,X(t),Y(t))+\frac{\bar a(t,X(t),Y(t))}{2X(t)}+\bar a(t,X(t),Y(t))\frac{\partial_{x_1}\bar\rho}{\bar\rho}
    \right]+\mathcal O\left(\|(\sigma_x,\sigma_y)\|^3\right).
\end{align*}
where the last equality uses the symmetry of $\bar\rho$. Similarly for $Y(t)$,
\begin{align*}
    &\frac d{dt}Y(t)  = - \left(Y(t)\bar a(t,X(t),Y(t))+2\sigma_y^2\partial_y \bar a(t,X(t),Y(t))+\sigma_x^2 Y(t)\partial_{xx}\bar a(t,X(t),Y(t))+\sigma_y^2 Y(t)\partial_{yy}\bar a(t,X(t),Y(t))\right) \\
    &\qquad + \frac{\bar a(t,X(t),Y(t))+\sigma_x^2\partial_{xx}\bar a(t,X(t),Y(t))+\sigma_y^2\partial_{yy}\bar a(t,X(t),Y(t))}{\bar\rho X(t)+ 2\sigma_x^2\partial_{x_1}\bar\rho+\sigma_x^2 X(t)\partial_{x_1x_1}\bar\rho+ \sigma_x^2X(t)\partial_{x_2x_2}\bar\rho+ \sigma_y^2 X(t)\partial_{y_1y1}\bar\rho+ \sigma_y^2 X(t)\partial_{y_2y_2}\bar\rho} \\
    &\qquad \Big( X(t)Y(t)\bar\rho+ 2\sigma_x^2Y(t)\partial_{x_1}\bar\rho+ \sigma_y^2 X(t)\partial_{y_1}\bar\rho+\sigma_y^2 X(t)\partial_{y_2}\bar\rho\\
    &\qquad +\sigma_x^2X(t)Y(t)\partial_{x_1x_1}\bar\rho+ \sigma_x^2X(t)Y(t)\partial_{x_2x_2}\bar\rho+ \sigma_y^2X(t)Y(t)\partial_{y_1y1}\bar\rho+ \sigma_y^2X(t)Y(t)\partial_{y_2y_2}\bar\rho\Big)+\mathcal O\left(\|(\sigma_x,\sigma_y)\|^3\right)\\
    &\quad = - Y(t)\bar a(t,X(t),Y(t))-2\sigma_y^2\partial_y \bar a(t,X(t),Y(t))-\sigma_x^2 Y(t)\partial_{xx}\bar a(t,X(t),Y(t))-\sigma_y^2 Y(t)\partial_{yy}\bar a(t,X(t),Y(t))\\
    &\qquad + Y(t) \bar a(t,X(t),Y(t)) \left(1+\sigma_x^2\frac{\partial_{xx}\bar a(t,X(t),Y(t))}{\bar a(t,X(t),Y(t))}+\sigma_y^2\frac{\partial_{yy}\bar a(t,X(t),Y(t))}{\bar a(t,X(t),Y(t))}\right)\\
    &\qquad \left(1- 2\sigma_x^2\frac{\partial_{x_1}\bar\rho}{\bar\rho X(t)}-\sigma_x^2 \frac{\partial_{x_1x_1}\bar\rho}{\bar\rho}- \sigma_x^2\frac{\partial_{x_2x_2}\bar\rho}{\bar\rho}- \sigma_y^2 \frac{\partial_{y_1y1}\bar\rho}{\bar\rho}- \sigma_y^2 \frac{\partial_{y_2y_2}\bar\rho}{\bar\rho}\right) \\
    &\qquad \Big( 1+ 2\sigma_x^2\frac{\partial_{x_1}\bar\rho}{X(t)\bar\rho}+ \sigma_y^2 \frac{\partial_{y_1}\bar\rho}{Y(t)\bar\rho}+\sigma_y^2 \frac{\partial_{y_2}\bar\rho}{Y(t)\bar\rho}+\sigma_x^2\frac{\partial_{x_1x_1}\bar\rho}{\bar\rho}+ \sigma_x^2\frac{\partial_{x_2x_2}\bar\rho}{\bar\rho}+ \sigma_y^2\frac{\partial_{y_1y1}\bar\rho}{\bar\rho}+ \sigma_y^2\frac{\partial_{y_2y_2}\bar\rho}{\bar\rho}\Big)+\mathcal O\left(\|(\sigma_x,\sigma_y)\|^3\right)\\
    &\quad = - Y(t)\bar a(t,X(t),Y(t))-2\sigma_y^2\partial_y \bar a(t,X(t),Y(t))-\sigma_x^2 Y(t)\partial_{xx}\bar a(t,X(t),Y(t))-\sigma_y^2 Y(t)\partial_{yy}\bar a(t,X(t),Y(t))\\
    &\qquad + Y(t) \bar a(t,X(t),Y(t))+\sigma_x^2Y(t)\partial_{xx}\bar a(t,X(t),Y(t))+\sigma_y^2Y(t)\partial_{yy}\bar a(t,X(t),Y(t))\\
    &\qquad - 2\sigma_x^2Y(t) \bar a(t,X(t),Y(t))\frac{\partial_{x_1}\bar\rho}{\bar\rho X(t)}-\sigma_x^2 Y(t) \bar a(t,X(t),Y(t))\frac{\partial_{x_1x_1}\bar\rho}{\bar\rho}- \sigma_x^2Y(t) \bar a(t,X(t),Y(t))\frac{\partial_{x_2x_2}\bar\rho}{\bar\rho}\\
    &\qquad - \sigma_y^2 Y(t) \bar a(t,X(t),Y(t))\frac{\partial_{y_1y1}\bar\rho}{\bar\rho}- \sigma_y^2 Y(t) \bar a(t,X(t),Y(t))\frac{\partial_{y_2y_2}\bar\rho}{\bar\rho} \\
    &\qquad + 2\sigma_x^2Y(t) \bar a(t,X(t),Y(t))\frac{\partial_{x_1}\bar\rho}{X(t)\bar\rho}+ \sigma_y^2 \bar a(t,X(t),Y(t))\frac{\partial_{y_1}\bar\rho}{\bar\rho}+\sigma_y^2 \bar a(t,X(t),Y(t))\frac{\partial_{y_2}\bar\rho}{\bar\rho}\\
    &\qquad +\sigma_x^2Y(t) \bar a(t,X(t),Y(t))\frac{\partial_{x_1x_1}\bar\rho}{\bar\rho}+ \sigma_x^2Y(t) \bar a(t,X(t),Y(t))\frac{\partial_{x_2x_2}\bar\rho}{\bar\rho}+ \sigma_y^2Y(t) \bar a(t,X(t),Y(t))\frac{\partial_{y_1y1}\bar\rho}{\bar\rho}\\
    &\qquad + \sigma_y^2Y(t) \bar a(t,X(t),Y(t))\frac{\partial_{y_2y_2}\bar\rho}{\bar\rho}+\mathcal O\left(\|(\sigma_x,\sigma_y)\|^3\right)\\
    &\quad = 2\sigma_y^2\left[-\partial_y \bar a(t,X(t),Y(t)) + \bar a(t,X(t),Y(t))\frac{\partial_{y_1}\bar\rho}{2\bar\rho}+ \bar a(t,X(t),Y(t))\frac{\partial_{y_2}\bar\rho}{2\bar\rho}\right]+\mathcal O\left(\|(\sigma_x,\sigma_y)\|^3\right)\\
    &\quad = 2\sigma_y^2\left[-\partial_y \bar a(t,X(t),Y(t)) + \bar a(t,X(t),Y(t))\frac{\partial_{y_1}\bar\rho}{\bar\rho}\right]+\mathcal O\left(\|(\sigma_x,\sigma_y)\|^3\right).
\end{align*}
We therefore recover \eqref{eq:dXY}, noticing that $\frac{\partial_{x_1}\bar\rho}{\bar\rho}$ can be replaced in the equations above by $\frac{\partial_{x}\hat \rho[t,X(t),Y(t)](t,X(t),Y(t)}{\hat \rho[t,X(t),Y(t)](t,X(t),Y(t)}$, and similarly $\frac{\partial_{y_1}\bar\rho}{\bar\rho}$ can be replaced in the equations above by $\frac{\partial_{y}\hat \rho[t,X(t),Y(t)](t,X(t),Y(t)}{\hat\rho[t,X(t),Y(t)](t,X(t),Y(t)}$.

\end{appendices}

\printbibliography

\Addresses

\end{document}